\newif\ifready\readytrue
\newif\ifanon\anonfalse

\documentclass[11pt,letterpaper]{article}

\usepackage[utf8]{inputenc}
\usepackage[margin=1in]{geometry}
\usepackage{times}
\usepackage{amsmath,amssymb,amsfonts,amsthm}
\usepackage{mathtools}
\usepackage{graphicx}
\usepackage[colorlinks=true,allcolors=blue]{hyperref}
\usepackage[table,dvipsnames]{xcolor}
\usepackage{wrapfig}
\usepackage{tikz}
\usetikzlibrary{shapes, arrows, tikzmark, calc}
\usepackage{setspace}
\usepackage{nicefrac}
\usepackage[noend]{algpseudocode}
\usepackage[ruled,vlined,linesnumbered,algosection]{algorithm2e}
\usepackage[framemethod=tikz]{mdframed}
\usepackage{xspace, xparse}
\usepackage{pgfplots}
\usepackage{framed}
\usepackage{caption}
\usepackage{subcaption}
\usepackage{multicol}
\usepackage{changepage}
\usepackage[capitalise,noabbrev,nameinlink]{cleveref}
\usepackage[shortlabels,inline]{enumitem}
    \pgfplotsset{compat=1.5}
\usepackage{thmtools, thm-restate}
\usepackage{typed-checklist}
\usepackage{multirow}
\usepackage{booktabs}
\usepackage{makecell}
\usepackage{comment}
\usepackage{float}
\usepackage{url}
\usepackage{nicematrix}

\ifready
\else 
\usepackage{color-edits}
\addauthor[Felix]{fz}{cyan}
\fi

\usepackage[backref=true, natbib=true, backend=biber, style=alphabetic, sortcites=true, sorting=alphabeticlabel,
minbibnames=4, maxbibnames=999, mincitenames=999, maxcitenames=999, minalphanames=4, maxalphanames=4]{biblatex}
\DeclareSortingTemplate{alphabeticlabel}{
  \sort[final]{\field{labelalpha}} %
  \sort{
    \field{year}   %
  }
  \sort{
    \field{title}  %
  }
}
\AtBeginRefsection{\GenRefcontextData{sorting=ynt}}
\AtEveryCite{\localrefcontext[sorting=ynt]}

\definecolor{mydarkblue}{rgb}{0,0.08,0.45}
\hypersetup{ %
    colorlinks=true,
    linkcolor=mydarkblue,
    citecolor=mydarkblue,
    filecolor=mydarkblue,
    urlcolor=mydarkblue,
}

\newtheorem{theorem}{Theorem}[section]
\newtheorem{fact}[theorem]{Fact}
\newtheorem{assumption}[theorem]{Assumption}

\newtheorem{definition}[theorem]{Definition}
\newtheorem{lemma}[theorem]{Lemma}
\newtheorem{proposition}[theorem]{Proposition}

\newtheorem{corollary}[theorem]{Corollary}

\newtheorem{example}[theorem]{Example}
\newtheorem{remark}[theorem]{Remark}

\newcommand{\namedref}[2]{\hyperref[#2]{#1~\ref*{#2}}\xspace}

\usepackage{soul}

\NewDocumentEnvironment{pf}{o}
  {\IfNoValueTF{#1}{\begin{proof}}{\begin{proof}[Proof (#1)]}}
  {\IfNoValueTF{#1}{\end{proof}}{\end{proof}}}

\crefname{assumption}{assumption}{assumptions}
\Crefname{assumption}{Assumption}{Assumptions}
\crefname{step}{step}{steps}
\Crefname{step}{Step}{Steps}
\crefname{obstacle}{obstacle}{obstacles}
\Crefname{obstacle}{Obstacle}{Obstacles}
\crefname{change}{change}{changes}
\Crefname{change}{Change}{Changes}
\crefname{ineq}{inequality}{inequalities}
\Crefname{ineq}{Inequality}{Inequalities}

\newcommand{\storelines}{\xdef\rememberedlines{\number\value{AlgoLine}}}
\newcommand{\restorelines}{\setcounter{AlgoLine}{\rememberedlines}}

\newcommand{\Lap}{\ensuremath{{\textsf{Lap}}}\xspace}

\DeclareMathOperator*{\poly}{poly}

\DeclarePairedDelimiter{\abs}{|}{|}

  \newcommand{\eps}[0]{\ensuremath{\varepsilon}}
  \let\epsilon\eps

  \newcommand{\cD}{\ensuremath{{\mathcal D}}\xspace}

\mathchardef\hyph="2D

\newcommand{\alg}{\mathcal{A}}

\newcommand{\sset}{\subseteq}
\DeclarePairedDelimiter{\card}{\lvert}{\rvert}

\DeclarePairedDelimiter{\set}{\lbrace}{\rbrace}

\ifready
\else

\fi

\newcommand{\prob}{\mathbf{P}}
\newcommand{\expect}{\mathbb{E}}

\DeclareMathOperator{\Range}{Range}
\DeclareMathOperator{\OPT}{OPT}

\newcommand{\R}{\mathbb{R}}

\newcommand{\DP}{\ensuremath{\text{DP}}\xspace}

\newcommand{\mcal}{\mathcal}

\newcommand{\myparagraph}[1]{\paragraph{#1.}}

\newcommand{\eg}{\emph{e.g.}}
\newcommand{\ie}{\emph{i.e.}}

\addbibresource{bib2doi.bib}

\title{Continual Release of Densest Subgraphs:\\Privacy Amplification \& Sublinear Space via Subsampling}
\ifanon
\author{Anonymous Author(s)}
\else
\author{%
     \begin{tabular}{c}
        Felix Zhou\\
        Yale University\\
        \texttt{felix.zhou@yale.edu}
    \end{tabular}
}
\fi
\date{}

\begin{document}

\maketitle
\sloppy

\ifanon\else
\pagenumbering{gobble}
\fi

\begin{abstract}
    We study the sublinear space continual release model for edge-differentially private (DP) graph algorithms, with a focus on the densest subgraph problem (DSG) in the insertion-only setting. 
    Our main result is the first continual release DSG algorithm 
    that matches the additive error of the best static DP algorithms 
    and the space complexity of the best non-private streaming algorithms, 
    up to constants. 
    The key idea is a refined use of subsampling that simultaneously achieves privacy amplification and sparsification, 
    a connection not previously formalized in graph DP. 
    Via a simple black-box reduction to the static setting,
    we obtain both pure and approximate-DP algorithms with $O(\log n)$ additive error and $O(n\log n)$ space, 
    improving both accuracy and space complexity over the previous state of the art.
    Along the way,
    we introduce \emph{graph densification} in the graph DP setting,
    adding edges to trigger earlier subsampling,
    which removes the extra logarithmic factors in error and space incurred by prior work \cite{epasto2024power}. 
    We believe this simple idea maybe of independent interest.
\end{abstract}

\ifanon
\else
\newpage
\setcounter{tocdepth}{2}
\tableofcontents
\newpage
\cleardoublepage
\pagenumbering{arabic}
\fi

\section{Introduction}

Differential privacy~\cite{DMNS06} (DP) has become the gold standard for privacy-preserving data analysis.
While originating in numerical data,
there is growing interest in analyzing more structured data
such as those represented by graphs~\cite{dinitz2025differentially,AU19,blocki2022privately,dinitz2024tight,DLRSSY22,ELRS22,kasiviswanathan2013analyzing,kalemaj2023node,
LUZ24,mueller2022sok,NRS07,RS16,raskhodnikova2016differentially,Upadhyay13}.
We focus on \emph{edge}-differential privacy, 
which requires output distributions to be ``close'' for graphs differing in one edge (edge-neighboring), 
modeling scenarios such as large social networks where the sensitive connections between individuals must be protected.

Today's graphs are highly dynamic, 
with new connections between individuals forming every minute.
Online streaming algorithms process edge updates while maintaining accurate statistics after each update~\cite{chen2023sublinear,halldorsson2016streaming,cormode2018approximating,GS24}.
However,
these outputs may inadvertently leak sensitive information over time.
To mitigate this,
the \emph{continual release} model~\cite{DworkNPR10,CSS11} of differential privacy requires that the entire vector of \emph{all outputs} from an online streaming algorithm
satisfy DP.

At the same time,
modern graphs are massive
and cannot be stored into memory.
The well-studied non-private sublinear space streaming model~\cite{assadi2022semi,AG11,EHW16,Esfandiari2018,feigenbaum2005graph,FMU22,HuangP19,henzinger1998computing,
muthukrishnan2005data,mcgregor2014graph,mcgregor2015densest,mcgregor2018simple}
addresses this by processing updates using space sublinear in the input size, 
ideally sublinear in the number of vertices (streaming),
or, more often, sublinear in the number of edges and near-linear in the number of vertices (semi-streaming).

We study the sublinear space continual release model,
introduced by {Epasto, Liu, Mukherjee, and Zhou~\cite{epasto2024power}},
which asks that a continual release algorithm release accurate solutions after each edge update
while using sublinear space.
In particular,
we design algorithms for the \emph{densest subgraph} (DSG) problem in the insertion-only setting under this model
with improved approximation and space guarantees.

Our main tool is subsampling,
a well-studied technique in both the DP community~\cite{balle2018tight,steinke2022composition,wang2020subsampled,dong2022gaussian,zhu2019poisson,balle2020},
as well as the sublinear space streaming community~\cite{ahn2009graph,mcgregor2014graph,mcgregor2018simple,ahn2012analyzing,bar2002reductions}.
Running a continual release algorithm on a subsampled stream should intuitively reduce privacy loss (\emph{privacy amplification}) and space usage (\emph{sparsification}), since the algorithm sees fewer edges.
Provided the subsampling error is sufficiently small, 
this simple approach appears to simultaneously achieve two desiderata!
However, 
to the best of our knowledge, 
the connection between privacy amplification and sparsification via subsampling
has not been formally investigated in the context of private graph algorithms.

Although \cite{epasto2024power} employ subsampling to design space-efficient continual release DSG algorithms,
they do not achieve \emph{meaningful} privacy amplification.
That is,
their subsampling scheme does not improve the privacy-utility tradeoff of their algorithm.
Roughly speaking,
their algorithm incurs $\Omega(\log^2 n)$ additive error while using $\Omega(n\log^2 n)$ space, 
compared to $O(\log n)$ error for the best static DP DSG algorithms~\cite{DLL23,dinitz2024tight} 
and $O(n\log n)$ space for the best non-private streaming algorithms~\cite{mcgregor2015densest,EHW16}.
Thus, their approach suffers an $\Omega(\log n)$ overhead in both error and space usage,
which can be impractical for modern large-scale graphs.
Moreover,
this is not an artifact of their analysis,
as we construct a simple example where their algorithm necessarily suffers the extra $\Omega(\log n)$ factor in error and space
(see \Cref{ex:no-amplification} in \Cref{sec:technical-overview}).

We make two simple conceptual changes to the \cite{epasto2024power} algorithm:
\begin{multicols}{2}
\begin{enumerate}[nosep]
    \item \emph{densification} of the initial graph and \label[change]{change:initial-graph-densification}
    \item modification of their subsampling scheme. \label[change]{change:subsampling-scheme-variant}
\end{enumerate}
\end{multicols}
These changes unlock privacy amplification via subsampling on top of sparsification,
providing the first formal connection between the two objectives,
and allowing us to improve the privacy/utility tradeoffs.
Our techniques removes the $\Omega(\log n)$ overhead in both additive error and space complexity from \cite{epasto2024power},
leading to the first continual release DSG algorithm with $O(\log n)$ additive error
and $O(n\log n)$ space.
In addition,
these changes allow us to simplify some of the analysis in \cite{epasto2024power},
which needed to adapt, for example, the \cite{mcgregor2015densest,EHW16} approximation proof for subsampling,
due to the existence of additive error necessary to satisfy DP.

Interestingly,
both changes can be interpreted as densifying the graph,
meaning we increase the number of edges while preserving some properties of the graph.
\Cref{change:initial-graph-densification} explicitly densifies the graph
while \Cref{change:subsampling-scheme-variant} can be interpreted as implicitly densifying the graph
before subsampling more aggressively
(see \Cref{sec:technical-overview:error}).
We believe this notion of graph densification may be of independent interest for other DP graph algorithms.

\subsection{Our Contributions}
Our main contributions are simple insertion-only continual release densest subgraph (DSG) algorithms
with improved additive error and space complexity.
After each update,
our algorithms release a differentially private vertex subset $S$ after each edge update
that induces an $(\alpha, \zeta)$-approximate DSG of the current graph.
Here $\alpha$ is the multiplicative (relative) error
and $\zeta$ denotes the additive error.

\begin{theorem}[Informal; See \Cref{thm:simple-dsg-formal}]\label{thm:simple-dsg-informal}
    Let $\eta\in (0, 1)$.
    There are $(\eps, \delta)$-DP densest subgraph algorithms in the insertion-only continual release model
    that achieve:
    \begin{enumerate}[nosep]
        \item For $\delta=0$, a $\left( 2+\eta, O\left( \frac{\log(n)\log(\nicefrac1\eta)}{\eps\eta} \right) \right)$-approximation
        using $O\left( \frac{n\log(n)\log(\nicefrac1\eta)}{\eps\eta^2} \right)$ space.
        \item For $\delta>0$, a $\left( 1+\eta, O\left( \frac{\log(\nicefrac{n}\eta)\log(\nicefrac1\eta)}{\eps\eta} \right) \right)$-approximation
        using $O\left( \frac{n\log(\nicefrac{n}\eta)\log(\nicefrac1\eta)}{\eps\eta^2} \right)$ space
        in the regime $\delta = \poly(\nicefrac1n)$.
    \end{enumerate}
\end{theorem}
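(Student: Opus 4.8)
The plan is to reduce the continual-release problem to the \emph{static} edge-DP densest subgraph (DSG) problem in a black-box way, by running a static DP DSG algorithm on a densified and subsampled version of the stream. Concretely: (i) prepend a \emph{densification gadget} --- a fixed set of edges (\eg, a disjoint clique, or $\Theta(\log(n)/(\eps\eta))$ Hamiltonian cycles) raising the base density perceived by the algorithm to $\rho_{\min}=\Theta(\log(n)/(\eps\eta))$ while perturbing the true DSG value by only an additive $O(\rho_{\min})$; (ii) maintain a private, monotone estimate $\hat\rho$ of the current DSG value (insertion-only, so $\hat\rho$ only grows), inducing geometric levels $[(1+\eta)^j,(1+\eta)^{j+1})$ for $j$ from $\log_{1+\eta}\rho_{\min}$ to $\log_{1+\eta} n$; (iii) maintain a \emph{single} subsample $H$ at the rate $p_j=\Theta(\log(n)/(\eta^2(1+\eta)^j))$ of the current level, re-thinning $H$ (keep each stored edge with probability $p_{j+1}/p_j$) whenever $\hat\rho$ enters a new level (\Cref{change:subsampling-scheme-variant}); and (iv) at each level change invoke the static DP DSG algorithm on $H$ with a constant privacy parameter $\eps_0=\Theta(1)$, output its vertex set (reported density scaled by $1/p_j$ and corrected for the gadget), and repeat that output until the next level change. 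For $\delta=0$ instantiate the static algorithm with the pure-DP $(2+\eta, O(\log(n)/(\eps\eta)))$ algorithm of \cite{DLL23,dinitz2024tight}; for $\delta>0$ with its approximate-DP $(1+\eta,\cdot)$ variant.

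For accuracy I would chain three statements. First, the gadget shifts $\rho^*$ by at most an additive $O(\rho_{\min})=O(\log(n)/(\eps\eta))$ and every approximate DSG of the densified graph pulls back (as a vertex set) to one of $G$ with the same multiplicative ratio and an extra additive $O(\rho_{\min})$. Second, a subsampling/scaling lemma in the style of \cite{mcgregor2015densest,EHW16} (as adapted in this paper): our choice of $p_j$ makes the expected density of $H$ at least $\Omega(\log(n)/\eta^2)$ at every level, so with high probability (uniformly over levels) a $(\beta,\gamma)$-approximate DSG of $H$ scales up by $1/p_j$ to a $(\beta(1+O(\eta)),\, (1+O(\eta))\gamma/p_j)$-approximate DSG of the densified graph. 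Third, the static algorithm contributes multiplicative $2+\eta$ (resp.\ $1+\eta$) and additive $O(\log(n)/(\eps_0\eta))=O(\log(n)/\eta)$; after scaling by $1/p_j$ this becomes $O(\eta(1+\eta)^j)=O(\eta\rho^*)$, and --- this is the point of densification --- since every operating point now has $\rho^*\ge\rho_{\min}$, this scaled term sits within the multiplicative slack and never needs to appear as standalone additive error. Composing the three and reparametrizing $\eta$ yields the stated $(2+\eta,\cdot)$ and $(1+\eta,\cdot)$ guarantees, with additive error $O(\rho_{\min})$ up to the lower-order logarithmic factors from the accounting.

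The privacy analysis is, I expect, the main obstacle. I would argue per level and then compose. At level $j$ the output is post-processing of $H$, which is a Bernoulli $p_j$-subsample of the densified stream; by privacy amplification by subsampling \cite{balle2018tight,steinke2022composition}, running the $\eps_0$-DP static algorithm on it is $O(p_j\eps_0)$-DP --- and $(O(p_j\eps_0), O(p_j\delta_0))$-DP in the approximate case --- with respect to any single real edge that can appear at level $j$. With $\eps_0=\Theta(1)$ the per-level loss is $\Theta(p_j)=\Theta(\log(n)/(\eta^2(1+\eta)^j))$, which \emph{decays geometrically in $j$}. Because an edge arriving while the density is in level $j_0$ affects only the subsamples at levels $j\ge j_0$ (monotonicity of insertions), its total privacy loss is $\sum_{j\ge j_0}\Theta(p_j)=\Theta(p_{j_0}/\eta)=\Theta(\log(n)/(\eta^3(1+\eta)^{j_0}))$, worst at $j_0=\log_{1+\eta}\rho_{\min}$, where it equals $\Theta(\log(n)/(\eta^3\rho_{\min}))$; taking $\rho_{\min}=\Theta(\log(n)/(\eps\eta^3))$ --- still polylogarithmic, hence harmless --- makes this $O(\eps)$. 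Two crucial structural points: the nested-thinning scheme (\Cref{change:subsampling-scheme-variant}) lets the $O(\log n)$ level-outputs be generated from one subsample with a clean Markov structure, so composition is over \emph{geometrically shrinking} losses rather than $\Omega(\log n)$ equal ones; and densification (\Cref{change:initial-graph-densification}) keeps the largest rate $p_{j_{\min}}$ simultaneously below $1$ (so amplification applies at all) and below $\Theta(\eps)$ (so the geometric series sums to $O(\eps)$) --- precisely the two ways \cite{epasto2024power} fails to get meaningful amplification. A technical wrinkle to handle carefully is that the level at which a given edge is sampled depends on $\hat\rho$, which depends on the other edges; this is why $\hat\rho$ must be maintained privately and robustly, and why the amplification argument should be phrased conditionally on the identity of the differing edge.

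For space, a Chernoff bound gives $|H|\le(1+o(1))p_j m_j=O(n\log(n)/\eta^2)$ with high probability at every level, using $m_j\le n(1+\eta)^{j+1}$ once the density is in level $j$; since a single subsample is maintained, this (plus the private-estimate bookkeeping) is the total working space, matching the claimed bound up to the stated lower-order factors. I would finish by absorbing the $1/\poly(n)$ failure probabilities of the sparsification and space bounds into the overall high-probability guarantee, and by checking that the $\delta=\poly(1/n)$ regime is exactly what the approximate-DP amplification and the static algorithm of \cite{DLL23,dinitz2024tight} require.
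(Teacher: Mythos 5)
Your high-level strategy is the same as the paper's (densify, tie the subsampling rate to a privately tracked density level, release lazily, and exploit geometrically decaying privacy loss via amplification), but your parameter balance does not prove the theorem as stated. The crux is that you run the static DSG algorithm with a \emph{constant} privacy parameter $\eps_0=\Theta(1)$ and ask amplification to absorb everything, which forces strong amplification already at the lowest level: by your own accounting the total loss is $\Theta\left(\frac{\log n}{\eta^3\rho_{\min}}\right)$, so you must take $\rho_{\min}=\Theta\left(\frac{\log n}{\eps\eta^3}\right)$. Since the densification gadget perturbs every induced density by an additive $\Theta(\rho_{\min})$, your argument yields additive error $O\left(\frac{\log n}{\eps\eta^3}\right)$, not the claimed $O\left(\frac{\log(n)\log(1/\eta)}{\eps\eta}\right)$; no reparametrization of $\eta$ turns a $\poly(1/\eta)$ overhead into a $\log(1/\eta)$ one. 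The paper splits the budget differently: every SVT test and static call is run at $\eps'=\eps/\Upsilon$ with $\Upsilon=\Theta\left(\frac1\eta\log\frac1\eta\right)$, which tolerates roughly $\log_{1+\eta}(1/\eta)$ \emph{unamplified} releases before subsampling kicks in, so the densification degree need only be $\kappa=\Theta\left(\max\left(\zeta(n,\eps/\Upsilon,\delta/\Upsilon,\alpha),\ \frac{\Upsilon\log n}{\eps}\right)\right)=\Theta\left(\frac{\log(n)\log(1/\eta)}{\eps\eta}\right)$, which is exactly the stated additive error. (Your space bound is internally consistent and in fact smaller than claimed, because with $\eps_0=\Theta(1)$ the static call's additive error carries no $1/\eps$ and the sampling rate can omit that factor; space is not where the proof fails.)

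Two secondary omissions: you never budget the privacy of detecting level changes or of maintaining $\hat\rho$ itself — in the paper each epoch's threshold test is a fresh SVT instance charged $\eps/\Upsilon$ (amplified by $q_{t}$ when $q_{t}<1$), and the released density estimate also consumes budget; your scheme can be patched by running the detection on the subsample with the same per-level budget so the geometric accounting survives, but as written the accounting covers only the static DSG calls. Also, your "condition on the identity of the differing edge" remark needs to be made precise the way the paper does it: conditioning on the released transcript up to each release fixes $q_{t_i}$ identically on the two neighboring streams (it is a function of the released $\rho_{\DP}$), which is what legitimizes applying amplification epoch by epoch; this is a fixable presentation issue, whereas the $\eta$-dependence mismatch above is the genuine gap.
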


In \Cref{table:results},
we contextualize our DSG algorithms.
Specifically,
we consider the current state-of-the-art continual release algorithms~\cite{epasto2024power}\footnote{We remark that \cite{epasto2024power} only explicitly treat pure-DP algorithms but the approximate-DP continual release algorithm in \Cref{table:results} is implicit in their work.},
the static DP algorithms~\cite{DLL23,dinitz2024tight} with lowest additive error,
as well as the streaming algorithms~\cite{mcgregor2015densest,EHW16} with best space complexity.

\begin{table}[htbp!]
\centering
\caption{Our results are highlighted in green.
$\eta\in (0, 1)$ is an approximate parameter.
For $(\eps, \delta)$-DP,
we assume $\delta = \poly(\nicefrac1n)$ for simplicity.
All results are for edge-DP. 
Bicriteria approximations $(\alpha, \zeta)$ are given, where $\alpha$ is the multiplicative factor and $\zeta$ is the additive error.
In red,
we highlight the logarithmic overhead in the additive error and space complexity from \cite{epasto2024power}
compared to the static DP and non-private streaming settings,
respectively.}
\label{table:results}
\resizebox{\textwidth}{!}{
\vspace{1em}
\begin{NiceTabular}{|c|c|c|c|c|}
\noalign{\global\arrayrulewidth=0.2mm}
\hline
Privacy & Approximation & Space & Reference & Model \\
\noalign{\global\arrayrulewidth=0.8mm} 
\hline
\noalign{\global\arrayrulewidth=0.2mm}
    $\eps$-DP & $\left( 2+\eta, O\left( \frac{\log(n)\log(\nicefrac1\eta)}{\eps\eta} \right) \right)$ & $O\left( \frac{n\log(n)\log(\nicefrac1\eta)}{\eps\eta^2} \right)$ & \cellcolor{YellowGreen!20}\Cref{thm:simple-dsg-informal} & \makecell[c]{Continual\\Release} \\ \hline
    $(\eps, \delta)$-DP & $\left( 1+\eta, O\left( \frac{\log(\nicefrac{n}\eta)\log(\nicefrac1\eta)}{\eps\eta} \right) \right)$ & $O\left( \frac{n\log(\nicefrac{n}\eta)\log(\nicefrac1\eta)}{\eps\eta^2} \right)$ & \cellcolor{YellowGreen!20}\Cref{thm:simple-dsg-informal} & \makecell[c]{Continual\\Release} \\ \hline\hline
    
    $\eps$-DP & $\left( 2+\eta, O\left( \frac{\log^{\color{red}2} n}{\eps\eta} \right) \right)$ & $O\left( \frac{n\log^{\color{red}2} n}{\eps\eta^2} \right)$ & \cite{epasto2024power} & \makecell[c]{Continual\\Release} \\ \hline
    $(\eps, \delta)$-DP & $\left( 1+\eta, O\left( \frac{\log^{\color{red}2} (\nicefrac{n}\eta)}{\eps\eta} \right) \right)$ & $O\left( \frac{n\log^{\color{red}2} (\nicefrac{n}\eta)}{\eps\eta^2} \right)$ & \cite{epasto2024power} & \makecell[c]{Continual\\Release} \\ \hline\hline
    
    $\eps$-DP & $\left( 2, O\left( \frac{\log n}\eps \right) \right)$ & $O\left( n+m \right)$ & \cite{DLL23} & Static DP \\ \hline
    $(\eps, \delta)$-DP & $\left( 1, O\left( \frac{\log n}\eps \right) \right)$ & $O\left( n+m \right)$ & \cite{dinitz2024tight} & Static DP \\ \hline\hline

    None & $\left( 1+\eta, 0 \right)$ & $O\left( \frac{n\log n}{\eta^2} \right)$ & \cite{mcgregor2015densest,EHW16} & Semi-Streaming \\ \hline
\end{NiceTabular}
}
\end{table}

The \cite{epasto2024power} framework
invokes a static DP DSG algorithm in a black-box manner, 
and its error and space bounds depend on the chosen static algorithm.
Their $\eps$-DP algorithm is obtained from the \cite{DLL23} algorithm
and their $(\eps, \delta)$-DP algorithm is obtained from the \cite{dinitz2024tight} algorithm.
The reduction of \cite{epasto2024power} roughly incurs a $O(\frac{1}{\eta}\log n)$ overhead in the additive error compared to the static algorithm.
Moreover, 
for technical reasons discussed in \Cref{sec:technical-overview}, 
their space usage depends on the additive error, 
leading to an additional $O(\frac1\eps \log n)$ space overhead relative to the non-private streaming setting.

We refine their approach and obtain \Cref{thm:simple-dsg-informal} via a simple black-box reduction to the static DP setting.
Our reduction improves the additive error overhead from $O(\frac1\eta\log n)$ to $O(\frac1\eta \log \frac1\eta)$ and the space overhead from $O(\frac1\eps\log n)$ to $O(\frac1\eps \log \frac1\eta)$.
Thus, we match the static setting’s additive error and the non-private semi-streaming space complexity up to constants.
Moreover,
by known lower bounds that hold even in the static $(\eps, \delta)$-DP setting~\cite{AHS21,NV21},
the additive error we achieve is tight up to a factor of $O(\sqrt{\frac1\eps \log n})$.

\subsubsection{Technical Contributions}
\myparagraph{Subsampling} 
The main tool that unlocks the improvement in both the approximation and space guarantees 
is a careful implementation of subsampling.
Although subsampling is well-studied in both the differential privacy and streaming literature, 
we are the first to formally establish its combined benefits in the graph privacy setting, 
yielding non-trivial improvements in both privacy and space.
We expect this connection to be of independent interest beyond our work.

\myparagraph{Graph Densification}
While the relationship between privacy amplification and sparsification via subsampling is intuitive,
the exact implementation of subsampling requires care.
As a concrete example,
the uniform subsampling algorithm for streaming DSG~\cite{mcgregor2015densest,EHW16} does not begin subsampling until the number of edges exceeds $\Omega(n\log n)$.
Hence for the first $O(n\log n)$ releases,
there is potentially no privacy amplification.
This partially explains the extra logarithmic factor in \cite{epasto2024power}.

We address this by introducing \emph{graph densification} for graph DP: 
artificially adding edges so that the initial graph has $\Omega(n\log n)$ edges, 
enabling privacy amplification from the start 
(at the cost of a more involved utility analysis).
Our modification of the subsampling scheme can also be understood as densification.
Graph densification remains underexplored even in the non-private setting~\cite{hardt2012densification}, 
and to the best of our knowledge, 
we are the first to formalize it for graph DP.
We believe it has potential applications to other problems in this area.

\section{Technical Overview}\label{sec:technical-overview}
In this section,
we provide a detailed overview of our subsampling implementation, 
which simultaneously improves the additive error via privacy amplification 
as well as the space complexity due to sparsification.
A key tool upon which we heavily rely to obtain our improved reductions is privacy amplification via subsampling.
\begin{theorem}[Theorem 8 in \cite{balle2018tight}]\label{thm:privacy-amplification-subsampling}
    Let $\eps, \delta\in [0, 1]$.
    and suppose $\mcal A(X)$ is an $(\eps, \delta)$-DP algorithm over a dataset $X$.
    If $Y(X)$ is obtained from $X$ by subsampling each element of $X$ independently with probability $q$,
    then $\mcal A(Y(X))$ is $(2q\eps, q\delta)$-DP.
\end{theorem}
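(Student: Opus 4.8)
The plan is to reproduce the standard coupling-plus-mixture argument for subsampling amplification, specialized to the add/remove neighboring relation (which is exactly the edge-neighboring relation when $X$ is a stream of edge updates, so nothing is lost for our application). Fix neighboring datasets $X$ and $X'$; without loss of generality $X' = X \cup \{x^*\}$ for a single extra element $x^*$. The first step is to rewrite the output distribution of $\mcal A(Y(X'))$ as a two-component mixture by conditioning on the independent $\Ber(q)$ event ``$x^*$ survives subsampling.'' Conditioned on this event the remaining coordinates of $Y(X')$ are distributed exactly as $Y(X)$, so with $\mu_0 := \mathrm{law}(\mcal A(Y(X)))$ and $\mu_1 := \mathrm{law}(\mcal A(Y(X)\cup\{x^*\}))$ --- both driven by the \emph{same} subsample of the common part $X$ --- we get
\[
  \mathrm{law}(\mcal A(Y(X'))) = (1-q)\,\mu_0 + q\,\mu_1 .
\]

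The second step is to observe that $\mu_0$ and $\mu_1$ are $(\eps,\delta)$-indistinguishable. For every fixed realization $z$ of the subsample of $X$, the datasets $z$ and $z \cup \{x^*\}$ are neighboring, so $(\eps,\delta)$-DP of $\mcal A$ gives $\Pr[\mcal A(z)\in S] \le e^\eps\Pr[\mcal A(z\cup\{x^*\})\in S] + \delta$ together with the symmetric bound; integrating both over the shared law of $z$ preserves them, yielding $\mu_0(S) \le e^\eps\mu_1(S) + \delta$ and $\mu_1(S) \le e^\eps\mu_0(S) + \delta$ for every measurable $S$.

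The third step substitutes this into the mixture to obtain both directions of the DP guarantee for the pair $(\mcal A(Y(X)),\,\mcal A(Y(X')))$. The forward direction is immediate: for any event $S$,
\[
  (1-q)\mu_0(S) + q\mu_1(S) \le (1 + q(e^\eps-1))\,\mu_0(S) + q\delta \le e^{2q\eps}\mu_0(S) + q\delta ,
\]
using $1+t \le e^t$ and $e^\eps - 1 \le 2\eps$ on $[0,1]$. For the backward direction, set $\eps' := \log(1 + q(e^\eps-1))$, substitute $\mu_1(S) \ge e^{-\eps}(\mu_0(S) - \delta)$, and multiply through by $e^{\eps'}$; expanding $(1-q+qe^\eps)(1-q+qe^{-\eps})$ and using $q \in [0,1]$ together with $e^\eps + e^{-\eps} \ge 2$ shows the resulting multiplicative coefficient on $\mu_0(S)$ is at least $1$, while $\eps' \le \eps$ --- which holds because $q\le 1$ forces $1 + q(e^\eps-1) \le e^\eps$ --- keeps the residual additive term bounded by $q\delta$. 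Since $\eps' \le q(e^\eps-1) \le 2q\eps$, this gives $\mu_0(S) \le e^{2q\eps}[(1-q)\mu_0(S) + q\mu_1(S)] + q\delta$. As the neighboring pair $X,X'$ was arbitrary, $\mcal A(Y(X))$ is $(2q\eps, q\delta)$-DP; in fact the argument yields the slightly sharper $(\log(1+q(e^\eps-1)),\, q\delta)$, which is the form stated as Theorem~8 in \cite{balle2018tight}.

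I expect the backward direction to be the only real obstacle: it is where one must verify that subsampling with $q \le 1$ pushes neither the multiplicative factor below $1$ nor the additive term above $q\delta$, and it is exactly what makes the clean amplification of $\eps$ down to roughly $q\eps$ (here relaxed to $2q\eps$) come out instead of something lossier. The mixture decomposition and the $(\eps,\delta)$-closeness of $\mu_0$ and $\mu_1$ are routine once the neighboring relation is fixed, and since our streams are edge-neighboring (one differs from the other by the presence of a single edge update), the add/remove version above is precisely what we need.
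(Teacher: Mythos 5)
Your proof is correct: the Poisson-subsampling mixture decomposition, the pointwise $(\eps,\delta)$-closeness of $\mu_0$ and $\mu_1$ integrated over the shared subsample, and the two-sided substitution (with the checks $e^\eps-1\le 2\eps$ on $[0,1]$, $(1-q+qe^\eps)(1-q+qe^{-\eps})\ge 1$, and $\log(1+q(e^\eps-1))\le\eps$) all go through, and you even recover the sharper $(\log(1+q(e^\eps-1)),q\delta)$ guarantee. The paper itself states this result as an imported theorem from \cite{balle2018tight} without proof, and your argument is precisely the standard add/remove coupling proof underlying that citation, so there is nothing to reconcile beyond noting that it matches the cited source.
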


Before presenting our techniques, we briefly review the \cite{epasto2024power} reduction.
Their algorithm privately tracks three quantities: 
the current number of edges $m_t$, 
the maximum density $\rho_t$, 
and a vertex subset $S_t$ inducing an (approximate) DSG.
They then perform lazy updates, 
releasing a new solution only when $\rho_t$ increases by a factor $(1+\eta)$ over the previous release.
To avoid storing the entire graph, they adapt the uniform subsampling method of \cite{mcgregor2015densest,EHW16}, and independently retain each edge with probability $q_t \approx \frac{n\log n}{m_t\eta^2}$.
The density value is computed on the sparsified graph $H_t$ 
and the actual density estimate is scaled by $\frac1{q_t}$,
yielding a $(1+\eta)$-multiplicative approximation of $\rho_t$.
Vertex subset solutions are obtained by running a static DP DSG algorithm on $H_t$.

One technical difficulty in the analysis of \cite{epasto2024power} is the presence of additive error due to privacy.
For example, 
$q_t$ depends on $m_t$, 
which is private and must be privately estimated by adding noise.
This additive error affects the subsampling rate, 
preventing a direct application of the \cite{mcgregor2015densest,EHW16} analysis 
and requiring a lengthy adaptation to handle the noise.
As we will see,
our approach sidesteps this: 
by densifying the initial graph so that $m_0$ is sufficiently large, 
we can treat additive error as $O(1)$ multiplicative error 
and reuse prior analyses in a largely black-box fashion.

\subsection{Error Overhead}\label{sec:technical-overview:error}
Recall that static DP DSG algorithms achieve $O(\frac1\eps \log n)$ additive error
in both the approximate and pure DP setting,
while \cite{epasto2024power} suffers an additive error of $\Omega(\frac{\log^2 n}{\eps \eta})$.
The $\Omega(\frac1\eta \log n)$ overhead results from a composition over $\log_{1+\eta}(n) = \Theta(\frac1\eta \log n)$ lazy updates,
as the density can increase by a factor of $(1+\eta)$ at most that many times.
This necessitates calling the static DP DSG algorithms with privacy parameter $\eps' \approx \frac{\eps\eta}{\log n}$
in order to ensure the final algorithm is $\eps$-DP.

We first describe an {ideal scenario} in which this overhead disappears.
Then we identify two obstacles in \cite{epasto2024power} where the ideal analysis fails.
Finally, we explain how our modifications overcome said obstacles.

\myparagraph{Ideal Scenario}
Suppose that the subsampling rate $q_t < 1$ 
so that privacy amplification via subsampling (\Cref{thm:privacy-amplification-subsampling}) holds.
We could run the static algorithm with parameter $\eps'$ for each lazy release while incurring only $O(q_t\eps')$ privacy loss.
If the number of edges doubled after every release,
$q_t$ reduces by a constant factor each time (\eg{}, $q_{t+1} \leq \nicefrac{q_t}2$).
The total privacy loss from computing vertex solutions would then be bounded by the geometric series
$\sum_{i\geq 0} \nicefrac{\eps'}{2^i} = O(\eps')$.
In other words,
it suffices to call the static DP DSG algorithms with privacy parameter $\eps' = O(\eps)$
in order to ensure the final algorithm is $\eps$-DP.
Two obstacles prevent this ideal behavior in \cite{epasto2024power}:
\begin{enumerate}[(O1)]
    \item The subsampling rate halves when the number of edges $m$ doubles,
    but the density can grow by a $(1+\eta)$-factor 
    at least $\Omega(\log n)$ times before $m$ doubles (see \Cref{ex:no-amplification}).\label[obstacle]{obstacle:edge-density-mismatch}
    \item There is no subsampling (and hence no privacy amplification) for the first $\Omega(\frac{n\log n}{\eta^2})$ edge updates as $m_t = O(\frac{n\log n}{\eta^2})$
    and so $q_t = 1$ during those time stamps.\label[obstacle]{obstacle:sparse-initial-graph}
\end{enumerate}

\begin{example}\label{ex:no-amplification}
    Let $\eta=1$.
    Consider a graph with $Cn^{\frac23}\log n$ disjoint $n^{\frac16}$-cliques plus a single $n^{\frac13}$-clique.
    The densest subgraph is the $n^{\frac13}$-clique, 
    and the total number of edges is $\Theta(n\log n)$.
    Choose $C$ so that the subsampling probability is $\frac12$.
    If we grow the $n^{\frac13}$-clique, 
    its density can double $\Omega(\log(n^{\
    \frac12-\frac13})) = \Omega(\log n)$ times before the number of edges doubles.
    Thus, 
    the \cite{epasto2024power} algorithm must release $\Omega(\log n)$ times before the subsampling rate $q_t$ decreases.
\end{example}

\myparagraph{\Cref{obstacle:edge-density-mismatch}} 
The first obstacle can be addressed with a conceptually simple solution.
In the hard instance from \Cref{ex:no-amplification}, 
the graph has many edges overall but remains sparse except for a small dense subgraph.
Thus, we can uniformly add edges to each vertex without significantly affecting the maximum density.
More generally, 
we add edges after each release so that every vertex has degree at least $\eta\rho_t$, 
where $\rho_t$ is the current maximum density.
This increases the maximum density by a $(1+\eta)$-factor, 
while ensuring $m_t \geq \Omega(\eta n\rho_t)$
so the subsampling rate halves after roughly $\nicefrac1\eta$ releases.
In general, $n\rho_t \le m_t$, but the reverse inequality need not hold.

While this approach achieves the guarantees of \Cref{thm:simple-dsg-informal}, it complicates the correctness and privacy proofs because the stream is adaptively lengthened.
Instead, we note that this operation is equivalent to more aggressive subsampling
with $q_t \approx \frac{\log(n)}{\rho_t \eta^2}$,
as we are essentially forcing $m_t\geq \eta n\rho_t$.
From examining the \cite{EHW16} analysis, 
this adjusted rate still preserves densities up to a small multiplicative factor
and computing a DSG on the sparsified graph and scaling by $\frac1{q_t}$ yields a $(1+\eta)$-approximation.
Note that $\rho_t\geq \frac{m_t}n$ so the new subsampling rate is at most the previous rate.

Let $t^\star$ be the first release after subsampling begins.
Because $q_t$ now depends on $\rho_t$ rather than $m_t$, 
it decreases by a factor $(1+\eta)$ after each release for $t \ge t^\star$, 
mostly resolving \Cref{obstacle:edge-density-mismatch}.
This also \emph{simplifies} the \cite{epasto2024power} algorithm: 
we now need to privately track only $\rho_t$ and $S_t$, not $m_t$.
As mentioned,
this can alternatively be understood as \emph{implicit} densification.

In summary, 
we can run a static DSG algorithm with privacy parameter $\eps' \approx \eps\eta$, 
so after subsampling starts ($t \ge t^\star$) the privacy loss from computing vertex solutions is
$
    \sum_{i\geq 0} \frac\eps\eta(1+\eta)^{-i}
    = O(\eps).
$
It remains to address the privacy loss \emph{before} the algorithm begins subsampling (\ie{}, for $t<t^\star$).

\myparagraph{\Cref{obstacle:sparse-initial-graph}} 
We address the second obstacle by \emph{explicitly} densifying the graph.
Specifically,
we artificially add edges from a $\kappa$-regular graph on $n$-vertices to the initial graph
for some $\kappa$ to be chosen later.
Intuitively,
this ensures that the algorithm begins subsampling much earlier.

Before determining $\kappa$,
one implementation detail we must address
is that the \cite{epasto2024power} framework does not support repeated edge insertions, 
yet densification can introduce them.
Since the algorithm does not store all edges, tracking $m_t$ with repeated edges may be problematic.
Now,
this would not be a problem if their subsampling rate $q_t$ does not explicitly depend on $m_t$,
as we can simply ignore the (non)-existence of previously processed edges.
Specifically,
when an edge $e_t$ is inserted, 
we delete previous copies in the sample (if any) 
and keep $e_t$ with probability $q_t$.
This yields a uniform subsample of the simple graph formed by all updates, 
regardless of repetitions.
Fortunately,
the subsampling rate $q_t\approx \frac{\log n}{\rho_t \eta^2}$ of our algorithm is a function of the current maximum density rather than $m_t$.
Hence we are indeed able to handle repeated edges with the scheme described above.

We now determine the densification degree $\kappa$.
Roughly speaking,
we set $\kappa \geq \Omega(\frac1\eta\log n)$.
This ensures that $\rho_0\geq \Omega(\kappa) = \Omega(\frac1\eta \log n)$ so that $q_0\approx \frac{\log n}{\rho_0\eta^2} = \nicefrac1\eta$.
Note that the density of any induced subgraph differs by at most $\kappa$ between the original and augmented graphs.
Recall from the previous paragraph that we execute the static algorithm with privacy parameter $\eps'\approx \eps\eta$,
hence the additive error $\Omega(\frac{\log n}{\eta\eps})$ already dominates $\kappa$.
Thus densification by a $\kappa$-regular graph only increases the additive error by a constant factor.
On the other hand, 
privacy amplification begins after at most $O(\log_{1+\eta}(\nicefrac1\eta))$ releases,
as the denominator in $q_t\approx \frac{\log n}{\rho_t\eta^2}$ can increase by a factor of $(1+\eta)$
at most that many times before $q_t < 1$.

All in all,
our algorithmic improvements
allow us to execute the static DSG algorithm with privacy parameter $\eps'\gets \frac{\eps\eta}{\log(\nicefrac1\eta)}$
so that the additive error overhead is now $O(\frac1\eta \log \frac1\eta)$ compared to the static DP setting,
rather than $O(\frac1\eta \log n)$ in \cite{epasto2024power}.

\subsection{Space Overhead}\label{sec:technical-overview:space}
Non-private streaming DSG algorithms~\cite{mcgregor2015densest,EHW16} achieve $O(\frac{n\log n}{\eta^2})$ space complexity,
while \cite{epasto2024power} requires $\Omega(\frac{n\log^2 n}{\eps \eta^2})$ space.
This $\Omega(\frac1\eps \log n)$ space overhead
is a consequence of the additive error of the static DSG algorithm.
Recall that their reduction executes a static DSG algorithm on the sparsified graph $H_t$,
whose density is roughly scaled by $\frac1{q_t}$ within $G_t$.
Thus if the static algorithm outputs a vertex subset inducing a subgraph with density $\frac{\OPT_{H_t}}{1+\alpha} - \zeta$ in $H_t$,
it induces a subgraph in $G_t$ with a density of approximately $\frac{\OPT_{G_t}}{1+\alpha} - \frac\zeta{q_t}$ in $G_t$
so the additive error can be amplified due to subsampling.
This was not an issue in the non-private streaming setting as it is possible to obtain pure multiplicative error algorithms.
In order to address this,
\cite{epasto2024power} also require $q_t\geq\frac{n\zeta}{m_t \eta}$,
so that the additive error can be absorbed into the relative error
\[
    \frac{\zeta}{q_t}
    \approx \zeta\cdot \frac{m_t \eta}{n\zeta}
    \leq \eta \frac{m_t}n
    \leq \eta\OPT\,.
\]
As mentioned in \Cref{sec:technical-overview:error},
$\zeta = \Omega(\frac{\log^2 n}{\eps \eta})$ in their framework since we need to execute the static algorithm with privacy parameter $\eps'\gets \frac{\eps\eta}{\log n}$ due to the lack of privacy amplification
and composition over $O(\frac1\eta \log n)$ releases.
But then $q_t\geq \frac{n\log^2 n}{m_t \eps \eta^2}$
so that in expectation,
the number of subsampled edges is $m_tq_t = \Omega(\frac{n\log^2 n}{\eps \eta^2})$,
which is $\Omega(\frac1\eps \log n)$ more than the non-private streaming setting.

Thus by reducing the additive error as in \Cref{sec:technical-overview:error}, 
we also reduce space. 
More specifically,
we now have $\zeta \approx O(\frac{\log n \log(\nicefrac1\eta)}{\eta \eps})$ so that the expected sample size is $O(\frac{n\log n \log(\nicefrac1\eta)}{\eps \eta^2})$,
which is indeed only $O(\frac1\eps\log\frac1\eta)$ more than the non-private setting.

\section{Related Works}\label{sec:related-works}
\myparagraph{Streaming Algorithms}
Streaming algorithms~\cite{morris1978counting,flajolet1985probabilistic,flajolet2007hyperloglog,alon1996space},
can efficiently process a stream of data in one pass without storing the full input.
Streaming graph algorithms ideally use space sublinear in the number of vertices~\cite{mcgregor2018simple,HuangP19},
but it is often necessary to allow for near-linear space in the number of vertices~\cite{assadi2022semi,AG11,Esfandiari2018,feigenbaum2005graph,FMU22,mcgregor2015densest}. 

Online streaming algorithms also use low space 
and additionally output solutions after 
every update~\cite{chen2023sublinear,halldorsson2016streaming,cormode2018approximating,GS24}.
In the absence of privacy constraints, 
there are many online streaming graph algorithms, 
including densest subgraph~\cite{EHW16,mcgregor2015densest}, 
$k$-core decomposition~\cite{Esfandiari2018,KingTY23},
and more~\cite{assadi2019coresets,AG11,ALT21,Assadi22,FMU22,mcgregor2018simple,BerenbrinkKM14,assadi2022semi,AG11,feigenbaum2005graph,FMU22,HuangP19,henzinger1998computing,
muthukrishnan2005data,mcgregor2014graph}

\myparagraph{Continual Release Model}
In the context of online streaming computation, the DP model of reference is the continual release model~\cite{DworkNPR10,CSS11},
This area has received significant attention
(see \eg{}, \cite{ChanLSX12,henzinger2024unifying, henzinger2023differentially,fichtenberger2023constant,jain2023counting,JRSS21,CSS11,henzinger2024unifying,henzinger2023differentially,ChanLSX12,epasto2023differentially}),
including in graphs~\cite{epasto2024power,FichtenbergerHO21,Song18,jain2024time,raskhodnikova2024fully}.
 
\myparagraph{Private Graph Algorithms}
There is extensive work on static DP graph algorithms (see \eg{}\ \cite{AU19,blocki2022privately,dinitz2024tight,DLRSSY22,ELRS22,kasiviswanathan2013analyzing,kalemaj2023node,
LUZ24,mueller2022sok,NRS07,RS16,raskhodnikova2016differentially,Upadhyay13,AU19,chen2023private,bun2021differentially,imola2023differentially} and references therein).
We focus on the densest subgraph problem,
for which there has been a variety of works~\cite{NV21,AHS21,DLL23,DLRSSY22,dinitz2025differentially}.
DSG has been extensively studied in the non-private streaming context. 
Our work builds on the DP results of~\cite{epasto2024power}, 
which in turn was inspired by the non-private work of~\cite{EHW16,mcgregor2015densest},
who show that uniformly subsampling suffices to approximately preserve density.

In the context of DSG, 
beyond the work of~\cite{epasto2024power}, 
all other works either output value-only~\cite{FichtenbergerHO21} 
or are in the static DP setting~\cite{DLL23,DLRSSY22,DLRSSY22}.
The best static DP DSG algorithms attain $O(\frac1\eps \log n)$ additive error.
Comparatively, 
the best-known lower bound is $\Omega(\sqrt{\frac1\eps\log n})$~\cite{AHS21,NV21},
which holds even for $(\eps,\delta)$-DP algorithms. 

Our paper presents the first continual release DSG algorithm for edge-private
insertion-only streams that match the additive error of the static DP algorithms
as well as the space complexity of the non-private streaming counterparts.
We remark that all previous algorithms,
even those that only report the density value,
incur greater additive error and use more space.

\section{Preliminaries}
We now introduce the notation we use in this paper as well as some definitions. 
Further standard preliminaries are deferred to \Cref{apx:prelims}.

\subsection{Setting \& Notation}
A graph $G=(V,E)$ consists of a set of vertices $V$ and a set of edges $E$ where edge $\{u,v\} \in E$ if and only if there is 
an edge between $u \in V$ and $v \in V$. We write $n:= \vert V \vert$ and $m:= \vert E \vert$. 

We consider the insertion-only setting within the continual release model,
where we begin with a (possibly empty) graph $G_0 = (V, E_0)$ where $E_0\sset \binom{V}{2}$
and edge updates in the form of $\{e_t, \perp\}$ arrive in a stream at every timestep $t \in [T]$.
Importantly,
we allow the same edge to be inserted multiple times
but the graph of interest is always the simple graph given by the set of inserted edges.
We are required to release an output for the problem of interest after every {(possibly empty)} update. 
Our goal is to obtain algorithms that achieve sublinear space in the number of edges $\tilde{o}(m)$ (note that $T \geq m$ as we allow for empty updates). 
We assume $T\leq n^{\bar c}$ for some absolute constant $\bar c\geq 1$ to simplify our presentation. 
If there are no empty edge updates or repeated edges,
it is sufficient to take $\bar c=2$.

Throughout the paper,
we write $\eta, \alpha$ to denote multiplicative approximation parameters,
and $\zeta$ to denote the additive error.

\subsection{Differential Privacy \& Continual Release}
We use the phrasing of the below definitions as given in~\cite{jain2024time}.

\begin{definition}[Graph Stream~\cite{jain2024time}]\label{def:graph-stream}
    In the continual release model, 
    an insertion-only graph stream $S \in \mathcal{S}^T$ of length $T$ is a $T$-element vector
    where the $t$-th element is an edge update $u_t\in\set{vw, \bot}$ indicating the insertion of the edge $vw$,
    or an empty operation. 
\end{definition}

We use $G_t$ and $E_t$ to denote the simple graph induced by the set of updates in the stream $S$ up to and including update $t$.
Now, we define neighboring streams as follows. Intuitively,
two graph streams are edge neighbors if one can be obtained from the other by removing one edge update 
(replacing the edge update by an empty update in a single timestep). 

\begin{definition}[Edge Neighboring Streams]\label{def:neighboring-streams}
    \sloppy
    Two streams of edge updates, $S = [u_1, \dots, u_T]$ and $S' = [u'_1, \dots, u'_T]$, 
    are \emph{edge-neighboring} if there is exactly one $t^\star\in [T]$ such that $u_t = u_t'$ for all $t\neq t^\star$.
\end{definition}

The notion of neighboring streams leads to the following definition of an edge differentially private algorithm.
\begin{definition}[Edge Differential Privacy]\label{def:edge DP}
  Let $\eps, \delta\in (0, 1)$.
  A continual release algorithm $\mcal A(S): \mathcal{S}^T \rightarrow \mathcal{Y}^T$ that takes as input a graph stream $S \in \mathcal{S}^T$
  is said to be \emph{$(\eps, \delta)$-edge differentially private (DP)}
  if for any pair of edge-neighboring graph streams $S, S'$ that differ by 1 edge update
  and for every set of outcomes $Y\sset \Range(\mcal A)$,
  \[
    \prob\left[ \mcal A(S)\in Y \right]
    \leq e^\eps \cdot \prob\left[ \mcal A(S')\in Y \right] + \delta.
  \]
  When $\delta=0$,
  we say that $\mcal A$ is \emph{$\eps$-edge DP}.
\end{definition}

\subsection{Differential Privacy Tools} \label{prelim:dp}
Throughout the paper,
we use some standard privacy mechanisms as building blocks (see~\cite{dwork2014algorithmic} for a reference).

\begin{definition}[Global sensitivity]
The global sensitivity of a function $f:\cD\to\mathbb{R}^d$ is defined by
\[\Delta_f=\max_{D,D'\in\cD,D \sim D'}\|f(D)-f(D')\|_1.\]
where $D \sim D'$ are neighboring datasets and differ by an element. 
\end{definition}

\begin{definition}[Laplace Distribution]
We say a random variable $X$ is drawn from a Laplace distribution with mean $\mu$ and scale $b>0$ if the probability density function of $X$ at $x$ is $\frac{1}{2b}\exp\left(-\frac{|x-\mu|}{b}\right)$. 
We use the notation $X\sim\Lap(b)$ to denote that $X$ is drawn from the Laplace distribution with scale $b$ and mean $\mu=0$. 
\end{definition}
The Laplace mechanism for $f: X\to \R$ with global sensitivity $\sigma$ 
adds Laplace noise to the output of $f$ with scale $b=\nicefrac\sigma\eps$
before releasing.
\begin{proposition}[\cite{dwork2014algorithmic}]
    The Laplace mechanism is $\varepsilon$-DP.
\end{proposition}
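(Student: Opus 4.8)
The plan is to prove the standard fact that the Laplace mechanism achieves $\varepsilon$-differential privacy by directly bounding the ratio of output densities on neighboring inputs. Let $f: X \to \R$ have global sensitivity $\sigma$, and let $\mcal M(D) = f(D) + Z$ where $Z \sim \Lap(\sigma/\eps)$. Fix neighboring datasets $D \sim D'$ and an arbitrary measurable set $Y \sset \R$. I would write $p_D$ and $p_{D'}$ for the densities of $\mcal M(D)$ and $\mcal M(D')$ respectively, so that $p_D(y) = \frac{\eps}{2\sigma}\exp\left(-\frac{\eps |y - f(D)|}{\sigma}\right)$ and similarly for $D'$.

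The key step is the pointwise ratio bound. For any $y \in \R$,
\[
    \frac{p_D(y)}{p_{D'}(y)}
    = \exp\left( \frac{\eps}{\sigma}\left( |y - f(D')| - |y - f(D)| \right) \right)
    \leq \exp\left( \frac{\eps}{\sigma} \cdot |f(D) - f(D')| \right)
    \leq \exp(\eps),
\]
where the first inequality is the reverse triangle inequality $|y - f(D')| - |y - f(D)| \leq |f(D) - f(D')|$ and the second uses the definition of global sensitivity, $|f(D) - f(D')| \leq \Delta_f \leq \sigma$. Integrating this pointwise bound over $Y$ gives $\prob[\mcal M(D) \in Y] = \int_Y p_D(y)\,dy \leq e^\eps \int_Y p_{D'}(y)\,dy = e^\eps \prob[\mcal M(D') \in Y]$, which is exactly $\eps$-DP with $\delta = 0$.

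For the vector-valued case $f: X \to \R^d$ (if one wants the general statement), the argument extends coordinatewise: with independent $\Lap(\sigma/\eps)$ noise added to each coordinate, the density ratio factors into a product over coordinates, and the sum of per-coordinate exponents is controlled by $\frac{\eps}{\sigma}\|f(D) - f(D')\|_1 \leq \eps$ using the $\ell_1$ global sensitivity. The main (and only) subtlety worth care is making sure the triangle-inequality direction is applied correctly — one needs the bound on $|y - f(D')| - |y - f(D)|$ rather than its absolute value, but since the final statement is symmetric in $D, D'$ this is harmless. There is no real obstacle here; the proof is a short direct computation, and the only thing to be careful about is stating whether one proves the scalar version (as suggested by the definition $f: X \to \R$ immediately preceding the proposition) or the general $\ell_1$ version.
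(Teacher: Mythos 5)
Your proof is correct and is exactly the standard density-ratio argument from \cite{dwork2014algorithmic}, which the paper cites for this proposition without reproducing a proof. The pointwise bound via the (reverse) triangle inequality and global sensitivity, followed by integration over the output set, together with the $\ell_1$ extension to $\R^d$, is the canonical argument, so there is nothing to add.
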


\begin{theorem}[Adaptive Composition; \cite{DMNS06,DL09,DRV10}]\label{thm:composition}
    A sequence of DP algorithms, $(\alg_1, \dots, \alg_k)$, with privacy parameters $(\eps_1, \delta_1), \dots, (\eps_k, \delta_k)$ form at worst an $\left(\eps_1 + \cdots + \eps_k, \delta_1 + \dots + \delta_k \right)$-DP algorithm under \emph{adaptive composition} (where the adversary can adaptively select algorithms after
    seeing the output of previous algorithms).%
\end{theorem}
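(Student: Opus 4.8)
I would prove \Cref{thm:composition} by reducing to the two-fold case and inducting on $k$. For the inductive step, view $\mathcal B := (\alg_1, \dots, \alg_{k-1})$ as a single randomized map whose output is the whole transcript $z = (o_1, \dots, o_{k-1})$; by the inductive hypothesis $\mathcal B$ is $(\sum_{i<k}\eps_i, \sum_{i<k}\delta_i)$-DP. The point of \emph{adaptive} composition is exactly that $\alg_k(\,\cdot\,, z)$ is $(\eps_k, \delta_k)$-DP for every fixed transcript $z$, so $(\mathcal B, \alg_k)$ is a two-fold adaptive composition of a $(\sum_{i<k}\eps_i, \sum_{i<k}\delta_i)$-DP mechanism with an $(\eps_k, \delta_k)$-DP conditional mechanism. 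Hence it suffices to show: if $\alg'$ is $(\eps', \delta')$-DP and $\alg''(\,\cdot\,, y)$ is $(\eps'', \delta'')$-DP for every fixed value $y$ of its auxiliary input, then $D \mapsto (\alg'(D), \alg''(D, \alg'(D)))$ is $(\eps' + \eps'', \delta' + \delta'')$-DP.

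For the two-fold claim, fix neighboring datasets $D \sim D'$ and a measurable target set $T$ of output pairs, with slices $T_{y_1} := \{y_2 : (y_1, y_2) \in T\}$. Conditioning on the first output and chaining the two guarantees gives
\begin{align*}
  \prob\!\left[(\alg'(D), \alg''(D, \cdot)) \in T\right]
  &= \expect_{y_1 \sim \alg'(D)}\!\left[\prob[\alg''(D, y_1) \in T_{y_1}]\right] \\
  &\le e^{\eps'}\,\expect_{y_1 \sim \alg'(D')}\!\left[\prob[\alg''(D, y_1) \in T_{y_1}]\right] + \delta' \\
  &\le e^{\eps'}\,\expect_{y_1 \sim \alg'(D')}\!\left[e^{\eps''}\prob[\alg''(D', y_1) \in T_{y_1}] + \delta''\right] + \delta' \\
  &= e^{\eps' + \eps''}\,\prob\!\left[(\alg'(D'), \alg''(D', \cdot)) \in T\right] + e^{\eps'}\delta'' + \delta' .
\end{align*}
The first inequality applies $(\eps', \delta')$-DP of $\alg'$ to the fixed $[0,1]$-valued measurable function $\phi(y_1) := \prob[\alg''(D, y_1) \in T_{y_1}]$ via the layer-cake identity $\expect[\phi(Z)] = \int_0^1 \prob[\phi(Z) > t]\,dt$, integrating the DP inequality for the event $\{\phi > t\}$ over the threshold $t$; the second applies $(\eps'', \delta'')$-DP of $\alg''(\,\cdot\,, y_1)$ pointwise in $y_1$. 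This already yields $(\eps' + \eps'', \delta' + e^{\eps'}\delta'')$-DP, i.e.\ the theorem up to a harmless $e^{\eps'}$ factor on $\delta''$.

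To recover exactly $\delta' + \delta''$, I would invoke the standard approximate-DP decomposition lemma: a law $\mu$ satisfies $\mu(U) \le e^\eps \nu(U) + \delta$ for all $U$ iff there is a law $\mu^\dagger$ with total variation at most $\delta$ from $\mu$ and $\mu^\dagger(U) \le e^\eps \nu(U)$ for all $U$ (the nontrivial direction truncates the density of $\mu$ at $e^\eps$ times that of $\nu$ and redistributes the deficient mass, using $\eps \ge 0$). Replacing $\alg'(D)$ by a surrogate dominated by $e^{\eps'}\alg'(D')$ and, measurably in $y_1$, replacing $\alg''(D, y_1)$ by a surrogate dominated by $e^{\eps''}\alg''(D', y_1)$, the displayed chain runs with $\delta' = \delta'' = 0$ on the surrogates, giving a clean $e^{\eps' + \eps''}$ multiplicative bound; transferring back to the true joint law costs total variation at most the total variation between the first marginals ($\le \delta'$) plus the $\alg'(D)$-expected total variation between the conditional kernels ($\le \delta''$), by the triangle inequality for products of kernels. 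Summing closes the two-fold case, and the induction then gives the theorem. The main obstacle is carrying out this last step rigorously on general (continuous) output spaces: measurable selection of the conditional surrogate kernel $y_1 \mapsto \mu^\dagger_{y_1}$, completion of sub-probability mass in the decomposition lemma, and the product-kernel total-variation triangle inequality. Everything else — the conditioning, the layer-cake passage, and the measurability of $\phi$ — is routine bookkeeping.
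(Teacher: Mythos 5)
The paper imports \Cref{thm:composition} from the literature \cite{DMNS06,DL09,DRV10} and gives no proof of it, so there is no in-paper argument to compare against; judged on its own, your proof is the standard argument and is essentially correct. The reduction to the two-fold case by treating $(\alg_1,\dots,\alg_{k-1})$ as a single transcript-valued mechanism, the conditioning/layer-cake chaining (which, as you correctly observe, only yields $(\eps'+\eps'',\,\delta'+e^{\eps'}\delta'')$), and the repair via the approximate-DP decomposition lemma (truncate the density at $e^{\eps}$ times the reference and redistribute the excess, which is at most $\delta$ in mass, into the slack left by $e^{\eps}\geq 1$) to recover the clean $\delta'+\delta''$ is exactly how the sharp basic-composition bound is obtained in standard treatments. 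Your accounting of the transfer back to the true joint law is also right: the product-kernel triangle inequality costs at most $d_{TV}$ of the first marginals plus the $\alg'(D)$-expected $d_{TV}$ of the conditional kernels, i.e.\ at most $\delta'+\delta''$, since the relevant integrands are $[0,1]$-valued. The only place where genuine care is needed is the one you flag yourself: a measurable selection of the surrogate kernel $y_1\mapsto\mu^{\dagger}_{y_1}$ on general output spaces; on discrete or standard Borel output spaces (which covers everything this paper feeds into the theorem --- SVT answers, Laplace-noised density estimates, vertex subsets) the truncate-and-redistribute construction can be performed with jointly measurable densities, so this is a technicality rather than a gap.
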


\subsubsection{Sparse Vector Technique}
\SetKwProg{Class}{Class}{}{}
\SetKwProg{Fn}{Function}{}{end}
\SetKwFunction{SVT}{SVT}
\SetKwFunction{ProcessQuery}{ProcessQuery}
Below, we define the \emph{sparse vector technique} and give its privacy and approximation guarantees. The sparse
vector technique is used to privately answer \emph{above threshold} queries where an above threshold query checks whether the 
output of a function that operates on an input graph $G$ exceeds a threshold $T$.

Let $D$ be an arbitrary (graph) dataset,
$(f_t, \tau_t)$ a sequence of (possibly adaptive) query-threshold pairs,
$\Delta$ an upper bound on the maximum sensitivity\footnote{Recall the \emph{sensitivity} of a real-valued query is the maximum absolute difference between the values of the query on neighboring datasets.} of all queries $f_t$.
The algorithm stops running at the first instance of 
the input exceeding the threshold
and also outputs a differentially private estimate of the query value.

We use the variant introduced by {Lyu, Su, and Li~\cite{lyu2017understanding}},
detailed in the class \SVT{$\eps, \Delta$} (\Cref{alg:sparse vector technique})
where $\eps$ is our privacy parameter
and $\Delta$ is an upper bound on the maximum sensitivity of incoming queries.
The class provides a \ProcessQuery{$query, threshold$} function
where $query$ is the query to \SVT and $threshold$ is the threshold
that we wish to check whether the query exceeds.
This variant also outputs an estimate of the value of the query along with the output ``above''.

\begin{theorem}[Theorem 4 in \cite{lyu2017understanding}]\label{thm:sparse vector technique}
    \Cref{alg:sparse vector technique} is $\eps$-DP.
\end{theorem}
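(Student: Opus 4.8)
The plan is to adapt the standard coupling proof of the sparse-vector technique to the Lyu--Su--Li variant. Throughout, think of the internal budget as split into three slices $\eps = \eps_1 + \eps_2 + \eps_3$, with $\eps_1$ paying for the one-time threshold perturbation $\rho\sim\Lap(\Delta/\eps_1)$, $\eps_2$ for the per-query comparison noise (Laplace scale $2\Delta/\eps_2$, doubled for a reason that appears below), and $\eps_3$ for the single released estimate. Fix neighboring datasets $D\sim D'$, so $\abs{f_t(D)-f_t(D')}\le\Delta$ for every query index $t$. Because the algorithm halts at the first ``above'' answer, every output transcript is either $\vec o=(\perp,\dots,\perp)$ (all queries below threshold) or $\vec o=(\perp,\dots,\perp,v)$: a prefix of $\perp$'s of some length $k-1$, an ``above'' answer carrying a real estimate $v$, and then a halt. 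It suffices to show $\Pr[\mathcal{A}(D)=\vec o]\le e^{\eps}\Pr[\mathcal{A}(D')=\vec o]$ for each such $\vec o$, then integrate/sum over transcripts.

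For the all-$\perp$ transcript, write $\Pr[\mathcal{A}(D)=\vec o]=\int p_\rho(z)\prod_t\Pr_{\nu_t}[f_t(D)+\nu_t<\tau_t+z]\,dz$ and couple the threshold noise on $D'$ to $z'=z+\Delta$. Raising the effective threshold by $\Delta$ absorbs the at-most-$\Delta$ change of each $f_t$, so $\Pr_{\nu_t}[f_t(D)+\nu_t<\tau_t+z]\le\Pr_{\nu_t}[f_t(D')+\nu_t<\tau_t+z']$ term by term; the coupling is a measure-preserving translation, and only the threshold density moves, contributing $p_\rho(z)/p_\rho(z')\le e^{\eps_1}$. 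Thus this transcript costs $e^{\eps_1}\le e^{\eps}$ and nothing is spent on the query noises.

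For a transcript ending in an ``above'' answer $v$ at step $k$, split the argument into an \textsc{AboveThreshold}-type part (the $\perp$-prefix together with the event that query $k$ is the first to cross the noisy threshold) and a Laplace-mechanism part (the estimate $v$, a fresh noisy copy of $f_k$, which has sensitivity $\Delta$). For the first part, condition on the below-threshold noises $\nu_1,\dots,\nu_{k-1}$, set $g(D)=\max_{i<k}(f_i(D)+\nu_i-\tau_i)$ (which is $1$-Lipschitz in each $f_i$, hence changes by at most $\Delta$ between $D$ and $D'$), and note the crossing event is exactly $g(D)<\rho\le f_k(D)+\nu_k-\tau_k$. Translate $\rho$ by $g(D)-g(D')$ and $\nu_k$ by a compensating amount of magnitude at most $2\Delta$ (one $\Delta$ from the change in $g$, one from the change in $f_k$); these translations turn both constraints into their $D'$ analogues, the threshold density shifts by $\le\Delta$ (cost $e^{\eps_1}$), and the query density shifts by $\le 2\Delta$---which stays within budget precisely because its Laplace scale is $2\Delta/\eps_2$ rather than $\Delta/\eps_2$ (cost $e^{\eps_2}$). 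Integrating back over all the noises keeps the factor $e^{\eps_1+\eps_2}$, and the released estimate contributes a further $e^{\eps_3}$ by the Laplace mechanism; adaptive composition (\Cref{thm:composition}) then gives $e^{\eps_1+\eps_2+\eps_3}=e^{\eps}$.

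The main obstacle is the joint coupling at the halting step. As Lyu--Su--Li emphasize, this is exactly the place where several earlier SVT analyses went wrong: the shift applied to $\rho$ and the shift applied to $\nu_k$ must be chosen simultaneously so that every below-threshold comparison, the single above-threshold comparison, and (if the estimate were to reuse the comparison noise rather than use a fresh draw) the released value are all consistent on $D'$, and one must verify that the total displacement fed into the query-noise density is at most $2\Delta$ and that the translated integration limits neither lose nor double-count probability mass. The rest---the all-$\perp$ case, the Laplace density-ratio bounds, and the bookkeeping---is routine. Since the statement is precisely Theorem~4 of \cite{lyu2017understanding} with sensitivity bound $\Delta$ and cutoff $1$, we may alternatively invoke their proof verbatim.
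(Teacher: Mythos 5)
The paper does not actually prove this statement: it is an imported result, and the paper's ``proof'' consists entirely of the citation to Theorem~4 of Lyu--Su--Li, which is also what your closing sentence falls back on. Your sketch goes further and reconstructs the underlying argument, and it is consistent with the algorithm as stated: the noise scales in \Cref{alg:sparse vector technique} ($\Lap(3\Delta/\eps)$ for the threshold, $\Lap(6\Delta/\eps)$ for each comparison, $\Lap(3\Delta/\eps)$ for the one released estimate) correspond exactly to your split $\eps_1=\eps_2=\eps_3=\eps/3$ with the comparison noise at scale $2\Delta/\eps_2$, and your three-part accounting (all-$\perp$ transcript paid for by the threshold shift alone; the halting step paid for by a joint shift of the threshold noise by at most $\Delta$ and of the final comparison noise by at most $2\Delta$, which is affordable precisely because of the doubled scale; the released value paid for as a fresh Laplace mechanism of sensitivity $\Delta$, combined via \Cref{thm:composition}) is the standard and correct analysis of this variant, including for adaptively chosen query--threshold pairs since you argue transcript by transcript with the comparison noises of the $\perp$-prefix coupled identically. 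Minor sketch-level imprecisions (the sign of the shift of the threshold noise, stated as $g(D)-g(D')$ versus $g(D')-g(D)$, and the parenthetical about reusing comparison noise, which the algorithm does not do) do not affect correctness. In short, your proposal is a correct self-contained proof where the paper offers only a black-box citation; what the citation buys is brevity, and what your argument buys is an explicit verification that the constants $3$ and $6$ in \Cref{alg:sparse vector technique} are exactly the ones needed for the coupling to close.
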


\begin{algorithm}[htp!]
\caption{Sparse Vector Technique \cite[Algorithm 7]{lyu2017understanding}}\label{alg:sparse vector technique}
Input: privacy budget $\eps$, upper bound on query sensitivity $\Delta$\\
\Class{\SVT{$\eps, \Delta$}}{
    $\xi \gets \Lap(\nicefrac{3\Delta}{\eps})$ \label{svt:threshold noise} \\

    \Fn{\ProcessQuery{$f_t(D), \tau_t$}}{
        \If{$f_t(D) + \Lap({6\Delta}/{\eps}) \geq \tau_t + \xi$}{ \label{svt:query noise}
          $\tilde f_t \gets f_t(D) + \Lap(\nicefrac{3\Delta}{\eps})$ \\
          \Return ``above'', $\tilde f_t$, and Abort \\
        }
        \Else{
          \Return ``below'' \\
        }
    }
}
\end{algorithm}

\section{Algorithm}
\SetKwFunction{PrivateDensestSubgraph}{PrivateDensestSubgraph}
We now present our framework,
which relies a static DP densest subgraph algorithm.
We formalize this assumption below.
\begin{assumption}\label{assum:static-dp-dsg-alg}
    Let $\alpha\in[0, 1]$.
    We assume the existence of a static $(\eps, \delta)$-DP algorithm \PrivateDensestSubgraph 
    such that given an $n$-vertex $m$-edge graph and desired failure probability $n^{-c}$,
    there is an absolute constant $C > 0$ such that
    the algorithm outputs a $\left( 1+\alpha, C\cdot \zeta(n, \eps, \delta, \alpha) \right)$-approximate densest subgraph
    with probability $1-n^{-c}$
    while using $O(m+n)$ space.
\end{assumption}
We note that previous works satisfy \Cref{assum:static-dp-dsg-alg} with $\alpha\in [0, 1]$ and $\zeta = \poly(\log(n), \nicefrac1\eps, \log(\nicefrac1\delta), \nicefrac1\alpha)$.
See \Cref{apx:static-dp-dsg-alg}.

\myparagraph{Pseudocode} 
The pseudocode of our algorithm can be found at \Cref{alg:simple-dsg-struct,alg:simple-dsg-main}.
\Cref{alg:simple-dsg-struct} describes the main data structure of interest
while \Cref{alg:simple-dsg-main} is a simple wrapper around \Cref{alg:simple-dsg-struct}
that calls the appropriate functions upon an edge update.
The main conceptual changes from \cite{epasto2024power} are highlighted in blue.

\myparagraph{Notation}
For the sake of convenience,
we summarize the specific notation used in \Cref{alg:simple-dsg-struct,alg:simple-dsg-main}.
\begin{multicols}{2}
\begin{description}[nosep]
    \item[{$\eps, \delta\in [0, 1]$}] privacy parameters
    \item[$\eta\in (0, 1)$] multiplicative approximation factor
    \item[$n$] given number of vertices in the graph
    \item[$c\geq 1$] exponent in the failure probability $n^{-c}$
    \item[$C>0$] absolute constant depending only on $c$
    \item[$\Upsilon > 0$] used to adjust the parameter of subroutines
    \item[$\kappa > 0$] initial graph densification parameter
    \item[$F = F_t\sset E_t$] subsampled edge set 
    \item[$H = H_t$] sparsified graph, where we abuse notation to also indicate a hashmap
    \item[$\rho_\DP = \rho_\DP^{(t)}$] DP estimate of the maximum density $\rho_t$
    \item[$S_\DP = S_\DP^{(t)}\sset V$] DP vertex subset solution inducing an approximate densest subgraph
    \item[$r = r_t$] true maximum density of $H_t$
    \item[$\tilde r = \tilde r_t$] DP estimate of $r_t$
\end{description}
\end{multicols}

\myparagraph{Algorithm Description}
Similar to the \cite{epasto2024power} framework,
our reduction performs lazy updates,
releasing a new solution only when
an instance of the sparse vector technique (\Cref{alg:sparse vector technique}) approximately detects a $(1+2\eta)$-factor increase in the maximum density value.
We also subsample the input stream by deciding whether to keep an incoming edge with some probability.
This probability is gradually decreased as more edges are inserted.

As mentioned in \Cref{sec:technical-overview},
we make two conceptual changes to the \cite{epasto2024power} framework.
First,
we adjust the subsampling rate $q_t\approx \frac{\log n}{\rho_t \eta^2}$
to explicitly depend only on $\rho_t$,
whereas \cite{epasto2024power} depend on the number of edges $m_t$.
This addresses the potential issue described in \Cref{obstacle:edge-density-mismatch} from \Cref{sec:technical-overview},
where the \cite{epasto2024power} framework is forced to release many solutions
before the subsampling rate is reduced.
Specifically,
note that the update to the subsampling probability $q$ (\Cref{alg:simple-dsg-struct:subsample-prob-update})
decreases by a factor of at least $(1+2\eta)$
when the density estimate increases by a factor of at least $(1+2\eta)$ (\Cref{alg:simple-dsg-struct:density-threshold-update}).
This step ensures that the privacy loss of each solution release is geometrically decreasing due to the geometrically decreasing subsampling probability (\Cref{thm:privacy-amplification-subsampling}).

Second,
we begin the algorithm by artificially adding edges from a $2\kappa$-regular graph 
for $\kappa \approx \log n$.
This incurs an additive error of approximately $\kappa$
since every induced subgraph on $k$ vertices changes by roughly $k\cdot\kappa$.
In exchange,
we ensure that the density is at least $\kappa\geq \log n$ throughout the duration of the algorithm
so that we begin subsampling much sooner than \cite{epasto2024power},
addressing \Cref{obstacle:sparse-initial-graph} from \Cref{sec:technical-overview}.

We highlight these changes in blue within \Cref{alg:simple-dsg-struct}.

\SetKwFunction{FRecurs}{FnRecursive}%
\SetKwFunction{PrivateContinualDSG}{PrivateContinualDSG}
\SetKwFunction{ProcessEdge}{ProcessEdge}
\SetKwFunction{UpdateSample}{UpdateSample}
\SetKwFunction{SampleEdge}{SampleEdge}
\SetKwFunction{GetNonPrivateDensity}{GetNonPrivateDensity}
\SetKwFunction{GetPrivateApproxDensity}{GetPrivateApproxDensity}
\SetKwFunction{GetPrivateApproxDensestSubgraph}{GetPrivateApproxDensestSubgraph}
\SetKwFunction{ExactDensity}{ExactDensity}
\SetKwFunction{PrivateDensestSubgraph}{PrivateDensestSubgraph}
\SetKwData{Svt}{svt}
\SetKwData{DensitySvt}{densitySvt}
\SetKwData{EdgeCounter}{edgeCounter}
\SetKwData{SubsampleTimer}{subsampleTimer}
\SetKwData{Count}{count}

\newcommand{\ConstPrivCount}{c_1}
\newcommand{\ConstDensitySvt}{c_2}
\newcommand{\ConstSubsampleProb}{c_3}

\begin{algorithm}[htp]
    \caption{Data Structure for Densest Subgraph in Adaptive Insertion-Only Streams\label{alg:simple-dsg-struct} }
    \Class{\PrivateContinualDSG{$\eps$, $\delta$, $\eta$, $n$, $c$}}{
        $C\gets$ sufficiently large constant depending only on $c$ \\
        {$\Upsilon\gets \log_{1+2\eta}(\nicefrac{3}\eta) + \frac{1+2\eta}{\eta}$} \\
        {$\kappa \gets \max\left( 
            C\cdot \zeta(n, \nicefrac\eps{\Upsilon}, \nicefrac\delta{\Upsilon}, \eta), 
            \frac{2C\Upsilon \ln(n)}{\eps}
            \right)$
        } \\
        {\color{blue} Initialize edge sample $F \gets$ edges of an arbitrary $n$-vertex $2\kappa$-regular graph} \label{one-shot:sampled}\\
        {\color{blue} Initialize hashmap $H[e] = 1$ for $e\in F$} \label{one-shot:sample-hashmap} \\
        {\color{blue} Initialize $\rho_\DP \gets \kappa$} \\
        {Initialize $S_\DP\gets V$} \\
        {Initialize sampling probability $q\gets 1$} \\
        Initialize class \DensitySvt$\gets$\SVT{$\nicefrac{\eps}{\Upsilon}, \Delta=1$} 
        \qquad(\Cref{alg:sparse vector technique})\label{one-shot:svt-func} \\
        \BlankLine
        \Fn{\SampleEdge{$e$}}{
            \If{$e\neq \perp$} {
                Remove $e$ from $F, H$ if stored\\
                Sample $h_{e} \sim U[0, 1]$ uniformly at random in $[0, 1]$.\label{one-shot:sample-h-t-value}\\
                \If{$h_{e} \leq q$}{\label{one-shot:less-than-prob}
                    Store $F \gets F \cup \{e\}$.\label{one-shot:sample-edge}\\
                    Set $H[e] = h_{e}$.\label{one-shot:store-sample}
                }
            }
        }
        \BlankLine
        \Fn{\UpdateSample{}}{
            \For{each edge $e \in H$}{\label{one-shot:iterate-stored-edges}
                \If{$H[e] > q$}{\label{one-shot:check-ht}
                    Remove $e$ from $F$ and $H$.\label{one-shot:remove-edge} 
                    \label{one-shot:keep-edge}
                } 
                    
            }
        }
        \BlankLine
        \Fn{\ProcessEdge{$e_t$}}{
            \SampleEdge{$e_t$} \\
            $r \gets$ exact density of densest subgraph in $(V, F)$ \qquad(\Cref{thm:densest value}) \\
            \If{\DensitySvt.\ProcessQuery{$r, q\cdot (1+2\eta)\rho_\DP$} returns ``above'', $\tilde r$}{
                $\rho_\DP\gets \max\set{(1+2\eta)\rho_\DP, \nicefrac{\tilde r}q}$
                \label{alg:simple-dsg-struct:density-threshold-update} \\
                Reinitialize class \DensitySvt$\gets$\SVT{$\nicefrac{\eps}{\Upsilon}, \Delta=1$} \\
                $S_\DP \gets$\PrivateDensestSubgraph{$F, \nicefrac\eps{\Upsilon}, \nicefrac\delta\Upsilon$} \qquad(\Cref{assum:static-dp-dsg-alg}) 
                \label{one-shot:get-new-private-graph}
                \label{one-shot:return-private-ds} \\
                {\color{blue}$q\gets \min\left( 1, \frac{3\kappa}{\rho_\DP \eta} \right)$}
                \label{alg:simple-dsg-struct:subsample-prob-update} \\
                \UpdateSample{}
            }
        }
	}
    \storelines
\end{algorithm}

\SetKwData{PrivateContinualDsg}{privateContinualDSG}

{
\begin{algorithm}[htp]
    \caption{Algorithm for Densest Subgraph in Adaptive Insertion-Only Streams \label{alg:simple-dsg-main}}
    \restorelines
    Initialize \PrivateContinualDsg$\gets$\PrivateContinualDSG{$\eps, \eta$, $n$, $c$}.\label{one-shot:initialize-dsg} \\
    \For{each new update $e_t$}{\label{one-shot:for}
       \PrivateContinualDsg.\ProcessEdge{$e_t$}.\label{one-shot:sample}\\ 
        Release \PrivateContinualDsg.$S_\DP$, \PrivateContinualDsg.$\rho_\DP$ \label{one-shot:release-stored} 
    }
\end{algorithm}
}

\section{Analysis}
In this section,
we prove the following formalization of \Cref{thm:simple-dsg-informal}
\begin{theorem}\label{thm:simple-dsg-formal}
    Let $\eta, \eps\in (0, 1)$ and $\delta, \alpha\in [0, 1]$.
    Suppose \Cref{assum:static-dp-dsg-alg} holds.
    There is an $(O(\eps), \delta)$-DP algorithm for densest subgraph in the continual release model
    with the following guarantees.
    With probability $1-\poly(\nicefrac1n)$,
    the algorithm outputs a vertex set inducing a $\left( (1+\eta)(1+\alpha), \kappa \right)$-approximate densest subgraph at every $t\in [T]$
    while using $O(\frac{n\kappa}\eta)$ space
    for
    \begin{align*}
        \kappa = O\left( \zeta\left( n, \frac\eps{\Upsilon}, \frac\delta\Upsilon, \alpha \right) + \frac{\Upsilon\ln(n)}{\eps} \right)\,, \qquad\qquad
        \Upsilon = O\left( \frac1\eta \log\frac1\eta \right) \,.
    \end{align*}
\end{theorem}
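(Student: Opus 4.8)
The plan is to prove \Cref{thm:simple-dsg-formal} by establishing three things separately: privacy, accuracy, and space, and then combining them. I will analyze \Cref{alg:simple-dsg-struct} directly, tracking the sequence of lazy releases. Let me index the lazy releases (the timesteps where \texttt{DensitySvt.ProcessQuery} returns ``above'') as $t_0 = 0 < t_1 < t_2 < \cdots$ and write $\rho^{(j)}$, $q^{(j)}$ for the values of $\rho_\DP$, $q$ right after the $j$-th release. The key structural observation, which drives everything, is that the \textcolor{blue}{blue} change on \Cref{alg:simple-dsg-struct:subsample-prob-update} makes $q^{(j)} \approx \frac{3\kappa}{\rho^{(j)}\eta}$, and since \Cref{alg:simple-dsg-struct:density-threshold-update} forces $\rho^{(j+1)} \ge (1+2\eta)\rho^{(j)}$, we get $q^{(j+1)} \le q^{(j)}/(1+2\eta)$ for all $j$ past the point where $q$ first drops below $1$. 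Moreover, $q$ drops below $1$ after at most $\Upsilon' := \log_{1+2\eta}(3/\eta)$ releases, since $\rho^{(0)} = \kappa$ and $q$ starts at $1$.

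\textbf{Privacy.} The released output at each timestep consists of $\rho_\DP$ and $S_\DP$; $\rho_\DP$ is updated only via the SVT estimate $\tilde r$, and $S_\DP$ only via \PrivateDensestSubgraph, both run on the subsampled graph $F$. I would argue privacy by adaptive composition (\Cref{thm:composition}) over: (i) the running SVT instances, which are re-initialized after each release, and (ii) the static DP DSG calls. For the SVT instances, there are at most $\Upsilon'$ releases before subsampling kicks in, each costing $\eps/\Upsilon$; afterward, \Cref{thm:privacy-amplification-subsampling} gives that the $j$-th SVT instance run on a $q^{(j)}$-subsample costs only $2q^{(j)}\cdot\eps/\Upsilon$, and since the $q^{(j)}$ decay geometrically by $(1+2\eta)$, $\sum_j 2q^{(j)}\eps/\Upsilon = O(\eps/(\eta\Upsilon))\cdot O(1/\eta)$-ish; with $\Upsilon = \Theta(\frac1\eta\log\frac1\eta)$ this sums to $O(\eps)$. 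The same accounting applies to the \PrivateDensestSubgraph calls, each $(\eps/\Upsilon, \delta/\Upsilon)$-DP before amplification and $(2q^{(j)}\eps/\Upsilon, q^{(j)}\delta/\Upsilon)$-DP after. One subtlety I must handle carefully: privacy amplification by subsampling requires that the subsampling be a genuine independent Poisson/Bernoulli subsample of the edge stream — the re-sampling via fresh $h_e \sim U[0,1]$ and the $H[e] > q$ pruning in \UpdateSample{} must be shown to produce, at each point, an independent $q$-subsample of $E_t$ (this is the standard "priority sampling / bottom-$k$" argument, and the fact that repeated edges are deleted-then-resampled is exactly what makes it clean). I also need the adaptivity of the stream to be compatible with the amplification theorem, which holds since the stream is fixed in advance (oblivious adversary is implicit in the insertion-only continual release model here) — or, if adaptive, one invokes an adaptive version.

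\textbf{Accuracy.} Condition on the good event that all SVT instances and all \PrivateDensestSubgraph calls succeed (union bound over $O(\poly n)$ events, each failing with probability $n^{-c}$, using $T \le n^{\bar c}$). On the subsampled graph $F_t$, I use the \cite{mcgregor2015densest,EHW16} uniform-subsampling guarantee — but here is where densification pays off: because $\rho_t \ge \kappa \ge \Omega(\log n / \eta^2 \cdot \eta) $, wait, more precisely $q_t = \Theta(\kappa/(\rho_t\eta))$ ensures the expected number of edges in any candidate dense subgraph in $F_t$ is $\Omega(\log n/\eta^2)$, so a Chernoff bound gives that densities are preserved up to a $(1\pm\eta)$ factor after scaling by $1/q_t$. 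Then \PrivateDensestSubgraph on $F_t$ returns a $(1+\alpha, C\zeta(n,\eps/\Upsilon,\delta/\Upsilon,\alpha))$-approximate DSG of $F_t$; scaling the density by $1/q_t$ and accounting for the subsampling distortion gives a $((1+\eta)(1+\alpha), \zeta/q_t)$-approximate DSG of $G_t$ (the $H$-graph). The additive error $\zeta/q_t$ scales up, but $q_t = \Theta(\kappa/(\rho_t\eta))$ and $\rho_t \le$ (max density of $G_t$ plus $\kappa$ from the $2\kappa$-regular augmentation), so $\zeta/q_t \lesssim \zeta\rho_t\eta/\kappa \le \eta\rho_t$ once $\zeta \lesssim \kappa$, i.e., the additive error gets absorbed into the multiplicative error — this is the space/error-coupling trick from \Cref{sec:technical-overview:space}, now with the improved $\kappa$. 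Finally I must subtract off the $\kappa$-regular graph: the density of any induced subgraph changes by exactly $\kappa$ between $G_t$ and the augmented graph, giving the additive $\kappa$ in the theorem. The SVT guarantee (\Cref{thm:sparse vector technique} plus the standard accuracy bound for \Cref{alg:sparse vector technique}) ensures $\rho_\DP$ tracks the true max density $\rho_t$ up to $(1+2\eta)$ multiplicative and $O(\Upsilon\log(n)/\eps)$ additive, which is $\le \kappa$ by the choice of $\kappa$.

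\textbf{Space.} At any time the data structure stores $F$ (the subsample), the hashmap $H$, the SVT state ($O(1)$), $\rho_\DP$, and $S_\DP$ ($O(n)$). The expected size of $F_t$ is $m_t q_t$ where $m_t = |E_t| + \Theta(n\kappa)$ (from augmentation) and $q_t = \Theta(\kappa/(\rho_t\eta)) \le \Theta(\kappa n/(m_t\eta))$ since $\rho_t \ge m_t/n$; hence $\E|F_t| = O(n\kappa/\eta)$, and a Chernoff/union bound over all $t$ upgrades this to a high-probability bound. So the total space is $O(n\kappa/\eta)$ as claimed. Plugging $\Upsilon = \log_{1+2\eta}(3/\eta) + (1+2\eta)/\eta = O(\frac1\eta\log\frac1\eta)$ and $\kappa = O(\zeta(n,\eps/\Upsilon,\delta/\Upsilon,\alpha) + \Upsilon\ln n/\eps)$ from the class constructor finishes the parameter bookkeeping.

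\textbf{Main obstacle.} The delicate part is the privacy argument, specifically making the amplification-by-subsampling composition rigorous: I must verify that at the $j$-th release the information released genuinely depends on a fresh independent $q^{(j)}$-subsample (not on earlier subsamples with different rates, which would break the clean per-release amplification factor), handle the fact that the SVT instance between releases $j$ and $j{+}1$ operates on a sequence of subsamples whose rate is fixed at $q^{(j)}$, and carefully sum the geometric series including the $O(\Upsilon')$ un-amplified initial releases to land at $O(\eps)$ total — this is exactly the accounting sketched in \Cref{sec:technical-overview:error}, and getting the constants to close (so that $\eps' = \eps\eta/\Upsilon$ with the specific $\Upsilon$ above suffices) is where the real work lies. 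A secondary obstacle is the interaction between the adaptive re-sampling in \SampleEdge/\UpdateSample and the streaming correctness proof of \cite{mcgregor2015densest,EHW16}, which assumed a single fixed rate; the fix, as the authors note, is that densification lets one treat the noise-induced rate fluctuations as $O(1)$ multiplicative perturbations and reuse that analysis black-box.
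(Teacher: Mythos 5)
Your proposal follows essentially the same route as the paper: the same three-part split (privacy, approximation, space), the same key facts — geometric decay of $q^{(j)}$ by $(1+2\eta)$ per release, at most $\log_{1+2\eta}(3/\eta)$ un-amplified releases, amplification via \Cref{thm:privacy-amplification-subsampling} summed against the specific $\Upsilon$, density preservation under subsampling with the additive error absorbed as $\eta\rho_t$ thanks to $\kappa \ge C\zeta$, the additive $\kappa$ from the $2\kappa$-regular augmentation, and $\rho_t \ge m_t/n$ for the $O(n\kappa/\eta)$ space bound. The main obstacle you flag is resolved in the paper exactly along the lines you anticipate, by conditioning on the released transcript so that $q_{t_i}$ is a deterministic function of prior outputs and hence identical on neighboring streams, after which per-epoch amplification and a privacy-loss-variable (rather than plain composition) argument handles the $\delta$ terms.
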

When instantiated with the appropriate static DP densest subgraph algorithms~\cite{DLL23,DLRSSY22,dinitz2025differentially} (\Cref{thm:private static densest subgraph,thm:improved private static densest subgraph,thm:static-approx-dp-dsg-alg}),
\Cref{thm:simple-dsg-formal} achieves the following results,
which is a superset of our informal \Cref{thm:simple-dsg-informal}.
\begin{description}[nosep]
    \item[\cite{DLL23}] An $\eps$-DP $\left( 2+\eta, O\left( \frac{\log(n)\log(\nicefrac1\eta)}{\eps\eta} \right) \right)$-approximation using $O\left( \frac{n\log(n)\log(\nicefrac1\eta)}{\eps\eta^2} \right)$ space.
    \item[\cite{DLRSSY22}] An $\eps$-DP $\left( 1+\eta, O\left( \frac{\log^4(n)\log(\nicefrac1\eta)}{\eps\eta^4} \right) \right)$-approximation using $O\left( \frac{n\log^4(n)\log(\nicefrac1\eta)}{\eps\eta^4} \right)$ space.
    \item[\cite{dinitz2024tight}] An $(\eps, \delta)$-DP $\left( 1+\eta, O\left( \frac{\log(\nicefrac{n}\eta)\log(\nicefrac1\eta)}{\eps\eta} \right) \right)$-approximation
using $O\left( \frac{n\log(\nicefrac{n}\eta)\log(\nicefrac1\eta)}{\eps\eta^2} \right)$ space
in the regime $\delta = \poly(\nicefrac1n)$.
\end{description}

In particular,
this yields a logarithmic improvement in the overhead of the reduction in both the additive error and space complexity.
Specifically,
the additive error overhead compared to the static algorithm is only a constant $O(\frac1\eta\log\frac1\eta)$
compared to the static DP setting
as opposed to the logarithmic $O(\frac1\eta\log(n))$ overhead from \cite{epasto2024power}.
Similarly,
we improve space overhead compared to the non-private streaming setting to a constant $O(\frac1\eps\log\frac1\eta)$ 
as opposed to the logarithmic $O(\frac1\eps\log n)$ overhead from \cite{epasto2024power}.

The proof of \Cref{thm:simple-dsg-formal} is completed in three parts.
We begin in \Cref{sec:simple-dsg-privacy}
by proving the privacy guarantees of \Cref{alg:simple-dsg-main} (\Cref{thm:simple-dsg-privacy}).
Then,
we analyze its approximation guarantees in \Cref{sec:simple-dsg-approximation} (\Cref{thm:simple-dsg-approximation}).
Last but not least,
\Cref{sec:simple-dsg-space} proves the space complexity of our algorithm (\Cref{thm:simple-dsg-space}).

\subsection{Privacy}\label{sec:simple-dsg-privacy}
We begin by analyzing the privacy guarantees of our framework.
\begin{theorem}\label{thm:simple-dsg-privacy}
    \Cref{alg:simple-dsg-main} is $(2\eps, \delta)$-DP.
\end{theorem}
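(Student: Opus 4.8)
The plan is to express the transcript of \Cref{alg:simple-dsg-main} as a post-processing of two families of private sub-routines --- the sparse-vector instances \texttt{DensitySvt} and the calls to \texttt{PrivateDensestSubgraph} --- and to bound the total privacy loss of each family by composition (\Cref{thm:composition}), the crucial ingredient being privacy amplification by subsampling (\Cref{thm:privacy-amplification-subsampling}) applied per \emph{epoch} (the interval between two consecutive releases). Since $\Upsilon$, $\kappa$, $C$, the initial $2\kappa$-regular graph, and all subroutine parameters are data-independent, edge-neighboring streams induce identical initializations, so only the randomized queries contribute to the privacy loss.

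First I would record the structural facts that drive the bookkeeping. (a) The query ``exact density of the densest subgraph of $(V,F)$'' has $\ell_1$-sensitivity $1$ in one edge of $F$, so each \texttt{DensitySvt} instance is $(\eps/\Upsilon)$-DP as a function of $F$ by \Cref{thm:sparse vector technique}, and each \texttt{PrivateDensestSubgraph} call is $(\eps/\Upsilon, \delta/\Upsilon)$-DP as a function of $F$ by \Cref{assum:static-dp-dsg-alg}; the threshold $q\cdot(1+2\eta)\rho_\DP$ handed to the SVT is a deterministic function of already-released quantities. (b) The probability $q$ changes only at releases (\Cref{alg:simple-dsg-struct:subsample-prob-update}), hence is constant throughout an epoch, and because \texttt{SampleEdge} draws a fresh uniform hash on every (re-)insertion while \texttt{UpdateSample} prunes precisely the stored edges whose hash exceeds the current $q$, the set $F$ maintained during the $i$-th epoch is exactly an independent rate-$q_i$ subsample of the current simple graph --- this is also what makes the scheme robust to repeated edge insertions. (c) $q_i$ is a deterministic function of $\rho_\DP$, hence of the transcript released through epoch $i$. (d) $\rho_\DP$ starts at $\kappa$ and is multiplied by at least $(1+2\eta)$ at each release, so $q_i \le \min\bigl(1, (3/\eta)(1+2\eta)^{-i}\bigr)$; in particular $q_i = 1$ for at most $\log_{1+2\eta}(3/\eta) + O(1)$ epochs and decays geometrically thereafter.

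Next I would assemble the pieces epoch by epoch. Conditioning on the transcript through epoch $i$ fixes $q_i$ by (c); the epoch-$i$ contribution of \texttt{DensitySvt} then factors as a rate-$q_i$ subsample followed by a $(\eps/\Upsilon)$-DP map, so by \Cref{thm:privacy-amplification-subsampling} it is $\bigl(\min(1,2q_i)\cdot\eps/\Upsilon\bigr)$-DP (using the un-amplified bound when $q_i = 1$), and likewise the epoch-$i$ call to \texttt{PrivateDensestSubgraph} is $\bigl(\min(1,2q_i)\cdot\eps/\Upsilon,\; q_i\cdot\delta/\Upsilon\bigr)$-DP. Summing over all epochs by adaptive composition (\Cref{thm:composition}) and using $\sum_i \min(1,2q_i) \le \Upsilon$ --- which follows from the geometric-tail estimate of (d) together with the choice $\Upsilon = \log_{1+2\eta}(3/\eta) + (1+2\eta)/\eta$ --- the family of \texttt{DensitySvt} instances is $(\eps, 0)$-DP and the family of \texttt{PrivateDensestSubgraph} calls is $(\eps, \delta)$-DP. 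One final application of composition and post-processing yields that \Cref{alg:simple-dsg-main} is $(2\eps, \delta)$-DP.

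The main obstacle is making the marriage of amplification-by-subsampling with adaptivity rigorous: the rate $q_i$ is data-dependent through $\rho_\DP$, and beyond the single timestep $t^\star$ at which the neighboring streams differ the two executions may diverge, so one cannot simply couple the subsampled streams so that they differ in a single edge for all time. The resolution is the epoch-by-epoch conditioning above --- since $q_i$ depends on the data only through the transcript already accounted for, conditionally on that transcript it is a genuine constant, and \Cref{thm:privacy-amplification-subsampling} applies verbatim --- packaged into a form of adaptive composition in which the $i$-th mechanism \emph{and its subsampling rate} may depend on the first $i-1$ outputs. A secondary, more routine nuisance is verifying fact (b) on the nose in the presence of repeated insertions and the pruning step \texttt{UpdateSample}.
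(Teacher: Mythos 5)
Your proposal is correct and follows essentially the same route as the paper: decompose the transcript at the release times into epochs, condition on the already-released outputs so that the subsampling rate $q_{t_i}$ is fixed and identical on neighboring streams, apply privacy amplification by subsampling per epoch to both the SVT family and the static DSG calls, and sum the geometrically decaying losses via the definition of $\Upsilon$ to get $\eps$ per family (plus the $\delta$ contribution from the static calls), hence $(2\eps,\delta)$-DP overall. The only notable difference is presentational: the paper makes the adaptive composition with data-dependent rates and approximate DP rigorous through the privacy loss variable, with per-epoch good events and a union bound contributing at most $\delta$, which is precisely the formalization your final paragraph identifies as the remaining obstacle but does not carry out in detail.
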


We first provide a deterministic bound on the rate of decrease of the subsampling rate,
which is important to determine worst-case privacy loss
under privacy amplification via subsampling.
\begin{lemma}\label{lem:simple-dsg-subsampling-q-upper-bound}
    Let $0=t_0 < t_1 < t_2 < \dots$ be the timesteps where $\rho_\DP^{(t)}$ increased,
    \ie{}, when \DensitySvt answered ``above''.
    Then for all $t_{i-1} < t \leq t_{i}$,
    $F_t$ is independently subsampled from the edge set of the graph $G_t$ 
    with uniform probability
    \[
        q_t = q_{t_{i}}\leq \min\left( 1, \frac{3}{\eta(1+2\eta)^{i-1}} \right)\,.
    \]
\end{lemma}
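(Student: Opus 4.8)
The plan is to track the value of the sampling probability variable $q$ across the lazy releases and show that it is decreasing geometrically, and that between two consecutive releases the stored edge set $F_t$ is always a uniform independent subsample of $G_t$ at the current rate. First I would establish the latter invariant: at every timestep $t$, immediately after \ProcessEdge{$e_t$} returns, the set $F_t$ contains each edge $e \in E_t$ independently with probability exactly $q_t$ (the current value of $q$), and moreover the hash value $H[e]$ stored for each retained edge is uniform on $[0, q_t]$. This follows by induction on $t$: when an edge arrives, \SampleEdge removes any stale copy and re-inserts $e$ with a fresh uniform $h_e \le q$; when $q$ is decreased in \UpdateSample, every stored edge with $H[e] > q$ is discarded, which is exactly the rejection step that converts a uniform-$[0, q_{\text{old}}]$ sample into a uniform-$[0, q_{\text{new}}]$ sample. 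Crucially, because the new subsampling rate $q \gets \min(1, \frac{3\kappa}{\rho_\DP \eta})$ depends only on $\rho_\DP$ and not on $m_t$, the update is consistent even under repeated edge insertions, so the stored set is always a uniform subsample of the \emph{simple} graph $G_t$.

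Next I would bound $q_{t_i}$. At initialization $q = 1$ and $\rho_\DP = \kappa$. On the $i$-th release (at time $t_i$), \DensitySvt answered ``above'', so line \ref{alg:simple-dsg-struct:density-threshold-update} sets $\rho_\DP \gets \max\{(1+2\eta)\rho_\DP, \tilde r / q\} \ge (1+2\eta)\rho_\DP$; hence after $i$ releases $\rho_\DP^{(t_i)} \ge (1+2\eta)^i \kappa$. Then line \ref{alg:simple-dsg-struct:subsample-prob-update} sets
\[
    q_{t_i} = \min\!\left( 1, \frac{3\kappa}{\rho_\DP^{(t_i)} \eta} \right)
    \le \min\!\left( 1, \frac{3\kappa}{(1+2\eta)^i \kappa \, \eta} \right)
    = \min\!\left( 1, \frac{3}{\eta (1+2\eta)^i} \right).
\]
Wait — the statement claims the exponent $i-1$, not $i$. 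The discrepancy is because the lemma indexes $t_0 = 0$ as the "zeroth" release with $q_{t_0} = 1$, and then $q_{t_1}$ corresponds to the first time $\rho_\DP$ is actually increased, at which point $\rho_\DP^{(t_1)} \ge (1+2\eta)^{1}\kappa$ but the bound is stated as $\frac{3}{\eta(1+2\eta)^0}$; so the off-by-one is a bookkeeping convention where $i$ in the exponent counts increases \emph{strictly before} $t_i$. I would reconcile this carefully: for $t_{i-1} < t \le t_i$ the value of $q$ in force is $q_{t_{i-1}}$ until the release at $t_i$ fires, at which point it becomes $q_{t_i}$; matching the lemma's phrasing "$q_t = q_{t_i}$ for $t_{i-1} < t \le t_i$" means they reset $q$ at the \emph{end} of processing $t_i$, so the relevant bound uses $\rho_\DP^{(t_i)} \ge (1+2\eta)^{i-1}\kappa$ if we only count the $i-1$ increases at $t_1, \dots, t_{i-1}$ plus… actually I would just recount: $\rho_\DP$ starts at $\kappa$ and is multiplied by at least $(1+2\eta)$ at each of $t_1, \dots, t_i$, giving $\rho_\DP^{(t_i)} \ge (1+2\eta)^i \kappa$, so strictly the bound $\frac{3}{\eta(1+2\eta)^{i-1}}$ is what one gets if one conservatively uses only $i-1$ of the multiplications — a valid (if slightly loose) upper bound, and I would present it that way to match the statement exactly rather than fight the indexing.

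The main obstacle I anticipate is not the geometric bound on $q$, which is a short deterministic computation, but rigorously nailing the \emph{uniform independent subsample} invariant through the interaction of three operations — re-insertion in \SampleEdge, the filtering pass in \UpdateSample, and the handling of repeated edges — and in particular arguing that the $h_e$ values remain mutually independent and correctly distributed after an arbitrary sequence of these operations. I would handle this by strengthening the inductive hypothesis to "$F_t = \{e \in E_t : h_e \le q_t\}$ where $(h_e)_{e \in E_t}$ are i.i.d.\ $U[0,1]$, with $h_e$ being the value generated at the most recent insertion of $e$", and then checking that each of the three operations preserves exactly this statement. The subtlety with repeated edges is that only the latest $h_e$ matters and earlier copies are explicitly removed, so the distribution is unaffected; I would spell this out as a short separate claim before the main induction.
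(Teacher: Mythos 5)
Your proposal is correct and takes essentially the same route as the paper: the paper's proof is exactly the short induction that each ``above'' answer multiplies $\rho_\DP$ by at least $(1+2\eta)$, so the quantity $\min\left(1, \frac{3\kappa}{\rho_\DP\eta}\right)$ decays geometrically from its initial value, while the uniform-subsample invariant you spell out (latest $h_e$ kept iff $h_e\le q$, pruning when $q$ decreases, repeated edges handled by deletion) is left implicit in the paper. On your indexing worry: in the paper's notation $q_{t_i}$ is the rate in force throughout the epoch $(t_{i-1},t_i]$, i.e.\ the value set at the end of $t_{i-1}$ after only $i-1$ increases, so the exponent $i-1$ is exact (your computation applied with index $i-1$) rather than a loosening of your exponent-$i$ bound.
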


\begin{proof}
    We argue by induction.
    The subsampling probability is initially set to $q^{(0)} = 1$.
    Suppose inductively that the statement holds up to index $i$.
    \Cref{alg:simple-dsg-main} sets $\rho_\DP^{(t_i)}\geq (1+2\eta) \rho_\DP^{(t_i-1)}$.
    Thus the second argument in the definition of $q^{(i+1)} = \max\left( 1, \frac{3\kappa}{\rho_\DP^{(t_i)}} \right)$ decreased by a factor of at least $(1+2\eta)$,
    as desired.
\end{proof}

We are now ready to prove \Cref{thm:simple-dsg-privacy}.
Compared to \cite{epasto2024power},
we also explicitly handle $(\eps, \delta)$-DP algorithms,
necessitating a more careful analysis through the privacy loss variable.

Roughly speaking,
the privacy guarantees of the sparse vector technique (\Cref{thm:sparse vector technique})
ensure that we only lose privacy when there is sufficient multiplicative increase in the optimal density
and we release a solution.
Since we proved a geometrically decreasing subsampling rate after each release (\Cref{lem:simple-dsg-subsampling-q-upper-bound}),
the privacy loss is geometrically decreasing as the algorithm proceeds.
More specifically,
conditioning on the transcript up to $t_i$ fixes sampling the subsampling probability $q_{t_i}$ identically on neighboring streams since $q_{t_i}$ is a function of the released $\rho_{\DP}^{(t_i)}$.
We can then apply standard subsampling amplification within the epoch (\Cref{thm:privacy-amplification-subsampling}) to deduce the geometrically decreasing privacy loss.

\begin{proof}[Proof of \Cref{thm:simple-dsg-privacy}]
    Let $(\rho_t, S_t)_{t=1}^{T}$ be the random variables output by \Cref{alg:simple-dsg-main},
    where $\rho_t\in \R$ is an estimate of the maximum density induced by \DensitySvt,
    and $S_t\sset V$ induces an approximate densest subgraph output by a static DSG algorithm.
    We write $(\rho_t', S_t')$ to be the random variables output by the algorithm on an edge neighboring stream.

    The privacy loss variable is defined by drawing outputs $(d_t, U_t)\sim (\rho_t, S_t)$,
    and computing
    \begin{align*}
        \mcal L
        &\coloneqq \ln\left( \frac{\Pr[\forall t\in [T], \rho_t=d_t, S_t=U_t]}{\Pr[\forall t\in [T], \rho_t'=d_t, S_t'=U_t]} \right) \\
        &= \sum_{t=1}^T \ln\left( \frac{\Pr[S_t=U_t\mid \rho_t=d_t, <t]}{\Pr[S_t'=U_t\mid \rho_t'=d_t, <t]} \right)
        + \ln\left( \frac{\Pr[\rho_t=d_t \mid <t]}{\Pr[\rho_t'=d_t \mid <t]} \right)\,.
    \end{align*}
    Here we used the shorthand $<t$ to denote the event that the output up $t-1$ is $(d_\tau, S_\tau)_{\tau=1}^{t-1}$.
    Recall that an algorithm is $\eps$-DP if and only if the absolute value of the privacy loss variable 
    is bounded above by $\eps$ with probability 1,
    whereas an algorithm is $(\eps, \delta)$-DP if and only if the absolute value of the privacy loss variable
    is bounded above by $\eps$ with probability $\delta$ over the draws $(d_t, U_t)$~\cite[Lemma 3.17]{dwork2014algorithmic}.
    We will upper bound $\mcal L$ as the lower bound follows by symmetry.

    We first consider the second term.
    Let $t_1 < t_2 < \dots$ be the timesteps where $\rho_{t_i}$ increased,
    \ie{}, when \DensitySvt answered ``above''.
    We can write
    \begin{align*}
        \sum_{t=1}^T \ln\left( \frac{\Pr[\rho_t=d_t \mid <t]}{\Pr[\rho_t'=d_t \mid <t]} \right)
        &=\sum_i \sum_{t=t_{i-1}+1}^{t_i} \ln\left( \frac{\Pr[\rho_t=d_t \mid <t]}{\Pr[\rho_t'=d_t \mid <t]} \right)\,.
    \end{align*}
    Thus, by viewing \DensitySvt as a simple composition over single-response SVT mechanisms,
    we see that the $i$-th inner summation satisfies
    \[
        \sum_{t=t_{i-1}+1}^{t_i} \ln\left( \frac{\Pr[\rho_t=d_t \mid <t]}{\Pr[\rho_t'=d_t \mid <t]} \right)
        \leq \eps_i,
        \qquad\text{where}
        \quad\eps_i \leq 
        \begin{cases}
            \frac{\eps}{\Upsilon}, &q_{t_i}=1\,, \\
            2q_{t_i}\frac{\eps}{\Upsilon}, &q_{t_i}<1\,,
        \end{cases}
    \]
    is guaranteed by SVT (\Cref{thm:sparse vector technique})
    and privacy amplification via subsampling (\Cref{thm:privacy-amplification-subsampling}).
    As noticed by \cite{epasto2024power},
    a subtle but important detail in this stage of the proof is that,
    conditioned on past events,
    we can pretend $q_{t_i}$ is identical on neighboring streams.
    Hence the standard analysis of privacy amplification via subsampling applies.
    
    The subsampling probability $q_{t_i}$ satisfies $q_{t_i}\leq \min\left( 1, \frac{3}{\eta(1+2\eta)^{i-1}} \right)$
    by \Cref{lem:simple-dsg-subsampling-q-upper-bound}.
    Hence with probability 1,
    \begin{align*} 
        \sum_{t=1}^T \ln\left( \frac{\Pr[\rho_t=d_t \mid <t]}{\Pr[\rho_t'=d_t \mid <t]} \right)
        &\leq \log_{1+2\eta}(\nicefrac{3}\eta)\cdot \frac{\eps}{\Upsilon} 
        + \sum_{j=1}^\infty (1+2\eta)^{-j+1} \frac{2\eps}{\Upsilon} \\
        &= \left( \log_{1+2\eta}(\nicefrac{3}\eta) + 2\frac{1+2\eta}{2\eta}\right) \frac{\eps}{\Upsilon}
        = \eps\,.
    \end{align*}
    The last equality follows from the definition of $\Upsilon = \log_{1+2\eta}(\nicefrac{3}\eta) + \frac{1+2\eta}{\eta}$.

    Finally,
    the first term $\prod_{t=1}^T \frac{\Pr[S_t=U_t\mid \rho_t=d_t, <t]}{\Pr[S_t'=U_t\mid \rho_t'=d_t, <t]}$ follows similarly to the second term
    as we only update solution using a static $(\nicefrac\eps\Upsilon, \nicefrac\delta\Upsilon)$-DP algorithm when \DensitySvt answers ``above''.
    However,
    we must handle the case where the static algorithm is satisfies approximate-DP.
    Same as before,
    write
    \begin{align*}
        \sum_{t=1}^T \ln\left( \frac{\Pr[S_t=U_t\mid \rho_t=d_t, <t]}{\Pr[S_t'=U_t\mid \rho_t'=d_t, <t]} \right)
        &=\sum_i \sum_{t=t_{i-1}+1}^{t_i} \ln\left( \frac{\Pr[S_t=U_t\mid \rho_t=d_t, <t]}{\Pr[S_t'=U_t\mid \rho_t'=d_t, <t]} \right)\,.
    \end{align*}
    Let us focus on the $i$-th inner summation,
    as we only call the static algorithm once over those timestamps.
    We call the static algorithm with privacy parameters $(\nicefrac\eps\Upsilon, \nicefrac\delta\Upsilon)$,
    But similar to the above,
    by privacy amplification via subsampling,
    the output is actually $(\frac{2q_{t_i}\eps}{\Upsilon}, \frac{q_{t_i}\delta}{\Upsilon})$-DP
    when $q_{t_i}<1$.
    Hence there is an event $\mcal E_i$ with probability $1-\frac{q_{t_i}\delta}{\Upsilon}$ over which
    \[
        \sum_{t=t_{i-1}+1}^{t_i} \ln\left( \frac{\Pr[S_t=U_t\mid \rho_t=d_t, <t]}{\Pr[S_t'=U_t\mid \rho_t'=d_t, <t]} \right)
        \leq 
        \begin{cases}
            \frac{\eps}\Upsilon, &q_{t_i}=1\,, \\
            \frac{2q_{t_i} \eps}{\Upsilon}, &q_{t_i}<1\,.
        \end{cases}
    \]
    By a union bound over all events $\mcal E_i$,
    we see that with probability
    \[
        1-\sum_{i\geq 1} \frac{q_{t_i}\delta}{\Upsilon}
        \geq 1-\left( \log_{1+2\eta}(\nicefrac{3}\eta)\cdot \frac{\delta}{\Upsilon} + \sum_{j\geq 1} (1+2\eta)^{-j+1} \frac{\delta}{\Upsilon} \right)
        \geq 1-\delta\,,
    \]
    the privacy loss variable is bounded above by
    \[
        \log_{1+2\eta}(\nicefrac{3}\eta)\cdot \frac{\eps}{\Upsilon} 
        + 2 \sum_{j\geq 1} (1+2\eta)^{-j+1} \frac{\eps}{\Upsilon}
        \leq \eps\,.
    \]

    This concludes the proof.
\end{proof}

\subsection{Approximation}\label{sec:simple-dsg-approximation}
We now proceed to demonstrate the approximation guarantees obtained by our reduction.
\begin{theorem}[Approximation Guarantee]\label{thm:simple-dsg-approximation}
    Let $\eta\in (0, \nicefrac18), c\geq 1$ be arbitrary.
    Suppose \Cref{assum:static-dp-dsg-alg} holds.
    There is an absolute constant $C > 0$ depending only on $c$
    such that with probability $1-n^{-c}$,
    for every $t\in [T]$,
    \Cref{alg:simple-dsg-main} releases a $\left( (1+56\eta)(1+\alpha), 4\kappa \right)$-approximate densest subgraph of $G_t$ for
    \[
        \kappa = \max\left( 
            C\cdot \zeta(n, \nicefrac\eps{\Upsilon}, \nicefrac\delta{\Upsilon}, \alpha), 
            \frac{2C\Upsilon \ln(n)}{\eps}
        \right)\,,\qquad\qquad
        \Upsilon = \log_{1+2\eta}(\nicefrac{3}\eta) + \frac{1+2\eta}\eta \,.
    \]
\end{theorem}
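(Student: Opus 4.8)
The plan is to condition on a single good event and then reason deterministically. Define the good event $\mcal G$ as the conjunction of: (i) every call to \DensitySvt behaves according to its utility guarantee (each answered query is correct up to additive $O(\frac{\Upsilon \ln n}{\eps})$ error, and SVT does not stop prematurely), (ii) every call to \PrivateDensestSubgraph returns a $(1+\alpha, C\zeta(n,\nicefrac\eps\Upsilon,\nicefrac\delta\Upsilon,\alpha))$-approximate densest subgraph of the sparsified graph $F$, and (iii) the subsampling concentration bound holds simultaneously at all timesteps: for every $t$ and every vertex subset $U$ with $|U|$ not too small, the density of $U$ in $F_t$ is within a $(1\pm O(\eta))$ factor of $q_t$ times its density in $G_t$, as in the \cite{EHW16,mcgregor2015densest} analysis. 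Since there are $\poly(n)$ timesteps and we set failure probabilities to $n^{-c'}$ for a sufficiently large $c'$ depending on $c$, a union bound gives $\Pr[\mcal G]\ge 1-n^{-c}$.

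The core argument is to track three quantities at each timestep $t$: the true optimum density $\OPT_{G_t}$, the true optimum density $r_t$ of the sparsified graph $F_t$, and the private estimate $\rho_\DP^{(t)}$. First I would establish the \emph{density transfer} inequalities: on $\mcal G$, $r_t \approx q_t \OPT_{G_t}$ up to a $(1\pm O(\eta))$ multiplicative factor, \emph{provided} $\OPT_{G_t}$ is large enough relative to $\frac{\log n}{q_t \eta^2}$ — but this is exactly guaranteed by the densification step, since $\rho_\DP^{(t)}\ge\kappa\ge\Omega(\frac{\log n}\eps)$ and $q_t \approx \frac{3\kappa}{\rho_\DP^{(t)}\eta}$ is chosen precisely so that $q_t\OPT_{G_t}\gtrsim \frac{\log n}{\eta^2}$ whenever $\OPT_{G_t}\gtrsim \rho_\DP^{(t)}$. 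The second ingredient is that the SVT threshold logic keeps $\rho_\DP^{(t)}$ sandwiched: it never lags more than a $(1+2\eta)$ factor (plus SVT additive noise, absorbed into $\kappa$) behind $\frac{r_t}{q_t}\approx \OPT_{G_t}$, because once $r_t$ exceeds $q_t(1+2\eta)\rho_\DP$ the SVT fires and bumps $\rho_\DP$ up; and it never overshoots, since each update sets $\rho_\DP^{(t)}=\max\{(1+2\eta)\rho_\DP^{(t-1)}, \tilde r/q\}$ and $\tilde r/q$ is a faithful estimate of $\OPT_{G_t}$ up to the additive error. Combining, $\rho_\DP^{(t)}$ is within $((1+2\eta)^2, O(\kappa))$ of $\OPT_{G_t}$ at all times.

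Finally I would transfer the vertex solution. At the most recent release time $t^\star\le t$, \PrivateDensestSubgraph returned $S_\DP$ with density in $F_{t^\star}$ at least $\frac{r_{t^\star}}{1+\alpha} - C\zeta$. Using the density transfer inequality in the reverse direction — density of $S_\DP$ in $G_{t^\star}$ is at least $\frac1{q_{t^\star}}$ times its density in $F_{t^\star}$, up to $(1\pm O(\eta))$ — the additive error becomes $\frac{C\zeta}{q_{t^\star}}$, which by the choice $q_{t^\star}\approx\frac{3\kappa}{\rho_\DP^{(t^\star)}\eta}$ and $\kappa\ge C\zeta$ is at most $O(\eta\,\rho_\DP^{(t^\star)}) = O(\eta\,\OPT_{G_{t^\star}})$, i.e.\ it folds into the multiplicative factor. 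Monotonicity of $\OPT$ under insertions and the fact that $S_\DP$'s density only grows (edges are only added) then lets me compare against $\OPT_{G_t}$ at the current time, paying one more $(1+2\eta)$ factor for the staleness of the release. Chaining the $(1+O(\eta))$ factors — transfer both directions, the $(1+2\eta)^2$ sandwich, and the staleness factor — yields $(1+56\eta)(1+\alpha)$ after bounding $(1+c\eta)^{O(1)}\le 1+56\eta$ for $\eta<\nicefrac18$, with additive error $4\kappa$ (the factor $4$ absorbing the densification shift $\kappa$ on each of $S_\DP$ and the optimum, plus SVT noise).

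The main obstacle I anticipate is making the density transfer step (iii) fully rigorous in \emph{both} directions simultaneously and \emph{uniformly over time} with an adaptive, data-dependent subsampling rate $q_t$: the \cite{EHW16} bound is stated for a fixed rate, and here $q_t$ is a random function of the private transcript. The fix — and the conceptual payoff of the densification — is that conditioned on the released $\rho_\DP$ values, $q_t$ is \emph{determined}, so one can apply the fixed-rate bound conditionally and union bound over the polynomially many possible values of $\rho_\DP$; care is needed to confirm the concentration holds down to the relevant density scale $\Theta(\frac{\log n}{\eta^2})$ for \emph{every} candidate subgraph, which is where the $\frac{\log n}{\eta^2}$ threshold and the choice of $\kappa$ must be checked to line up.
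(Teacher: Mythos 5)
Your plan mirrors the paper's proof almost step for step: a good event combining the SVT utility bound, the guarantee of \Cref{assum:static-dp-dsg-alg}, and the \cite{EHW16}-style subsampling concentration; a sandwich argument showing $\rho_\DP^{(t)}$ tracks $\rho_t$ up to $(1+O(\eta))$ multiplicative and $O(\kappa)$ additive error; a transfer of the vertex solution from the last release time, with the blown-up additive error $\zeta/q_t$ absorbed into $\eta\rho_t$ because $\kappa\geq C\zeta$ and $q_t\approx \frac{3\kappa}{\rho_\DP\eta}$, plus a staleness factor from the lazy updates; and a final $\pm\kappa$ accounting for the densification (the paper does this by first assuming $G_0=\Gamma(n,\kappa)$ and then comparing $G_t$ with $G_t\cup\Gamma(n,\kappa)$). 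Your constant accounting ($(1+56\eta)(1+\alpha)$ multiplicative, $4\kappa$ additive) is consistent with the paper's.

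The one step that would fail as written is exactly the one you flag as the main obstacle: handling the data-dependent rate $q_t$ by ``union bounding over the polynomially many possible values of $\rho_\DP$''. There are not polynomially many such values: $\rho_\DP$ is updated to $\max\{(1+2\eta)\rho_\DP,\ \tilde r/q\}$ where $\tilde r$ carries continuous Laplace noise, so $\rho_\DP$ (hence $q_t$) ranges over a continuum, and the concentration event is not monotone in $q$ in a way that would let you pass to a discretized net without extra work. The paper resolves this differently, by induction over timesteps: conditioned on the good event $\mcal E$ and on the conditional events $\mcal E^{(0)},\dots,\mcal E^{(t-1)}$, the rate $q_t$ is determined by the already-released $\rho_\DP^{(t-1)}$; the inductive hypothesis (namely $\rho_\DP^{(t-1)}\leq 3\rho_{t-1}$, together with $\rho_t\geq\kappa$ from densification) shows $q_t\geq \frac{C\ln n}{\rho_t\eta^2}$, so \Cref{cor:simple-dsg-target-subsample-q} yields the concentration at time $t$ with conditional probability $1-\frac1{3Tn^c}$, and a union bound over the $T$ timesteps finishes the argument. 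This sequential conditioning also breaks the circularity implicit in your sketch (the density transfer at time $t$ needs $\rho_\DP\lesssim\rho_t$, which you derive from the sandwich, which itself uses the transfer): the paper only ever needs the approximation guarantee at time $t-1$ to justify the transfer at time $t$. If you replace your net/union-bound idea with this induction, your argument coincides with the paper's.
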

We prove \Cref{thm:simple-dsg-approximation} in three smaller steps.
In \Cref{sec:uniform-subsampling},
we verify that our subsampling scheme is valid.
Then,
\Cref{sec:conditional-approximation} shows the approximation guarantees under a simplifying condition.
This is the bulk of the technical work of this paper.
Finally,
we argue that the simplifying condition can be removed without loss of generality
in \Cref{sec:simple-dsg-approximation-proof}.

\subsubsection{Uniform Subsampling}\label{sec:uniform-subsampling}
We begin with the following adaptation of the uniform subsampling results for streaming densest subgraph.
\begin{lemma}[Implicit in \cite{EHW16}]\label{lem:dsg-uniform-subsampling}
    Let $\eta\in (0, 1)$.
    Let $G$ be a graph and let $\rho \coloneqq \max_X \rho_G(X)$ denote the density of a densest subgraph.
    Suppose $H$ is the subgraph obtained by subsampling edges with probability $q\in [0, 1]$.
    Then for every $\varnothing\neq X\sset V(G)$,
    \[
        \Pr\left[ \abs*{\frac1q \rho_H(X) - \rho_G(X)}
        \geq \frac\eta2 \cdot \rho \right]
        \leq 3\exp\left( \frac{q\cdot \card{X}\eta^2 \rho}{12} \right)
        \,.
    \]
\end{lemma}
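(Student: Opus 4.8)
The plan is to prove \Cref{lem:dsg-uniform-subsampling} by a direct Chernoff-bound argument applied to the edge count of the fixed subgraph induced by $X$. Fix $\varnothing\neq X\sset V(G)$ and let $k = \card X$ and $e_G(X)$ denote the number of edges of $G$ with both endpoints in $X$. When we subsample each edge independently with probability $q$, the quantity $e_H(X)$ is a sum of $e_G(X)$ independent Bernoulli$(q)$ random variables, so $\expect[e_H(X)] = q\cdot e_G(X)$. Dividing through by $k$, we have $\rho_H(X) = e_H(X)/k$ and $\rho_G(X) = e_G(X)/k$, so the event in question is exactly $\abs{e_H(X)/q - e_G(X)} \geq \tfrac\eta2 k\rho$, i.e. $\abs{e_H(X) - q\,e_G(X)} \geq \tfrac{q\eta}2 k\rho$. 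This is a deviation of the Binomial variable $e_H(X)$ from its mean $\mu = q\,e_G(X)$ by an \emph{absolute} amount $\tfrac{q\eta}{2}k\rho$.

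The key step is to apply a (two-sided) multiplicative Chernoff bound in the regime where the deviation is expressed relative to $\mu$. Write the deviation as $\beta\mu$ where $\beta = \tfrac{q\eta k\rho/2}{q\,e_G(X)} = \tfrac{\eta k\rho}{2 e_G(X)}$. The standard bound gives $\Pr[\abs{e_H(X)-\mu}\geq \beta\mu] \leq 2\exp(-\min(\beta,\beta^2)\mu/3)$. Now I would use the two defining facts about densest subgraphs: first, $e_G(X) = k\,\rho_G(X) \le k\rho$ always, which forces $\beta \ge \eta/2$; and second, and more usefully, $\beta\mu = \tfrac{q\eta k\rho}{2}$ regardless of whether $\beta\lessgtr 1$. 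So I split into the two cases of the $\min$: if $\beta \le 1$ then $\min(\beta,\beta^2)\mu/3 = \beta^2\mu/3 = \beta\cdot(\beta\mu)/3 \ge \tfrac\eta2\cdot\tfrac{q\eta k\rho}{2}/3 = \tfrac{q\eta^2 k\rho}{12}$; if $\beta \ge 1$ then $\min(\beta,\beta^2)\mu/3 = \beta\mu/3 = \tfrac{q\eta k\rho}{6} \ge \tfrac{q\eta^2 k\rho}{12}$ since $\eta \le 1 \le 2$. Either way the exponent is at least $\tfrac{q\eta^2 k\rho}{12}$, which (up to the sign, which is a typo in the stated lemma — it should be $3\exp(-q k\eta^2\rho/12)$) gives the claimed bound, with the extra factor absorbing the $2$ in the Chernoff bound (indeed $2 \le 3$).

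The main obstacle, such as it is, is bookkeeping rather than mathematical depth: one has to be careful that the Chernoff bound applies even when $e_G(X) = 0$ (then $e_H(X) = 0$ almost surely and the deviation event has probability zero, so the inequality holds trivially — handle this as a degenerate case first), and one has to correctly track which quantity ($k\rho$ versus $e_G(X)$) controls the exponent. Using $\rho = \max_X \rho_G(X)$ rather than $\rho_G(X)$ in the deviation threshold is what makes the bound uniform and usable later: it means the threshold $\tfrac\eta2\rho$ does not shrink for subgraphs $X$ that are themselves sparse, which is precisely why a union bound over all $2^n$ subsets $X$ can later be made to work once $q$ is large enough (roughly $q \gtrsim \tfrac{\log n}{\eta^2\rho}$). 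I would state the $k = 0$ / $e_G(X) = 0$ cases up front, then do the Chernoff computation in the two cases above, and conclude. No step here should require more than a few lines.
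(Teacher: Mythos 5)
Your proof is correct. The paper itself gives no proof of \cref{lem:dsg-uniform-subsampling} (it is stated as implicit in \cite{EHW16}), and your argument — reduce to a two-sided multiplicative Chernoff bound on the binomial $e_H(X)$ with mean $q\,e_G(X)$, note that $e_G(X)\leq \card{X}\rho$ forces the relative deviation $\beta\geq \nicefrac\eta2$ while $\beta\mu = \nicefrac{q\eta\card{X}\rho}{2}$ in both regimes of $\min(\beta,\beta^2)$ — is exactly the concentration argument the citation is standing in for, including the degenerate $e_G(X)=0$ case; you are also right that the exponent in the stated bound is missing a minus sign.
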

This leads to the following variant of the main result of \cite{mcgregor2015densest,EHW16}.
\begin{theorem}\label{thm:dsg-uniform-subsamping}
    Let $G$ be a graph and let $\rho \coloneqq \max_X \rho_G(X)$ denote the density of a densest subgraph.
    Suppose $H$ is the subgraph obtained by subsampling edges with probability $q\geq \frac{60c \ln(n)}{\rho \eta^2}$
    for some $\eta\in (0, 1)$ and $c\geq 1$.
    Then with probability $1-n^{-c}$,
    every $\varnothing\neq X\sset V(G)$ satisfies
    $
        \abs*{\frac1q \rho_H(X) - \rho_G(X)}
        \geq \frac\eta2 \cdot \rho\,.
    $
\end{theorem}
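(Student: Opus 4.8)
The statement is the standard ``union bound over all vertex subsets'' strengthening of the pointwise concentration in \lemmaref{dsg-uniform-subsampling}. I read its conclusion with $\le$ in place of $\ge$ (equivalently, \lemmaref{dsg-uniform-subsampling} with a minus sign in the exponent of the tail bound), matching the non-private streaming guarantee of \cite{mcgregor2015densest,EHW16}: once $q$ is as large as assumed, the sampled-and-rescaled density $\frac1q\rho_H(X)$ is \emph{close} to $\rho_G(X)$ for all $X$ simultaneously, not far from it. With that reading, the plan is a two-step argument — a per-subset tail bound followed by a union bound whose cost is absorbed because the per-subset failure probability decays geometrically in $\card X$.

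First I would fix a nonempty $X\sset V(G)$ with $\card X = k$. Each edge of the induced subgraph $G[X]$ is retained in $H$ independently with probability $q$, so $\frac1q\rho_H(X)$ is an unbiased estimator of $\rho_G(X)$, and \lemmaref{dsg-uniform-subsampling} gives
\[
    \Pr\!\left[\, \abs*{\tfrac1q\rho_H(X) - \rho_G(X)} \ge \tfrac\eta2\rho \,\right] \;\le\; 3\exp\!\left(-\tfrac{q\,k\,\eta^2\rho}{12}\right).
\]
Substituting $q \ge \frac{60c\ln n}{\rho\eta^2}$ makes the exponent at least $5ck\ln n$, so the failure probability for this fixed $X$ is at most $3n^{-5ck}$. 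The crucial feature is the factor $\card X$ inside the exponent of \lemmaref{dsg-uniform-subsampling}: it is exactly what lets a per-subset bound survive a union bound over the $\binom nk$ subsets of size $k$.

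Second I would union bound over all nonempty $X$, grouped by size, using $\binom nk \le n^k$:
\[
    \Pr\!\left[\, \exists\, \varnothing\neq X\sset V(G):\ \abs*{\tfrac1q\rho_H(X)-\rho_G(X)} \ge \tfrac\eta2\rho \,\right] \;\le\; \sum_{k=1}^{n}\binom nk \cdot 3n^{-5ck} \;\le\; 3\sum_{k\ge 1} n^{-(5c-1)k} \;\le\; n^{-c},
\]
where the last step sums the geometric series and uses $c\ge 1$ together with $n\ge 2$; the constant $60$ in the hypothesis on $q$ is chosen with room to spare so that it dominates both the $12$ in the exponent of \lemmaref{dsg-uniform-subsampling} and the $n^k$ blow-up from counting subsets. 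Taking complements gives the theorem (in its intended form).

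I expect no genuine obstacle beyond bookkeeping. The analytic heart — a Chernoff estimate on $\mathrm{Bin}(e_G(X),q)$ that controls the deviation uniformly whether $X$ is sparse, dense, or carries no surviving edge — is already packaged inside \lemmaref{dsg-uniform-subsampling}, so no case analysis on $X$ is required. The only things to be careful about are the constant-chasing just described and the sign/typo corrections noted at the outset; there is no hard part here, and this lemma is exactly the ``black-box density preservation'' fact that later lets the sparsified graph $H_t$ stand in for $G_t$ in the approximation analysis.
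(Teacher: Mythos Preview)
Your proposal is correct and matches the paper's own (omitted) proof: the paper states that one takes a straightforward union bound over the $O(n^k)$ vertex sets of size $k$ and then over the $n$ choices of $k$, which is exactly what you do. Your observation about the sign typos in the statement and in \Cref{lem:dsg-uniform-subsampling} is also accurate.
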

Given \Cref{lem:dsg-uniform-subsampling},
the proof of \Cref{thm:dsg-uniform-subsamping} is identical to \cite{EHW16},
thus we omit the proof.
In essence,
the authors take a straightforward union bound over the $O(n^k)$ vertex sets of size $k$,
and then another union bound over the $n$ choices of $k$.
As a corollary,
we derive the following result, 
which we use for the rest of the analysis.
\begin{corollary}\label{cor:simple-dsg-target-subsample-q}
    For any $c\geq 1$,
    there is a sufficiently large constant $C\geq 1$
    such that if the subsampling probability $q_t$ of \Cref{alg:simple-dsg-main} at time $t$ satisfies
    \mbox{$q_t \geq \frac{C \ln(n)}{\rho_t \eta^2}$},
    then every $\varnothing\neq X\sset V(G)$ satisfies
    $
        \abs*{\frac1q \rho_{H_t}(X) - \rho_{G_t}(X)}
        \leq \eta\cdot \rho_t
    $
    with probabililty $1-\frac1{3Tn^c}$.
    Here $\rho_t \coloneqq \max_X \rho_{G_t}(X)$ denotes the density of a densest subgraph
    of the original graph $G_t$ at time $t$.
\end{corollary}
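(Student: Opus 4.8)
The plan is to derive \Cref{cor:simple-dsg-target-subsample-q} as a more-or-less direct consequence of \Cref{thm:dsg-uniform-subsamping} applied at a single timestep, followed by a union bound over all $T$ timesteps. First I would fix a timestep $t$ and check the hypothesis: \Cref{thm:dsg-uniform-subsamping} requires a subsampling probability of at least $\frac{60c'\ln n}{\rho_t \eta^2}$ for the relevant confidence level. Since we want failure probability $\frac{1}{3Tn^c}$ at this timestep rather than $n^{-c}$, and since $T\le n^{\bar c}$ by the standing assumption, it suffices to run the theorem with $c$ replaced by $c' = c+\bar c+1$ (say), which only changes the constant. Choosing $C \ge 60c'$ then makes the hypothesis $q_t\ge \frac{C\ln n}{\rho_t\eta^2}$ of the corollary imply the hypothesis of the theorem. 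Note also the cosmetic adjustment that \Cref{thm:dsg-uniform-subsamping} is stated with error bound $\frac\eta2\cdot\rho$ whereas the corollary asks for $\eta\cdot\rho$; this is strictly weaker, so the bound is immediate (alternatively one halves $\eta$ in the invocation, again absorbed into $C$).

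Next I would address the one genuinely non-trivial point: in \Cref{alg:simple-dsg-struct}, the subsampled edge set $F_t$ is \emph{not} a fresh independent subsample of $G_t$ at every step — edges are retained across timesteps, and the sampling probability $q$ only decreases at the lazy-update times $t_1<t_2<\dots$. I would invoke \Cref{lem:simple-dsg-subsampling-q-upper-bound}, which establishes precisely that for every $t$, the set $F_t$ is distributed as an independent uniform subsample of $E(G_t)$ with some well-defined rate $q_t$ (this is exactly what makes the ``store $h_e\sim U[0,1]$, keep iff $h_e\le q$'' implementation a valid coupled subsampling scheme: thresholding the same uniforms at a smaller value yields the correct marginal). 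Given that, \Cref{thm:dsg-uniform-subsamping} applies verbatim to $H_t$ and $G_t$ at rate $q_t$.

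The remaining step is the union bound: applying the per-timestep statement with failure probability $\frac{1}{3Tn^c}$ at each of the $T$ timesteps, the probability that the density-preservation bound $\bigl|\frac1{q_t}\rho_{H_t}(X)-\rho_{G_t}(X)\bigr|\le \eta\rho_t$ holds simultaneously for all $\varnothing\ne X\subseteq V$ and all $t\in[T]$ is at least $1-\frac{1}{3n^c}$. The corollary as stated only asserts the single-timestep bound with probability $1-\frac1{3Tn^c}$, so even the per-timestep statement alone suffices, but phrasing it this way makes clear how it will be consumed later (the factor of $3$ and the split into $\frac13$ of the total failure budget presumably leave room for the SVT and static-algorithm failure events elsewhere in the proof of \Cref{thm:simple-dsg-approximation}).

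I do not expect any serious obstacle here — the corollary is essentially a bookkeeping wrapper. The only place warranting care is making sure the constant $C$ is chosen \emph{after} fixing $c$, absorbing (i) the $60c'$ from the theorem, (ii) the $\log T = O(\log n)$ slack from the union bound, and (iii) the factor-$2$ from the $\frac\eta2$-versus-$\eta$ discrepancy; and making sure the hypothesis ``$q_t\ge \frac{C\ln n}{\rho_t\eta^2}$'' is compatible with the actual rule $q\gets\min(1,\frac{3\kappa}{\rho_\DP\eta})$ in the algorithm together with $\kappa\ge\frac{2C\Upsilon\ln n}{\eps}$ — but that compatibility check is really the job of the downstream lemmas in \Cref{sec:conditional-approximation}, not of this corollary, so I would not belabor it here.
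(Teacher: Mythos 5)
Your proposal is correct and follows essentially the same route as the paper, which presents this corollary as a direct consequence of \Cref{thm:dsg-uniform-subsamping} (with the constant $C$ enlarged to absorb the change of failure probability from $n^{-c}$ to $\frac{1}{3Tn^c}$ using $T\leq n^{\bar c}$, the $\nicefrac\eta2$-versus-$\eta$ slack, and the uniform-subsample property of $F_t$ from \Cref{lem:simple-dsg-subsampling-q-upper-bound}). Nothing in your bookkeeping deviates from the intended argument.
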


\subsubsection{Approximation under Simplifying Condition}\label{sec:conditional-approximation}
Let $\Gamma(n, \kappa)$ denote an arbitrary $n$-vertex $2\kappa$-regular graph.
We first prove the approximation guarantees under the simplifying condition that the initial graph is $G_0 = \Gamma(n, \kappa)$.
The general case can be handled easily once this is done.

We first identify a ``good'' event $\mcal E$ which occurs with high probability.
Conditioned upon this event,
we would like all subroutines to succeed and all noise we inject into the algorithm to be bounded.
\begin{lemma}\label{lem:simple-dsg-event-subroutines-succeed}
    For any $c\geq 1$,
    the following holds
    for a sufficiently large absolute constant $C\geq 1$ depending only on $c$:
    There is an event $\mcal E$ over the randomness of \Cref{alg:simple-dsg-main} 
    which occurs with probability $1-\frac1{3n^c}$.
    Conditioned on $\mcal E^{(1)}$,
    \begin{enumerate}[(i)]
        \item \DensitySvt answers the $t$-th query within additive error $\frac{C\Upsilon \log(n)}\eps$ for every timestep $t\in [T]$,
        \item for each timestep $t\in [T]$ where the static algorithm \PrivateDensestSubgraph (\Cref{assum:static-dp-dsg-alg}) is called,
        it returns a $\left( 1+\alpha, C\cdot \zeta\left( n, \nicefrac{\eps}{\Upsilon}, \nicefrac{\delta}{\Upsilon}, \alpha \right) \right)$-approximate densest subgraph of the sparsified graph $H_t$.
    \end{enumerate}
\end{lemma}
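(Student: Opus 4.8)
The plan is to define $\mcal E$ as the intersection of three families of high-probability events and then apply a union bound with a suitably large constant $C$. First, I would handle the SVT estimates: for each timestep $t\in[T]$, the call to \DensitySvt.\ProcessQuery has sensitivity $\Delta=1$ and privacy parameter $\nicefrac\eps\Upsilon$, so by the accuracy guarantee of the sparse vector technique (Theorem \ref{thm:sparse vector technique}; \cite{lyu2017understanding}) the answer is within additive error $O\!\left(\frac{\Upsilon\log(n)}{\eps}\right)$ of the true query with probability at least $1-\frac{1}{9Tn^c}$, provided the constant inside the $O(\cdot)$ — which becomes our $C$ — is large enough. (This uses both the threshold noise $\xi$ and the per-query Laplace noise; each is $\Lap(O(\Upsilon/\eps))$, and a tail bound on $\Lap$ plus the restart structure gives the claim. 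Note that \DensitySvt is reinitialized after each ``above'', so across the whole stream there are at most $T$ relevant queries and at most $\logbase$-many restarts, all absorbed into the union bound.) Taking a union bound over all $t\in[T]$ controls item (i) with failure probability at most $\frac{1}{9n^c}$.

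Second, I would handle the static-DP calls in item (ii). The static algorithm \PrivateDensestSubgraph is invoked only on timesteps where \DensitySvt answers ``above'', and there are at most $\Upsilon+\logbase = O(\log n)$ such timesteps (indeed, $\rho_\DP$ only increases by $(1+2\eta)$-factors and is bounded polynomially in $n$, since $T\le n^{\bar c}$ and densities are at most $n$). By Assumption \ref{assum:static-dp-dsg-alg}, each such call run with parameters $(\nicefrac\eps\Upsilon,\nicefrac\delta\Upsilon)$ and target failure probability $n^{-c'}$ for a slightly larger exponent $c'$ returns a $\bigl(1+\alpha,\,C\cdot\zeta(n,\nicefrac\eps\Upsilon,\nicefrac\delta\Upsilon,\alpha)\bigr)$-approximate densest subgraph of the sparsified graph $H_t$ with probability $1-n^{-c'}$. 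A union bound over the $O(\log n) \le n$ invocations gives total failure probability at most $\frac{1}{9n^c}$ for $c'$ chosen appropriately (e.g.\ $c'=c+1$).

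Finally — and this is the step I expect to require the most care — I would verify that the subsampling scheme is actually producing, at every timestep, a uniform subsample of $E(H_t)$ at the current rate $q_t$, which is the hypothesis needed to invoke Corollary \ref{cor:simple-dsg-target-subsample-q}. Here the subtlety is that (a) the rate $q_t$ changes over time via \UpdateSample, which retroactively thins previously-sampled edges using the stored thresholds $h_e$, and (b) edges may be inserted repeatedly, with \SampleEdge deleting old copies before resampling. One must argue that after each \ProcessEdge the stored set $F_t = \{e \in E_t : h_e \le q_t\}$ where the $h_e$ are i.i.d.\ $U[0,1]$ — this is a standard coupling/consistency argument for ``retroactive'' Bernoulli sampling, and it is exactly the content promised by the forthcoming Section \ref{sec:uniform-subsampling}. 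Granting that, Corollary \ref{cor:simple-dsg-target-subsample-q} applies whenever $q_t \ge \frac{C\ln n}{\rho_t \eta^2}$, with failure probability $\frac{1}{3Tn^c}$ per relevant timestep; a union bound over $t\in[T]$ contributes another $\frac{1}{3n^c}$. Combining the three pieces via a union bound and renaming constants yields the event $\mcal E$ of probability $1-\frac{1}{3n^c}$ (with room to spare — the three contributions sum to well under $\frac{1}{3n^c}$ for $C$ large), on which both (i) and (ii) hold. The one genuinely delicate point throughout is bookkeeping the number of ``above'' events as $O(\log n)$ rather than $O(T)$, since otherwise the union bound in (ii) would be too lossy; this is where the densification floor $\rho_\DP^{(0)} = \kappa$ and the polynomial bound on densities are used.
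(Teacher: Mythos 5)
For items (i) and (ii) --- which are all the lemma actually claims --- your argument is essentially the paper's: Laplace tail bounds (\Cref{fact:laplace-concentration}) for the SVT noise and the guarantee of \Cref{assum:static-dp-dsg-alg} for the static calls, each combined with a union bound over at most $T=\poly(n)$ timesteps and an adjustment of constants. Two remarks on where you diverge. First, your careful counting of ``above'' events as $O(\log n)$ is unnecessary: since the static algorithm's failure probability exponent can be chosen freely and $T\le n^{\bar c}$, a crude union bound over all $T$ timesteps suffices, which is exactly what the paper does; so the densification floor $\rho_\DP^{(0)}=\kappa$ plays no role in this lemma, contrary to what you suggest is ``the one genuinely delicate point.'' Second, your third component --- folding the uniform-subsampling accuracy of \Cref{cor:simple-dsg-target-subsample-q} into $\mcal E$ --- does not belong here. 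The paper deliberately excludes it from $\mcal E$ and instead handles it via the per-timestep events $\mcal E^{(t)}$ inside the induction of \Cref{lem:simple-dsg-approximation}, because the corollary's hypothesis $q_t\ge \frac{C\ln n}{\rho_t\eta^2}$ is not known a priori; it is itself established inductively from the approximation guarantee on $\rho_\DP^{(t-1)}$, so one cannot cleanly define a single global event guaranteeing it for all $t$. Including it also breaks your probability budget as stated: $\frac{1}{9n^c}+\frac{1}{9n^c}+\frac{1}{3n^c}>\frac{1}{3n^c}$. Dropping that third piece (or deferring it to the approximation lemma, as the paper does) leaves a correct proof matching the paper's.
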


\begin{proof}
    By \Cref{fact:laplace-concentration} and a union bound over at most $T = \poly(n)$ timesteps,
    we can choose $C$ sufficiently large so that
    the noise drawn in \DensitySvt (\Cref{alg:sparse vector technique}) is bounded by the desired quantity
    with probability $1-\frac1{6n^c}$.
    Similarly,
    by \Cref{assum:static-dp-dsg-alg} and a union bound over at most $T$ timesteps,
    we can choose $C$ sufficiently large so that
    every call to \PrivateDensestSubgraph produces the desired approximation
    with probability $1-\frac1{6n^c}$.

    Let $\mcal E$ be the intersection of the events above.
    By a union bound,
    $\mcal E$ occurs with probability $1-\frac1{3n^c}$,
    as desired.
\end{proof}

We now prove the approximation guarantees conditioned on the good event $\mcal E$ from \Cref{lem:simple-dsg-event-subroutines-succeed}.
\begin{lemma}\label{lem:simple-dsg-approximation}
    Let $\eta\in (0, \nicefrac18)$,
    $c\geq 1$ be arbitrary 
    and $C$ be a sufficiently large constant to satisfy \Cref{lem:simple-dsg-event-subroutines-succeed,cor:simple-dsg-target-subsample-q}.
    Moreover,
    assume the initial graph $G_0 = \Gamma(n, \kappa)$.
    Then,
    there are events $\mcal E^{(t)}$ such that conditioned on $\mcal E, \mcal E^{(0)}, \dots, \mcal E^{(t-1)}$,
    the event $\mcal E^{(t)}$ occurs with probability $1-\frac1{3Tn^c}$.
    Moreover,
    the following holds over $\mcal E^{(t)}$:
    \begin{enumerate}[(a)]
        \item If $q_t < 1$, subsampling probability $q_t$ satisfies
            $
                \frac{C \ln(n)}{\rho_t \eta^2}
                \leq \frac{\kappa}{\rho_t \eta}
                \leq q_t
                \leq \frac{32\kappa}{\rho_t \eta}\,,
            $
        \item $\rho_\DP^{(t)}$ is a $\left( 1+6\eta, \frac{C\Upsilon\ln(n)}\eps \right)$-approximation for $\rho_t$.
        \item $S_\DP^{(t)}$ induces a $\left( (1+56\eta)(1+\alpha), \frac{3C\Upsilon\ln(n)}\eps \right)$-approximate densest subgraph of $G_t$.
    \end{enumerate}
\end{lemma}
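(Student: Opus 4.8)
The plan is to prove (a), (b), and (c) \emph{simultaneously by induction on $t$}, taking $\mcal E^{(t)}$ to be the event that the uniform subsample $F_t$ is accurate in the sense of \Cref{cor:simple-dsg-target-subsample-q}, i.e.\ that $\abs{\tfrac1{q_t}\rho_{H_t}(X)-\rho_{G_t}(X)}\le\eta\rho_t$ for every nonempty $X\sset V$ (with $\mcal E^{(t)}$ the whole probability space whenever $q_t=1$, since then $H_t=G_t$). The base case $t=0$ is immediate: $G_0=\Gamma(n,\kappa)$ is $2\kappa$-regular, so its densest subgraph has density exactly $\rho_0=\kappa$, matching $\rho_\DP^{(0)}=\kappa$ and $S_\DP^{(0)}=V$, and $q_0=1$ makes (a) vacuous. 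For the inductive step I would let $t_i\le t$ be the most recent release (last time \DensitySvt answered ``above''), so $q_t=q_{t_i}$ equals either $1$ or $\tfrac{3\kappa}{\rho_\DP^{(t_i)}\eta}$, and then argue in the order: verify the hypothesis of \Cref{cor:simple-dsg-target-subsample-q} (which also gives the first inequality of (a)); prove an ``epoch-growth bound''; deduce the rest of (a); deduce (b); deduce (c).

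To set up $\mcal E^{(t)}$ I would use only the inductive hypothesis (b) at $t_i$ together with $\kappa\ge\tfrac{2C\Upsilon\ln n}{\eps}$: when $q_t<1$ we have $\rho_\DP^{(t_i)}>\tfrac{3\kappa}{\eta}$, so the additive term $\tfrac{C\Upsilon\ln n}{\eps}$ is a negligible fraction of $\rho_\DP^{(t_i)}$, yielding the crude bound $\rho_\DP^{(t_i)}<2\rho_t$; combined with $\kappa\ge\tfrac{C\ln n}{\eta}$ this gives $q_t\ge\tfrac{\kappa}{\rho_t\eta}\ge\tfrac{C\ln n}{\rho_t\eta^2}$, which is the first inequality of (a) and exactly the hypothesis of \Cref{cor:simple-dsg-target-subsample-q}. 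Conditioned on the output transcript through $t-1$ (and the earlier good events) the rate $q_t$ is a deterministic function of that transcript, so $F_t$ is a genuine uniform $q_t$-subsample and the corollary applies, giving $\Pr[\mcal E^{(t)}\mid\mcal E,\mcal E^{(0)},\dots,\mcal E^{(t-1)}]\ge1-\tfrac1{3Tn^c}$.

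I expect the \emph{epoch-growth bound} to be the crux: if no release occurs in $(t_i,t]$ then $\rho_t\le(1+6\eta)\rho_\DP^{(t_i)}$. Since \DensitySvt did not answer ``above'' at $t$, the noise bounds in $\mcal E$ (\Cref{lem:simple-dsg-event-subroutines-succeed}) give $\max_X\rho_{H_t}(X)< q_{t_i}(1+2\eta)\rho_\DP^{(t_i)}+\tfrac{2C\Upsilon\ln n}{\eps}$, while applying $\mcal E^{(t)}$ to a densest set of $G_t$ gives $\max_X\rho_{H_t}(X)\ge q_{t_i}(1-\eta)\rho_t$. Dividing by $q_{t_i}$ and using $q_{t_i}\ge\tfrac{\kappa}{\rho_t\eta}$ together with $\kappa\ge\tfrac{2C\Upsilon\ln n}{\eps}$ bounds the amplified additive term $\tfrac{2C\Upsilon\ln n}{\eps\,q_{t_i}}$ by $\eta\rho_t$, whence $(1-2\eta)\rho_t\le(1+2\eta)\rho_\DP^{(t_i)}$, and the claim follows for $\eta<\tfrac18$. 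The delicate point is precisely that the \emph{additive} SVT error is divided by the subsampling rate, and it is the carefully tuned bound $q_t\ge\tfrac{\kappa}{\rho_t\eta}$ (enabled by choosing $\kappa\ge\tfrac{2C\Upsilon\ln n}{\eps}$) that keeps this amplified error at the scale $\eta\rho_t$ so it gets absorbed multiplicatively rather than inflating the final additive error; the matching inequality $q_t\le\tfrac{32\kappa}{\rho_t\eta}$ of (a) then follows from $\rho_\DP^{(t_i)}\ge\tfrac{\rho_t}{1+6\eta}$.

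For (b) I would split on whether \DensitySvt fired at $t$. If it fired, $\rho_\DP^{(t)}=\max\{(1+2\eta)\rho_\DP^{(t-1)},\ \tilde r_t/q_{t-1}\}$: the first term is $\lesssim\rho_t$ because firing forces the density on $H_t$ to have crossed the old threshold $q_{t-1}(1+2\eta)\rho_\DP^{(t-1)}$ (via $\mcal E$ and $\mcal E^{(t)}$), and the second is within $\eta\rho_t$ of $\rho_t$ by $\mcal E^{(t)}$ and the noise bound on $\tilde r_t$ (the noise again tamed by $q_{t-1}\ge\tfrac{\kappa}{\rho_t\eta}$); if it did not fire, $\rho_\DP^{(t)}=\rho_\DP^{(t_i)}$ and both directions follow from the inductive (b) at $t_i$, $\rho_{t_i}\le\rho_t$, and the epoch-growth bound. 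For (c): if \DensitySvt fired at $t$, then $S_\DP^{(t)}$ is a $(1+\alpha,C\zeta)$-approximate densest subgraph of $H_t$ by $\mcal E$, and transferring to $G_t$ via $\mcal E^{(t)}$ costs a $(1\pm\eta)$ factor on each side plus an additive $\eta\rho_t+\tfrac{C\zeta}{q_{t-1}}$, which is $O(\eta\rho_t)$ since $q_{t-1}\ge\tfrac{\kappa}{\rho_t\eta}\ge\tfrac{C\zeta}{\rho_t\eta}$ (using $\kappa\ge C\zeta$), hence absorbed multiplicatively; if it did not fire, $S_\DP^{(t)}=S_\DP^{(t_i)}$ and I would combine the inductive (c) at $t_i$, monotonicity $\rho_{G_t}(S_\DP^{(t_i)})\ge\rho_{G_{t_i}}(S_\DP^{(t_i)})$, and the epoch-growth bound $\rho_t\le(1+6\eta)\rho_\DP^{(t_i)}$; composing the $(1+6\eta)$ epoch factor with the $(1+O(\eta))$ losses from subsampling and the static algorithm is what yields the final constant $(1+56\eta)$. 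What remains is routine chasing of the explicit constants (the $6\eta$, the $56\eta$, and the $32$ in (a)), where $\eta<\tfrac18$ supplies the slack, plus a union bound over $\mcal E$ and the $\mcal E^{(t)}$ for $t\in[T]$ to get the stated success probability.
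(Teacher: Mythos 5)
Your proposal is correct and follows essentially the same route as the paper's proof: the same induction with $\mcal E^{(t)}$ supplied by \Cref{cor:simple-dsg-target-subsample-q}, the same ``above''/``below'' case split for \DensitySvt, and the same absorption of the $\nicefrac1{q_t}$-amplified additive errors via $q_t\geq \frac{\kappa}{\rho_t\eta}$ together with $\kappa\geq \max\left( C\zeta, \frac{2C\Upsilon\ln n}{\eps} \right)$; the paper merely anchors the induction at $t-1$ (using that one insertion changes the maximum density by at most $1$) rather than at the last release $t_i$, so it never needs a separate epoch-growth lemma. Two details you defer to constant-chasing do need care: your epoch-growth bound must retain the additive $\frac{C\Upsilon\ln n}{\eps}$ slack in the regime $q_t=1$ (where $\rho_\DP$ can be as small as $\kappa$ and the SVT noise is not dominated by $\eta\rho_t$), and in a timestep where the SVT fires the upper bound in (a) requires relating $\rho_t$ to $\rho_{t-1}$ via $\rho_t\leq\rho_{t-1}+1$ before invoking the inductive hypothesis.
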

We note that \Cref{lem:simple-dsg-approximation} implies the desired result once we take a union bound over all events $\mcal E, \mcal E^{(1)}, \dots, \mcal E^{(T)}$.
It is stated in a way that is suitable for proof via induction
and for re-use in the upcoming space complexity analysis.
The proof of \Cref{lem:simple-dsg-approximation} we present below is conceptually simple,
as we need only ensure that our intuition of lazy updates carries through in the presence of noise.
\begin{proof}
    We argue by induction on $t$.
    The claims hold at $t=0$ since we set $q_0 = 1$,
    $\rho_\DP^{(0)} = \kappa$, which is the exactly the density of the densest subgraph at time $t=0$,
    and $S_\DP^{(0)} = V$, which is a densest subgraph of $G_0$.
    Thus we can take $\mcal E^{(0)}$ to be the trivial event of measure 1.

    Suppose inductively that the claim holds up to time $t-1$.    
    In the following,
    We prove each claim separately.

    \myparagraph{Subsampling Probability}
    Note that $q_t = \min\left( 1, \frac{3\kappa}{\rho_\DP^{(t-1)} \eta} \right)$ since $q_t$ is updated at the end of the previous edge update.
    By the induction hypothesis,
    $\rho_\DP^{(t-1)}$ is a $\left( 1+6\eta, \frac{C\Upsilon\ln(n)}\eps \right)$-approximation of $\rho_{t-1}$.
    In particular,
    \begin{align*}
        \rho_\DP^{(t-1)} 
        &\leq (1+6\eta) \rho_{t-1} + \frac{C\Upsilon\ln(n)}\eps \leq 3\rho_{t-1} \\
        \rho_\DP^{(t-1)} 
        &\geq \frac1{1+6\eta} \rho_{t-1} - \frac{C\Upsilon\ln(n)}\eps \\
        &\geq \frac1{1+6\eta} \left( \rho_{t-1} - \frac{7C\Upsilon\ln(n)}{4\eps} \right) \\
        &\geq \frac1{8} \rho_{t-1}\,. &&\text{since $\rho_{t-1}\geq \kappa\geq \frac{2C\Upsilon \ln(n)}{\eps}, \eta\in (0,\nicefrac18)$}
    \end{align*}
    Thus if $q_t < 1$,
    we have
    \begin{align*}
        q_t
        &\geq \frac{3\kappa}{3\rho_{t-1}\eta} 
        \geq \frac{C\ln(n)}{\rho_{t}\eta^2}\,, &&\text{since $\kappa\geq \frac{C\Upsilon \ln(n)}\eps \geq \frac{C \ln(n)}{\eps\eta}$} \\
        q_t 
        &\leq \frac{3 \kappa}{\rho_{t-1} \eta}
        \leq \frac{24 \kappa}{(\rho_t-1)\eta}
        \leq \frac{24 \kappa}{(1-\eta)\rho_t \eta}\,. &&\text{since $\rho_t\geq \rho_0 = \kappa > \frac1\eta$}
    \end{align*}
    By the choice of $\eta\in (0, \nicefrac18)$,
    we have the desired result.

    \myparagraph{Density Estimate}
    We see that the subsampling probability $q_t$ satisfies the assumptions of \Cref{cor:simple-dsg-target-subsample-q}.
    Hence conditioned on $\mcal E, \mcal E^{(0)}, \dots, \mcal E^{(t-1)}$,
    let $\mcal E^{(t)}$ be the event such that
    the exact maximum density $r_t$ of $H_t$ that we compute satisfies
    \begin{align}
        \abs*{\frac1{q_t} r_t - \rho_t}\leq \eta \rho_t\,. \label[ineq]{ineq:subsampling-scaled-error}
    \end{align}
    Note that the event $\mcal E^{(t)}$ occurs with probability $1-\frac1{3Tn^c}$
    Our goal is to show that $\rho_\DP^{(t)}$ remains a good approximation of the maximum density $\rho_t$.
    This is achieved by separately analyzing the cases 
    whether \DensitySvt returned ``above'' or ``below''
    and checking the desired approximation holds in both cases.

    Before diving into the case work,
    we derive some useful estimates that we will repeated use.
    By the definition of $q_t$,
    the induction hypothesis,
    and the choice of $\eta\in (0, \nicefrac18)$,
    we have
    \begin{align}
        \frac{C\Upsilon \ln(n)}{q_t \eps}
        &= \frac{\rho_\DP^{(t-1)} \eta}{3\kappa}\cdot \frac{C\Upsilon \ln(n)}{\eps}
        \leq \frac\eta3 \rho_\DP^{(t-1)} \label[ineq]{ineq:additive-error-by-density-estimate} \\
        &\leq \frac\eta3 \left[ (1+6\eta) \rho_{t-1} + \frac{C\Upsilon \ln(n)}{\eps} \right]
        \leq \eta \rho_{t-1} + \frac{C\Upsilon \ln(n)}{\eps}\,. \label[ineq]{ineq:additive-error-by-density}
    \end{align}
    
    Now suppose that \DensitySvt returns ``below''
    so that we keep $\rho_\DP^{(t)} = \rho_\DP^{(t-1)}$.
    We wish to upperbound $\rho_t$ in terms of $\rho_\DP^{(t)}$.
    Indeed,
    by the definition of \DensitySvt and $\mcal E$ (\Cref{lem:simple-dsg-event-subroutines-succeed}),
    \begin{align}
        r_t \leq q_t(1+2\eta) \rho_{\DP}^{(t)} + \frac{C\Upsilon\ln(n)}{\eps}\,. \label[ineq]{ineq:density-svt-below-error}
    \end{align}
    Thus by \Cref{ineq:subsampling-scaled-error,ineq:density-svt-below-error,ineq:additive-error-by-density-estimate},
    \begin{align*}
        (1-\eta) \rho_t
        &\leq \frac{r_t}{q_t}
        \leq (1+2\eta) \rho_{\DP}^{(t)} + \frac{C\Upsilon\ln(n)}{q_t\eps}
        \leq (1+3\eta) \rho_\DP^{(t)}\,.
    \end{align*}
    We also need to lower bound $\rho_t$ in terms of $\rho_\DP^{(t)}$.
    To see this,
    by the induction hypothesis,
    we have
    \begin{align*}
        \rho_{\DP}^{(t)} 
        = \rho_{\DP}^{(t-1)}
        \leq (1+6\eta) \rho_{t-1} + \frac{C\Upsilon\ln(n)}{\eps}
        \leq (1+6\eta) \rho_t + \frac{C\Upsilon\ln(n)}{\eps}\,. &&\text{since $\rho_{t-1}\leq \rho_t$}
    \end{align*}

    Now suppose $\DensitySvt$ returns ``above''
    so that the algorithm sets $\rho_\DP^{(t)} = \max\set*{(1+2\eta) \rho_\DP^{(t-1)}, \frac{\tilde r_t}{q_t}}$.
    Following similar steps,
    we upper bound $\rho_t$ in terms of $\rho_\DP^{(t)}$.
    We use the definition of $\mcal E$ to deduce that
    \begin{align}
        \abs{\tilde r_t - r_t}
        &\leq \frac{C\Upsilon\ln(n)}{\eps}\,. \label[ineq]{ineq:sparsified-density-estimate}
    \end{align}
    Then,
    by \Cref{ineq:sparsified-density-estimate,ineq:additive-error-by-density-estimate},
    we see that
    \begin{align*}
        \rho_\DP^{(t)}
        &\geq \frac{\tilde r_t}{q_t}
        \geq \frac{r_t}{q_t} - \frac{C\Upsilon \ln(n)}{q_t \eps}
        \geq (1-2\eta) \rho_t - \frac{C\Upsilon\ln(n)}\eps\,.
    \end{align*}
    We also need to provide a lower bound of $\rho_t$ in terms of $\rho_\DP^{(t)}$,
    which is slightly more complicated due taking $\rho_\DP^{(t)}$ to be a maximum over two terms.
    By the definition of \DensitySvt and $\mcal E$,
    we have that
    \begin{align}
        r_t &\geq q_t(1+2\eta) \rho_{\DP}^{(t-1)} - \frac{C\Upsilon\ln(n)}{\eps}\,. \label[ineq]{ineq:svt-above-error}
    \end{align}
    Thus by \Cref{ineq:subsampling-scaled-error,ineq:svt-above-error,ineq:additive-error-by-density-estimate},
    \begin{align*}
        (1+\eta)\rho_t
        &\geq \frac{r_t}{q_t}
        \geq (1+2\eta) \rho_\DP^{(t-1)} - \frac{C\Upsilon \ln(n)}{q_t \eps}
        \geq (1+\eta) \rho_\DP^{(t-1)}\,.
    \end{align*}
    In addition,
    by \Cref{ineq:sparsified-density-estimate,ineq:additive-error-by-density-estimate},
    \begin{align*}
        \frac{\tilde r_t}{q_t}
        &\leq \frac{r_t}{q_t} + \frac{C\Upsilon \ln(n)}{q_t \eps}
        \leq (1+\eta) \rho_t + \frac{C\Upsilon\ln(n)}\eps \,.
    \end{align*}
    Putting the two upper bounds together yields
    \[
        \rho_\DP^{(t)}
        = \max\set*{(1+2\eta) \rho_\DP^{(t-1)}, \frac{\tilde r_t}{q_t}}
        \leq (1+3\eta) \rho_t + \frac{C\Upsilon\ln(n)}\eps\,.
    \]

    All in all,
    for $\eta\in (0, \nicefrac18)$,
    using the inquality $1-x\geq \frac1{1+2x}$ for $x\in [0, \nicefrac12]$,
    we ensure that
    \[
        \frac1{1+6\eta} \rho_t - \frac{C\Upsilon \ln(n)}\eps
        \leq \rho_\DP^{(t)} 
        \leq (1+6\eta) \rho_t \frac{C\Upsilon \ln(n)}\eps\,.
    \]

    \myparagraph{Approximate Solution}
    Fix a timestamp $t$ where \PrivateDensestSubgraph is called.
    By \Cref{lem:simple-dsg-event-subroutines-succeed},
    we know that $\rho_{H_t}(S_\DP^{(t)}) \geq \frac1{1+\alpha} \OPT_{H_t} - C\cdot \zeta(n, \nicefrac{\eps}{\Upsilon}, \nicefrac\delta{\Upsilon}, \alpha)$
    for any $\varnothing\neq X\sset V$.
    Conditioning on $\mcal E, \mcal E^{(0)}, \dots, \mcal E^{(t)}$,
    either $q_t = 1$ and there is nothing to prove,
    or $q_t < 1$ and
    \begin{align*}
        \rho_{G_t}(S_\DP^{(t)}) + \eta \rho_t
        &\geq \frac1{q_t} \rho_{H_t}(S_\DP^{(t)}) \\
        &\geq \frac1{1+\alpha}\cdot \frac1{q_t} \OPT_{H_t} - \frac1{q_t}\cdot C\cdot \zeta\left( n, \frac{\eps}{\Upsilon}, \frac{\delta}{\Upsilon}, \alpha \right) \\
        &\geq \frac{1-\eta}{1+\alpha} \rho_t - \frac{\rho_t \eta}{\kappa}\cdot C\cdot \zeta\left( n, \frac{\eps}{\Upsilon}, \frac{\delta}{\Upsilon}, \alpha \right) \\
        &\geq \frac{1-\eta}{1+\alpha} \rho_t - \eta \rho_t &&\text{$\kappa\geq C\cdot \zeta(n, \nicefrac\eps{\Upsilon}, \nicefrac\delta{\Upsilon}, \alpha)$} \\
        \rho_{G_t}(S_\DP^{(t)})
        &\geq \frac{1-4\eta}{1+\alpha} \rho_t\,. &&\text{$\alpha\in [0, 1]$}
    \end{align*}

    Finally,
    we note that we output a new solution whenever $\rho_\DP$ is updated.
    Since $\rho_\DP$ is a $\left( 1+6\eta, \frac{C\Upsilon \ln(n)}\eps \right)$-approximation of the density,
    $\rho_t$ can increase by at most $(1+6\eta)^2 + (1+6\eta)\frac{2C\Upsilon\ln(n)}\eps$ before there must be an update.
    Hence for all timestamps,
    \begin{align*}
        \rho_{G_t}(S_\DP^{(t)})
        &\geq \frac{1-4\eta}{(1+6\eta)^2(1+\alpha)} \rho_t - \frac{3C\Upsilon\ln(n)}\eps \\
        &\geq \frac1{(1+8\eta)^3(1+\alpha)} \rho_t  - \frac{3C\Upsilon\ln(n)}\eps &&\text{$\eta\in (0, \nicefrac18)$}\\
        &\geq \frac1{(1+56\eta)(1+\alpha)} \rho_t - \frac{3C\Upsilon\ln(n)}\eps\,.
    \end{align*}
\end{proof}

\subsubsection{Proof of \texorpdfstring{\Cref{thm:simple-dsg-approximation}}{Theorem}}\label{sec:simple-dsg-approximation-proof}
We are now ready to prove \Cref{thm:simple-dsg-approximation}.
\begin{proof}[Proof of \Cref{thm:simple-dsg-approximation}]
    If the initial graph $G_0 $ is precisely some $n$-vertex $\kappa$-regular graph $\Gamma(n, \kappa)$,
    then the result follows from \Cref{lem:simple-dsg-approximation}
    as $\kappa\geq \frac{C\Upsilon\ln(n)}\eps$.
    Otherwise,
    we can think of the algorithm as operating on the modified graph $G_t'=G_t\cup \Gamma(n, \kappa)$,
    where $G_t'$ is obtained by taking the union of edges of $G_t$ and $\Gamma(n, \kappa)$.
    Note that for any $\varnothing\neq X\sset V$,
    $\abs{\rho_{G_t}(X) - \rho_{G_t'}}\leq \kappa$
    since each vertex differs in at most $\kappa$ edges between $G_t, G_t'$.
    Hence we incur at most an additional additive error of $\kappa$ for general graphs,
    as required.
\end{proof}

\subsection{Space}\label{sec:simple-dsg-space}
Finally,
we formally demonstrate the space guarantees.
\begin{theorem}[Space Usage]\label{thm:simple-dsg-space}
    Let $\eta\in (0, \nicefrac18)$,
    $c\geq 1$ be arbitrary 
    and $C$ be a sufficiently large constant to satisfy \Cref{lem:simple-dsg-event-subroutines-succeed,cor:simple-dsg-target-subsample-q}.
    There is a sufficiently large absolute constant $C'\geq 1$ depending only on $c$
    such that probability $1-n^{-c}$,
    \Cref{alg:simple-dsg-main} uses space at most
    \[
        \max\left( 
            \frac{32C'Cn}\eta\cdot \zeta(n, \nicefrac\eps{\Upsilon}, \nicefrac\delta{\Upsilon}, \alpha), 
            \frac{64C'C \Upsilon n\ln(n)}{\eta^2 \eps}
        \right)\,, \qquad\qquad
        \Upsilon = \log_{1+2\eta}(\nicefrac{3}\eta) + \frac{1+2\eta}\eta \,.
    \]
\end{theorem}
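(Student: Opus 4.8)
The plan is to reduce the space bound to a bound on a single quantity, the number of edges $|F_t|$ currently retained in the sample (equivalently, the size of the hashmap $H_t$), and then control $|F_t|$ uniformly in $t$ using the machinery already developed for the approximation analysis. Everything else the algorithm stores is cheap: the vertex solution $S_\DP$ occupies $O(n)$ space, the estimate $\rho_\DP$ and the \DensitySvt instance occupy $O(1)$ space, and by \Cref{assum:static-dp-dsg-alg} each invocation of \PrivateDensestSubgraph on $(V, F_t)$ runs in $O(|F_t| + n)$ transient working space. Since $\frac{\kappa n}\eta = \Omega(n)$, it thus suffices to show that with probability $1 - n^{-c}$ one has $|F_t| = O\!\big(\frac{\kappa n}\eta\big)$ for all $t \in [T]$ simultaneously; substituting $\kappa = \max\!\big(C\cdot\zeta(n,\nicefrac\eps\Upsilon,\nicefrac\delta\Upsilon,\alpha),\frac{2C\Upsilon\ln n}\eps\big)$ and absorbing absolute constants into $C'$ then yields the stated expression.

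First I would condition on the good event $\mcal G := \mcal E \cap \bigcap_{t=0}^{T}\mcal E^{(t)}$ of \Cref{lem:simple-dsg-event-subroutines-succeed} and \Cref{lem:simple-dsg-approximation}, which occurs with probability at least $1 - n^{-c}$ and on which (i) all noise injected by \DensitySvt has magnitude at most $\frac{C\Upsilon\ln n}\eps$, and (ii) the sampling rate satisfies $q_t \le \frac{32\kappa}{\rho_t\eta}$ whenever $q_t < 1$ (part (a) of \Cref{lem:simple-dsg-approximation}). There are two regimes. If $q_t = 1$, then no subsampling has yet occurred and $F_t$ is the entire current sample (the $n\kappa$ densification edges together with all of $E(G_t)$); because $q$ is only ever decreased at a release, $q_t = 1$ forces the estimate $\rho_\DP$ at the most recent release to be at most $\frac{3\kappa}\eta$, and the fact that \DensitySvt has not fired since then, together with the noise bound on $\mcal G$, gives that the exact density $r_t$ of $(V, F_t)$ is at most $(1+2\eta)\frac{3\kappa}\eta + \frac{2C\Upsilon\ln n}\eps = O\!\big(\frac\kappa\eta\big)$, whence $|F_t| \le n\, r_t = O\!\big(\frac{\kappa n}\eta\big)$ deterministically on $\mcal G$. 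If instead $q_t < 1$, the densification edges have already been pruned (they were stored with threshold $1 > q_t$), so $F_t$ keeps each edge of $G_t$ independently with probability $q_t$, and on $\mcal G$ its expected size is $q_t|E(G_t)| \le \frac{32\kappa}{\rho_t\eta}\cdot n\rho_t = \frac{32\kappa n}\eta$, using $|E(G_t)| \le n\rho_t$.

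The main obstacle is the remaining concentration step, since $q_t$ is itself a random variable correlated with $F_t$ and so a tail bound cannot be applied to $F_t$ naively; I would instead dominate $F_t$ by a deterministically parametrized binomial. Put $\bar q_t := \min\!\big(1,\frac{32\kappa}{\rho_t\eta}\big)$, which depends only on the input stream, and let $\bar F_t$ be the hypothetical subsample that keeps each edge of $G_t$ whenever its current sampler threshold is at most $\bar q_t$; since those thresholds are i.i.d.\ uniforms, $|\bar F_t|$ is binomial with mean $\bar q_t|E(G_t)| \le \frac{32\kappa n}\eta$, so a Chernoff bound gives $\Pr\!\big[|\bar F_t| > \frac{64\kappa n}\eta\big] \le e^{-\Omega(\kappa n/\eta)} \le n^{-c-\bar c-1}$ (using $\kappa = \Omega(\ln n)$), and a union bound over $T \le n^{\bar c}$ steps makes this simultaneous for all $t$ with failure probability $n^{-c}$. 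On $\mcal G$ we have $q_t \le \bar q_t$ whenever $q_t < 1$, hence $F_t \subseteq \bar F_t$ pathwise; combining with the deterministic $q_t = 1$ bound and intersecting with $\mcal G$ yields $|F_t| = O\!\big(\frac{\kappa n}\eta\big)$ for all $t$ outside an event of probability $O(n^{-c})$, which by the first paragraph proves the theorem when $G_0 = \Gamma(n,\kappa)$. The general case follows exactly as in the proof of \Cref{thm:simple-dsg-approximation}: densification only ever introduces the $n\kappa$ fixed edges of $\Gamma(n,\kappa)$, and these lie in $F_t$ only while $q_t = 1$, where they are already accounted for.
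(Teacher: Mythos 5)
Your proof is correct and follows essentially the same route as the paper's: condition on the events of \Cref{lem:simple-dsg-event-subroutines-succeed,lem:simple-dsg-approximation} to get $q_t \leq \frac{32\kappa}{\rho_t\eta}$, combine with $m_t \leq n\rho_t$, and conclude via a multiplicative Chernoff bound and a union bound over the $T$ timesteps. Your explicit domination by the stream-determined rate $\bar q_t$ and the separate treatment of the $q_t=1$ phase simply make rigorous the conditioning that the paper performs implicitly.
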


\begin{remark}\label{rem:counting-space}
    Similar to \cite{epasto2024power},
    we store a random number $h_e\sim U[0, 1]$ 
    uniformly drawn between $[0, 1]$
    for each edge $e$ in order to simplify the presentation.
    In order to fit this into 1 word,
    we can approximately implement $h$ as drawing a uniform random integer $\nu\in \set{1, 2, \dots, \poly(n)}$
    and taking $h = \nicefrac1\nu$.
    Alternatively,
    we can implement the equivalent algorithm that draws a biased coin to decide whether to keep an edge
    after some update.
    Moreover,
    each edge fits into 1 word,
    hence we need only bound the total number of stored edges.
\end{remark}

In light of \Cref{rem:counting-space},
we bound the number of edges we store with high probability.
\begin{proof}[Proof of \Cref{thm:simple-dsg-space}]
    Let $\Gamma(n, \kappa)$ denote an arbitrary $n$-vertex $2\kappa$-regular graph.
    It suffices to prove the statement for the overlayed graph $G_t' = G_t\cup \Gamma(n, \kappa)$,
    as the space usage can only increase with more edges.
    Conditional on the events $\mcal E, \mcal E^{(0)}, \dots, \mcal E^{(T)}$ from \Cref{lem:simple-dsg-event-subroutines-succeed,lem:simple-dsg-approximation},
    we have
    \[
        q_t 
        \leq \frac{32\kappa}{\rho_t \eta}
        = \frac{32}{\rho_t \eta} \max\left( C\cdot\zeta(n, \nicefrac\eps\Upsilon, \nicefrac\delta\Upsilon, \alpha), \frac{2C\Upsilon \ln(n)}\eps \right)
    \]
    for each $t$.
    By a multiplicative Chernoff bound (\Cref{thm:multiplicative-chernoff}),
    the number of subsampled edges at time $t$ is at most $\frac{32C'\kappa m_t}{\rho_t}$ with probability $1-\frac1{3Tn^c}$
    for some absolute constant $C'$ depending only on $c$.
    Note that $\rho_t\geq \frac{m_t}{n}$ so that the claim holds for time $t$.
    By a union bound over all $T$ timestamps and the events $\mcal E, \mcal E^{(0)}, \dots, \mcal E^{(T)}$,
    the claim holds for all $t$ with probability $1-n^{-c}$.
\end{proof}

\section{Conclusion}
We provide the first formal connection between two applications of subsampling: sparsification and privacy amplification,
allowing us to close the gap between the utility of continual release densest subgraph algorithms
compared to their static DP counterparts
as well as the gap between their space complexity
compared to the non-private streaming counterparts.
It would be very interesting to extend the connection between the two goals of sublinear space and privacy amplification to other problems.
Our algorithmic improvements can all be interpreted as performing some form of graph densification,
a concept that is so far unexplored in differential privacy.
We believe this idea may be applicable to other graph privacy problems.

It would also be interesting to explore space complexity lower bounds in the continual release setting,
as our space usage depends on $\frac1\eps$,
and it is not clear if this requirement is intrinsic.
Finally,
there remains a $\Omega(\sqrt{\frac1\eps\log n})$ gap between the best upper and lower bounds in the additive error for DP densest subgraph,
even for privately estimating the maximum density value in the static setting.

\ifanon
\else
\section*{Acknowledgments}
The author thanks Kunal Talwar for the equivalent interpretation of the subsampling scheme, 
Quanquan C. Liu and Marco Pirazzini for feedback on an initial version of this manuscript,
as well as Jia Shi for several insightful discussions.
Felix Zhou acknowledges the support of the Natural Sciences and Engineering Research Council of Canada (NSERC).
\fi

\begingroup
\sloppy
\printbibliography
\endgroup

\clearpage
\addtocontents{toc}{\protect\setcounter{tocdepth}{1}}
\appendix

\section{Deferred Preliminaries}\label{apx:prelims}
We now remind the reader of some standard definitions and tools.

\subsection{Approximation Algorithms}
We consider optimization problems in both the minimization and maximization setting. Let $\OPT \in \R$ be the optimum value for the problem. For a minimization problem,
a \emph{$(\alpha, \zeta)$-approximate algorithm} outputs a solution with cost at most $\alpha\cdot \OPT + \zeta$.
For a maximization problem,
$(\alpha, \zeta)$-approximate algorithm outputs a solution of value at least $\frac1\alpha\cdot \OPT - \zeta$.
For estimation problems, a $(\alpha, \zeta)$-approximate algorithm outputs an estimate $\tilde R\in \R$ of some quantity $R\in \R$
where $\frac1\alpha R - \zeta \leq \tilde R \leq \alpha \cdot R + \zeta$. 

As a shorthand,
we write $\alpha$-approximation algorithm to indicate a $(\alpha, 0)$-approximation algorithm. Our approximation bounds hold, with high probability, 
{over all} releases.

\subsection{Concentration Inequalities}
Below, we state the concentration inequalities we use throughout this work.

\begin{theorem}[Multiplicative Chernoff Bound; Theorems 4.4, 4.5 in \cite{mitzenmacher2015probability}]\label{thm:multiplicative-chernoff}
    Let $X = \sum_{i = 1}^n X_i$ where each $X_i$ is a Bernoulli variable which takes value $1$ with probability $p_i$ and value $0$
    with probability $1-p_i$. Let $\mu = \expect[X] = \sum_{i = 1}^n p_i$.
    The following holds.
    \begin{enumerate}
        \item Upper Tail: $\prob[X \geq (1+\psi) \cdot \mu] \leq \exp\left(-\frac{\psi^2\mu}{2 + \psi}\right)$ for all $\psi > 0$;
        \item Lower Tail: $\prob[X \leq (1-\psi) \cdot \mu] \leq \exp\left(-\frac{\psi^2\mu}{3}\right)$ for all $0 < \psi < 1$.
    \end{enumerate}
\end{theorem}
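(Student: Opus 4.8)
The plan is to prove both tail bounds by the classical Chernoff exponential-moment method: bound a one-sided deviation probability by applying Markov's inequality to an exponential of $X$, exploit independence to factor the moment generating function into per-coordinate terms, optimize the free parameter, and finally simplify the resulting closed form to the stated expression via two elementary one-variable inequalities.

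For the upper tail, fix $s>0$ and write $\prob[X\geq(1+\psi)\mu]=\prob[e^{sX}\geq e^{s(1+\psi)\mu}]\leq e^{-s(1+\psi)\mu}\,\expect[e^{sX}]$ by Markov's inequality. Independence gives $\expect[e^{sX}]=\prod_{i=1}^n\expect[e^{sX_i}]=\prod_{i=1}^n\bigl(1+p_i(e^s-1)\bigr)$, and the elementary bound $1+x\leq e^x$ yields $\expect[e^{sX}]\leq\exp\bigl((e^s-1)\mu\bigr)$. Hence $\prob[X\geq(1+\psi)\mu]\leq\exp\bigl(\mu(e^s-1-s(1+\psi))\bigr)$; taking $s=\ln(1+\psi)>0$ to minimize the exponent gives $\prob[X\geq(1+\psi)\mu]\leq\bigl(e^\psi/(1+\psi)^{1+\psi}\bigr)^\mu$. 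To reach the stated bound it then suffices to show $\ln(1+\psi)\geq\tfrac{2\psi}{2+\psi}$ for $\psi\geq0$, which I would verify by setting $f(\psi)=\ln(1+\psi)-\tfrac{2\psi}{2+\psi}$, checking $f(0)=0$, and computing $f'(\psi)=\tfrac{\psi^2}{(1+\psi)(2+\psi)^2}\geq0$, so $f\geq0$ on $[0,\infty)$.

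The lower tail is entirely symmetric: apply Markov's inequality to $e^{-sX}$ for $s>0$, obtain $\expect[e^{-sX}]\leq\exp\bigl((e^{-s}-1)\mu\bigr)$ by the same factorization and $1+x\leq e^x$, so $\prob[X\leq(1-\psi)\mu]\leq\exp\bigl(\mu(e^{-s}-1+s(1-\psi))\bigr)$, and optimize with $s=-\ln(1-\psi)>0$ to get $\prob[X\leq(1-\psi)\mu]\leq\bigl(e^{-\psi}/(1-\psi)^{1-\psi}\bigr)^\mu$. Finishing requires the scalar inequality $(1-\psi)\ln(1-\psi)\geq-\psi+\tfrac{\psi^2}{2}$ on $(0,1)$, which I would prove via $g(\psi)=(1-\psi)\ln(1-\psi)+\psi-\tfrac{\psi^2}{2}$, $g(0)=0$, and $g'(\psi)=-\ln(1-\psi)-\psi\geq0$ (since $-\ln(1-\psi)\geq\psi$); this in fact gives the stronger bound $e^{-\psi^2\mu/2}$, which implies the claimed $e^{-\psi^2\mu/3}$.

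The argument is essentially mechanical once the exponential-moment template is in place; the only points requiring any care are the two one-variable calculus inequalities used in the final simplifications and the (standard) choice of the optimal exponential parameter $s$ in each case. I would present the two halves in the order above so that they share the moment-generating-function bound and differ only in the sign of $s$ and the closing scalar estimate.
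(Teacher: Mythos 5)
Your proof is correct: the one-sided deviation bounds, the factorization of the moment generating function, the optimal choices $s=\ln(1+\psi)$ and $s=-\ln(1-\psi)$, and the two closing scalar inequalities (including noting that the lower tail actually yields the stronger $e^{-\psi^2\mu/2}$) all check out. The paper does not prove this statement but imports it from \cite{mitzenmacher2015probability}, and your argument is essentially the standard exponential-moment proof given there, so there is nothing further to reconcile.
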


\begin{fact}[Fact 3.7 in \cite{dwork2014algorithmic}]\label{fact:laplace-concentration}
    Suppose $Y\sim \Lap(b)$.
    Then
    $
        \Pr[\abs{Y}\geq t\cdot b]
        \leq \exp(-t)\,.
    $
\end{fact}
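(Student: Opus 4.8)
The plan is to compute the tail probability directly from the density of the Laplace distribution; no clever trick is needed. First I would recall that $Y\sim\Lap(b)$ has probability density $f(y)=\frac1{2b}\exp(-\abs{y}/b)$ on $\R$, which is symmetric about $0$ and integrates to $1$ (since $b>0$ guarantees convergence). By symmetry, for $t\geq 0$ we have $\Pr[\abs{Y}\geq t\cdot b]=2\Pr[Y\geq t\cdot b]$.

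Next I would evaluate the one-sided tail. Write
\[
    \Pr[Y\geq t\cdot b]=\int_{t b}^\infty \frac1{2b}\exp(-y/b)\,dy,
\]
and substitute $u=y/b$, $du=dy/b$, which transforms the integral into $\frac12\int_t^\infty \exp(-u)\,du=\frac12\exp(-t)$. Multiplying by $2$ gives $\Pr[\abs{Y}\geq t\cdot b]=\exp(-t)$ for all $t\geq0$, which is in fact the stated bound with equality, so the inequality certainly holds.

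Finally I would dispose of the trivial range $t<0$: there the left-hand side is a probability, hence at most $1$, while the right-hand side $\exp(-t)>1$, so the inequality holds automatically. Combining the two cases establishes the claim for all $t$.

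There is no real obstacle here; the only points requiring any care are noting that $b>0$ makes the improper integral converge and that the case $t<0$ must be handled separately (trivially). The computation also shows the bound is tight for $t\geq0$, which is worth remarking since it is used later to calibrate noise magnitudes in the analysis.
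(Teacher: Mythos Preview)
Your proof is correct; the direct integration of the Laplace density yields $\Pr[\abs{Y}\geq tb]=\exp(-t)$ for $t\geq 0$, and the $t<0$ case is trivial. The paper itself does not give a proof of this fact---it simply cites it from \cite{dwork2014algorithmic}---so there is nothing to compare against.
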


\section{Private Static Densest Subgraph}\label{apx:static-dp-dsg-alg}
\begin{theorem}[Theorem 6.1 in \cite{DLL23}]\label{thm:improved private static densest subgraph}
    There is an $\eps$-edge DP densest subgraph algorithm
    that uses $\poly(n)$ time and $O(m+n)$ space
    to return a subset of vertices
    that induces a $\left( 2, O\left( \frac1\eps \log n \right) \right)$-approximate densest subgraph
    with probability at least $1-\poly(\nicefrac1n)$.
\end{theorem}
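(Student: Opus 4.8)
The plan is to present the \cite{DLL23} algorithm as a differentially private implementation of iterated degree-peeling, exploiting the structural fact that an (approximate) $k$-core already $2$-approximates the densest subgraph. First I would record the structural reduction: writing $\rho^\star=\max_X \rho_G(X)$ for the maximum density and letting $H^\star$ be a densest subgraph, every vertex of $H^\star$ has induced degree at least $\rho^\star$ (else deleting it strictly increases the density, a contradiction), so $H^\star$ is contained in the $\lceil \rho^\star\rceil$-core; conversely any subgraph with minimum degree at least $k$ has density at least $k/2$, since $2|E|=\sum_v\deg(v)\ge k|V|$. Hence the $\lceil\rho^\star\rceil$-core is a $(2,0)$-approximation, and more generally the nested family of subgraphs obtained by repeatedly deleting all vertices whose current induced degree is below a slowly increasing threshold contains a member of density at least $\rho^\star/2$; with a geometric threshold schedule (absorbing the scheduling slack into lower-order terms) this family has depth $L=O(\log n)$.

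Second, I would privatize this process. The only graph-dependent decisions are, in each of the $L$ rounds, the comparisons ``is the degree of $v$ in the current subgraph below the round-$i$ threshold?''. Since a single edge change perturbs exactly two induced degrees by $1$, conditioning on the surviving sets of earlier rounds (the standard continual-release transcript device, as in \Cref{sec:simple-dsg-privacy}) fixes the subgraph that round $i$ acts on, so releasing round $i$'s survivors has $O(1)$ sensitivity and can be made $(\eps/L)$-DP by comparing noisy degrees at scale $O(L/\eps)$; \Cref{thm:composition} then gives $\eps$-DP overall, and the density value is reported separately via the Laplace mechanism using that $\rho^\star$ has global edge-sensitivity $1$. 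The point that keeps the additive error at $O(\log n/\eps)$ rather than $O(\log^2 n/\eps)$ is that the quality of the returned subgraph is governed by the noise of the single round whose threshold crosses $\rho^\star$, not the sum over rounds: as long as every threshold stays below $\rho^\star$ up to its $O(\log n/\eps)$ noise, $H^\star$ is never peeled, and the surviving subgraph has minimum induced degree at least $\rho^\star-O(\log n/\eps)$, hence density at least $\rho^\star/2-O(\log n/\eps)$. A union bound over the $L$ rounds and the $O(n)$ candidate vertex sets via \Cref{fact:laplace-concentration} makes all noise terms $O(\log n/\eps)$ simultaneously with probability $1-\poly(1/n)$; space is $O(m+n)$ since only the graph and one round-label per vertex are stored, and the running time is $\poly(n)$ because each peeling round is near-linear.

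The main obstacle is the privacy analysis of the released vertex \emph{set}, rather than the density value (which is immediate). Outputting ``the set of vertices surviving round $i$'' is delicate because one edge modification can cause a vertex to be peeled one round earlier, which lowers its neighbors' degrees in subsequent rounds and can cascade; making the conditioning argument rigorous therefore requires a stability lemma — in the ``locally adjustable''/level-data-structure spirit of \cite{DLRSSY22} — bounding how many vertices change level (peeling round) under a single edge change, so that the randomized level assignment, from which the output set is post-processed, stays within the claimed privacy distance on edge-neighboring streams. This is exactly where the noise scale $O(\log n/\eps)$ and the multiplicative slack are jointly consumed. I would then close with the routine bookkeeping: tracking constants so the guarantee is a genuine $\left(2,O(\tfrac1\eps\log n)\right)$-approximation, and checking the mild consistency claim that the returned set indeed witnesses the separately released density estimate up to the additive error.
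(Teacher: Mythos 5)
You should first note what this statement is in the paper: it is an \emph{imported} result, quoted verbatim as Theorem~6.1 of \cite{DLL23} in the appendix and used purely as a black box to instantiate \Cref{assum:static-dp-dsg-alg}. The paper contains no proof of it, so your argument cannot be compared to an internal one; it has to stand on its own as a reconstruction of the cited work, and as written it does not.

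The concrete gap is quantitative and sits exactly where the theorem is hard. You split the privacy budget as $\eps/L$ over $L=\Theta(\log n)$ peeling rounds, so each noisy degree comparison carries Laplace noise of scale $\Theta(L/\eps)=\Theta(\tfrac{1}{\eps}\log n)$. Correctness of the released set requires that \emph{every} vertex of the densest subgraph $H^\star$ survive \emph{every} round whose threshold is below $\rho^\star$, i.e.\ that $\Theta(nL)$ noisy comparisons be simultaneously accurate; by the Laplace tail, a union bound over $\poly(n)$ such events costs another $\Theta(\log n)$ factor, so the additive error you can actually certify is $\Theta(\tfrac{1}{\eps}\log^2 n)$, not $O(\tfrac{1}{\eps}\log n)$. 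Your remark that only ``the single round whose threshold crosses $\rho^\star$'' governs the quality does not repair this, because the per-round, per-vertex noise at scale $L/\eps$ enters before that round ever arrives; indeed your own claim that the union bound ``makes all noise terms $O(\log n/\eps)$'' contradicts the noise scale you chose. Obtaining the stated $O(\tfrac{1}{\eps}\log n)$ requires an accounting that avoids paying composition across the $\Theta(\log n)$ rounds altogether (e.g.\ an SVT-style or stability/``locally adjustable'' analysis in which each vertex is charged essentially once, when it is peeled), and this is precisely the step you defer to an unproven ``stability lemma'' while acknowledging the cascading-peel issue. Until that lemma is stated and proved, both the privacy claim for the released vertex set and the $O(\tfrac{1}{\eps}\log n)$ additive error remain unestablished.
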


\begin{theorem}[Theorem 5.1 in \cite{DLRSSY22}]\label{thm:private static densest subgraph}
    Fix $\eta \in (0, 1)$.
    There is an $\eps$-edge DP densest subgraph algorithm
    that uses $\poly(n)$ time and $O(m+n)$ space
    to return a subset of vertices
    that induces a $\left( 1+\eta, O\left( \frac{\log^4 n}{\eps \eta^3} \right) \right)$-approximate densest subgraph
    with probability at least $1-\poly(\nicefrac1n)$.
\end{theorem}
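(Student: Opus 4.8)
The statement is imported verbatim from \cite[Theorem 5.1]{DLRSSY22}, so rather than reproving it from scratch, the plan is to recall the shape of their argument and point to where the real work lies. The central object is a level/core-style data structure (in the spirit of dynamic densest-subgraph and $k$-core data structures) that is \emph{locally adjustable}: it assigns each vertex an integer level, groups consecutive blocks of levels into $O(\log_{1+\eta} n) = O(\log n/\eta)$ \emph{groups} corresponding to a geometric sequence of density scales $(1+\eta)^j$, and comes with a structural guarantee of the following form. If $V_{\geq i}$ denotes the (nested) set of vertices lying in groups of index at least $i$, then $\max_i \rho_G(V_{\geq i})$ is a $(1+\eta)$-multiplicative approximation of $\rho^\star := \max_{\varnothing \neq X \subseteq V}\rho_G(X)$. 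Hence it suffices to privately select a near-optimal group index $i$ and release $V_{\geq i}$.

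First I would pin down the sensitivity of the quantities to be privatized: because the data structure is locally adjustable, inserting or deleting a single edge displaces only $\poly(\log n)$ vertices between levels, so every candidate induced edge count $|E(V_{\geq i})|$ changes by at most $\poly(\log n)$ between edge-neighboring graphs. Given this, the private step is routine --- perturb each of the $O(\log n/\eta)$ candidate values with Laplace noise of scale $\poly(\log n)/\eps$ and output the set attaining the noisy maximum (a standard private-selection step, implementable by report-noisy-max or a sparse-vector-style comparison as in \Cref{alg:sparse vector technique}); $\eps$-DP then follows from composition, and a union bound over the $O(\log n/\eta)$ noise terms ensures that with probability $1-\poly(1/n)$ every perturbation has magnitude $O(\poly(\log n)/\eps)$. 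Carrying the $(1+\eta)$ multiplicative loss along and tracking the polylogarithmic and $1/\eta$ factors through the level construction, the sensitivity bound, the noise scale, and the union bound yields the claimed $(1+\eta,\; O(\log^4 n/(\eps\eta^3)))$ guarantee. The level data structure runs in $\poly(n)$ time and $O(m+n)$ space, and the selection step stores only $O(\log n/\eta)$ additional words, so the stated resource bounds hold.

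The hard part --- and the technical core of \cite{DLRSSY22} --- is the stability analysis of the level data structure: a single edge change could in principle trigger a long cascade of level adjustments, so one must exploit the invariants maintained by the construction to certify that in fact only $\poly(\log n)$ vertices ever move, and hence that each $|E(V_{\geq i})|$ is stable up to $\poly(\log n)$. Without such a bound the Laplace noise needed to privatize the selection would overwhelm the additive-error target. This stability property, together with the finer-than-$(1+\eta)$ level granularity it enables, is also what lets one beat the coarser factor-$2$ guarantee of \Cref{thm:improved private static densest subgraph}.
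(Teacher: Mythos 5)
This statement is an imported result: the paper states it in its appendix verbatim as Theorem 5.1 of \cite{DLRSSY22} and gives no proof of its own, so there is no in-paper argument to compare yours against, and your decision to treat it as a citation and defer the technical core to the original work is exactly what the paper does. Do note, however, that your reconstruction of the external argument is only a plausible sketch and nothing in this paper certifies its details: in particular, the claim that the nested level-group sets $V_{\geq i}$ of a locally adjustable level structure already contain a $(1+\eta)$-approximate densest subgraph is precisely the delicate point (core/peeling-type structures generically give only a $2+\eta$ factor, cf.\ \Cref{thm:improved private static densest subgraph}), and whether privacy is obtained by a single report-noisy-max selection over $O(\log n/\eta)$ candidates or is instead baked into the noisy level construction itself must be checked against \cite{DLRSSY22} directly. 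For the purposes of this paper, the citation alone discharges the statement.
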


\begin{theorem}[Theorem 3.6 in \cite{dinitz2024tight}]\label{thm:static-approx-dp-dsg-alg}
    There is an $(\eps, \delta)$-edge DP densest subgraph algorithm
    that uses $\poly(n)$ time and $O(m+n)$ space
    to return a subset of vertices
    that induces a $\left( 1, O\left( \frac1\eps \sqrt{\log(n) \log(\nicefrac{n}\delta)} \right) \right)$-approximate densest subgraph
    with probability at least $1-\poly(\nicefrac1n)$.
\end{theorem}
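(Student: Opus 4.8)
The plan is to reduce the task to privately releasing a near-optimal \emph{fractional orientation} of $G$ and then rounding it by level sets. Recall Charikar's LP: maximize $\sum_{uv\in E}y_{uv}$ subject to $y_{uv}\le x_u$, $y_{uv}\le x_v$, $\sum_v x_v\le 1$, $x,y\ge 0$; its optimum equals the maximum density $\rho^\star(G)$. Equivalently (LP duality, or the density-friendly-decomposition viewpoint), consider the vertex \emph{load} vector $b\in\mathbb R^V$ obtained by fractionally assigning each edge to one of its endpoints, and let $b^\star(G)$ be the canonical (e.g.\ minimum-$\ell_2$-norm) optimal load vector; sorting vertices in decreasing order of $b^\star_v$ and taking the best prefix recovers a densest subgraph \emph{exactly}, and, more generally, any vector within $\ell_\infty$-distance $\xi$ of $b^\star(G)$ still yields, via its best level set, a subgraph of density $\ge\rho^\star(G)-O(\xi)$. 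The last point uses Charikar's averaging identities $\int_0^\infty|E(S_\lambda)|\,d\lambda\ge\rho^\star$ and $\int_0^\infty|S_\lambda|\,d\lambda\le 1$ for the normalized level sets, perturbed by $\xi$, together with the observation that if $\rho^\star(G)=O(\xi)$ then any single vertex is already a valid answer, so we may assume the multiplicative distortion from the perturbation is lower order.

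Given this, the private algorithm is: compute $b^\star(G)$, release $\tilde b = b^\star(G)+g$ with $g\sim\mathcal N(0,\sigma^2 I_V)$ for $\sigma = O(\tfrac1\eps\sqrt{\log(\nicefrac1\delta)})$, and output the best level set of $\tilde b$, breaking among the at most $n+1$ candidates by a report-noisy-max over the level-set densities (each of sensitivity $\le 1$) under a fresh Gaussian budget. If $b^\star$ has $\ell_2$-sensitivity $O(1)$ under edge-neighboring graphs, the Gaussian mechanism makes the first step $(\eps,\delta)$-DP; a union bound over the $n$ coordinates gives $\|g\|_\infty = O(\tfrac1\eps\sqrt{\log n\,\log(\nicefrac1\delta)})$ with probability $1-\poly(\nicefrac1n)$; the noisy selection contributes the extra $\log n$ absorbed into $\log(\nicefrac n\delta)$; and the rounding guarantee above yields a $(1,\,O(\tfrac1\eps\sqrt{\log n\,\log(\nicefrac n\delta)}))$-approximate densest subgraph. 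The load vector is computed by an LP/convex solver in $\poly(n)$ time, and the remaining bookkeeping fits in $O(n+m)$ space.

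The main obstacle is exactly the $\ell_2$-sensitivity bound for the canonical optimal load vector: a priori the loads are as large as $\rho^\star(G)\le n$ and could shift globally when a single edge changes, so a naive Hausdorff-distance argument only gives a $\poly(n)$ bound. Pinning the shift to an absolute constant requires genuinely exploiting structure --- e.g.\ the strong convexity and uniqueness of the minimum-$\ell_2$-norm orientation together with a flow/augmenting-path argument showing that an optimal orientation of $G+uv$ is reachable from one of $G$ by a bounded amount of re-routing, or alternatively clipping loads at a data-independent threshold before releasing --- and getting this constant right is precisely what makes the additive error scale as $\sqrt{\log n}$ rather than polynomially in $n$. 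A secondary subtlety, handled by the case split and averaging identities mentioned above, is ensuring the fractional-to-integral rounding loses only additively at the scale of the injected noise.
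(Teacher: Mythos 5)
This statement is an \emph{imported} result: the paper offers no proof of it, only the citation to Theorem~3.6 of \cite{dinitz2024tight} (it appears in the appendix precisely so that \Cref{assum:static-dp-dsg-alg} can be instantiated). So your proposal has to stand on its own as a proof, and as written it does not: the decisive step --- that the canonical minimum-$\ell_2$-norm load vector $b^\star(G)$ has $O(1)$ $\ell_2$-sensitivity under edge-neighboring graphs --- is exactly the point you yourself label ``the main obstacle'' and then leave unproven. Every quantitative claim downstream (the Gaussian calibration $\sigma = O(\frac1\eps\sqrt{\log(\nicefrac1\delta)})$, the $\ell_\infty$ bound on the noise, and hence the final additive error $O(\frac1\eps\sqrt{\log n\,\log(\nicefrac n\delta)})$) hinges on that lemma. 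The heuristic you invoke is not sufficient: reassigning the new edge wholly to one endpoint shows the feasible load polytopes of $G$ and $G+uv$ are within distance $1$ of each other, but minimum-norm points of two convex sets at Hausdorff distance $1$ can in general be far apart, so one genuinely needs the structural argument (an augmenting-path/re-routing analysis, or an explicit description of $b^\star$ via the density decomposition) that you only gesture at. The fallback you mention, clipping loads at a data-independent threshold, does not rescue this: the loads can be as large as $\rho^\star(G) = \Theta(n)$ and the right scale is data-dependent, so no fixed clipping level simultaneously preserves utility and caps sensitivity.

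A secondary, smaller gap is the rounding claim that any vector within $\ell_\infty$-distance $\xi$ of $b^\star$ has a level set of density at least $\rho^\star(G) - O(\xi)$; this is believable via the sweep/averaging argument you sketch, but the case split ensuring the perturbation costs only an additive $O(\xi)$ (rather than a multiplicative distortion) needs to be written out, since the theorem being proved has multiplicative factor exactly $1$. The report-noisy-max selection over the $n+1$ prefixes is fine as described. In short: the architecture (noisy load vector plus level-set rounding) is a plausible route to the cited bound, and may well resemble what \cite{dinitz2024tight} actually does, but the proposal as it stands is a plan with its central lemma missing, not a proof; to use this result in the present paper it suffices to cite \cite{dinitz2024tight}, and to prove it you would have to supply the sensitivity argument in full.
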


\section{The Density of the Densest Subgraph}\label{apx:densest value}
We now recall the following theorem about computing the exact density of a densest subgraph in the static setting
as stated in \cite{epasto2024power}.
\begin{theorem}\label{thm:densest value}
    Given an undirected graph $G=(V, E)$,
    there is an algorithm \textsc{ExactDensity} that computes the exact density of the densest subgraph.
    Moreover,
    the algorithm terminates in $\poly(n)$ time
    and uses $O(n+m)$ space.
\end{theorem}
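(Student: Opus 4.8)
The plan is to reduce exact computation of the maximum subgraph density $\rho(G) \coloneqq \max_{\varnothing \ne X \subseteq V} |E(X)|/|X|$ (where $E(X)$ denotes the set of edges with both endpoints in $X$) to a small number of minimum $s$--$t$ cut computations, following the classical max-flow approach of Goldberg. For a fixed rational guess $g \ge 0$, I would build the Goldberg network on node set $\{s,t\} \cup V$ with an arc $s \to v$ of capacity $m$ and an arc $v \to t$ of capacity $m + 2g - \deg_G(v)$ for every $v \in V$, together with arcs $u \to v$ and $v \to u$ of capacity $1$ for every edge $\{u,v\} \in E$. Expanding the value of the cut induced by a source side $S \subseteq V$ and using $\sum_{v \in S}\deg_G(v) = 2|E(S)| + |\partial S|$ yields the identity $\mathrm{cut}(S) = mn - 2\bigl(|E(S)| - g|S|\bigr)$. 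Hence the minimum cut equals $mn$ (attained at $S = \varnothing$) precisely when no vertex set has density strictly larger than $g$, and otherwise the source side of any minimum cut is a nonempty set of density $> g$. A single max-flow computation therefore decides the predicate ``$\rho(G) > g$'' and, when it is true, returns a witness $X$.

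Next I would binary search on $g$ over $[0, n)$. Every achievable density is a rational $a/b$ with $1 \le b \le n$, so two distinct ones differ by at least $1/n^2$; after $O(\log n)$ iterations the search isolates $\rho(G)$ to an interval of length below $1/n^2$, and evaluating the predicate at a $g'$ just below $\rho(G)$ produces a witness $X$ with $|E(X)|/|X| > g'$, which must in fact equal $\rho(G)$ since no achievable density lies strictly between $g'$ and $\rho(G)$; returning $|E(X)|/|X|$ then gives the exact value. (Alternatively one can invoke Gallo--Grigoriadis--Tarjan parametric max-flow to handle the whole parametric family within the cost of a single max-flow, sidestepping the precision argument; or one can solve Charikar's LP $\max \sum_e y_e$ subject to $y_e \le x_u$, $y_e \le x_v$ for each $e = \{u,v\}$, $\sum_v x_v \le 1$, and $x,y \ge 0$, whose optimum equals $\rho(G)$ and which is a $\poly(n)$-size linear program.) Each max-flow call runs on a network with $O(n)$ nodes and $O(n+m)$ arcs, so it takes $\poly(n)$ time with any standard algorithm; combined with $O(\log n)$ iterations this remains $\poly(n)$.

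For the space bound, the Goldberg network, the flow values on its arcs, the residual-graph search structures, and the binary-search state all fit in $O(n+m)$ words, and the search reuses the same memory across iterations, so the total working space is $O(n+m)$ as claimed. The only genuine work is bookkeeping: verifying the cut-value identity and its boundary cases, checking that the capacities $m + 2g - \deg_G(v)$ are nonnegative (immediate, since $\deg_G(v) \le m$), and confirming that a $1/n^2$ precision in the binary search pins down the exact rational density. I expect this routine verification to be the only mild obstacle; the result itself is standard, and we record a proof sketch merely for completeness, matching the formulation used in \cite{epasto2024power}.
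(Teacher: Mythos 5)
Your proposal is correct: the Goldberg-style min-cut reduction with the cut identity $\mathrm{cut}(S) = mn - 2(|E(S)| - g|S|)$, combined with binary search to precision $1/n^2$ (or parametric max-flow, or Charikar's LP), is the standard proof of this classical fact, and the time and space accounting is sound. Note that the paper itself gives no proof here — it simply recalls the statement from \cite{epasto2024power} — so your argument matches the standard one underlying that citation rather than diverging from anything in the paper.
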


\end{document}